\RequirePackage{xcolor}
\documentclass[journal,twoside,web]{IEEEcolor}
\usepackage{generic}
\usepackage{cite}
\usepackage{booktabs}
\usepackage{multirow}
\let\labelindent\relax

\usepackage[utf8]{inputenc}
\usepackage{graphics} % for pdf, bitmapped graphics files
\usepackage{amsmath} % assumes amsmath package installed
\usepackage{amssymb}  % assumes amsmath package installed
\usepackage{amsfonts}
\usepackage{comment}
\usepackage{enumitem}
\usepackage{cases,xspace,enumitem}

\usepackage{version}
\includeversion{arxiv}
\excludeversion{tac}

\usepackage{tikz}
\usetikzlibrary{positioning}

\usepackage[prependcaption,colorinlistoftodos]{todonotes}

\definecolor{gnred}{RGB}{255,91,89}

\definecolor{gnred1}{RGB}{71,0,0} % 470000
\definecolor{gnred2}{RGB}{117,0,0} % 750000
\definecolor{gnred3}{RGB}{164,0,0} % a40000
\definecolor{gnred4}{RGB}{211,0,0} % d30000
\definecolor{gnred5}{RGB}{255,0,0} % FF0000
\definecolor{gnred6}{RGB}{255,42,34} % FF2a22
\definecolor{gnred7}{RGB}{255,91,89} % ff5b59 --- favorite

\definecolor{gnblue1}{RGB}{0,36,71}   % 002447
\definecolor{gnblue2}{RGB}{0,60,118}  % 003c76
\definecolor{gnblue3}{RGB}{0,85,164}
\definecolor{gnblue4}{RGB}{0,108,212}
\definecolor{gnblue4}{RGB}{0,108,212}
\definecolor{gnblue5}{RGB}{0,133,255}  % 0085ff
\definecolor{gnblue6}{RGB}{35,156,255} % 239cff
\definecolor{gnblue7}{RGB}{88,177,255} % 58b1ff

\definecolor{gnbrown1}{RGB}{71,27,0}  % 471b00
\definecolor{gnbrown2}{RGB}{117,45,0} % 752d00
\definecolor{gnbrown3}{RGB}{164,62,0} % a43e00
\definecolor{gnbrown4}{RGB}{211,80,0} % d35000
\definecolor{gnbrown5}{RGB}{255,97,0} % ff6100
\definecolor{gnbrown6}{RGB}{255,127,26} % ff7f1a
\definecolor{gnbrown7}{RGB}{255,155,86} % ff9b56

\renewcommand{\theenumi}{(\roman{enumi})}

\newcommand\Item[1][]{%
  \ifx\relax#1\relax  \item \else \item[#1] \fi
  \abovedisplayskip=0pt\abovedisplayshortskip=0pt~\vspace*{-\baselineskip}}

\usepackage{hyperref}
\hypersetup{
 colorlinks=true,
 linkcolor=blue,
 filecolor=magenta,      
 urlcolor=cyan,
 pdftitle={Euclidean_Contractivity_Lure},
 pdfpagemode=FullScreen,
 }

\newcommand{\e}{\operatorname{e}}

\usepackage{amsthm}

\newtheoremstyle{ieeeconf}
 {0pt}   % ABOVESPACE
 {0pt}   % BELOWSPACE
 {\normalfont}  % BODYFONT
 {\parindent}       % INDENT (empty value is the same as 0pt)
 {\itshape} % HEADFONT
 {:}         % HEADPUNCT
 { } % HEADSPACE
 {\thmname{#1} \thmnumber{#2}\thmnote{ (#3)}} % CUSTOM-HEAD-SPEC
\makeatletter
\renewenvironment{proof}[1][\proofname]{\par
  \pushQED{\qed}%
  \normalfont \topsep\z@
  \trivlist
  \item[\hskip2em
        \itshape
 #1\@addpunct{:}]\ignorespaces
}{%
  \popQED\endtrivlist\@endpefalse
}
\makeatother

\theoremstyle{ieeeconf}

\newtheorem{defn}{Definition}
\newtheorem{thm}{Theorem}
\newtheorem{lemma}[thm]{Lemma}
\newtheorem{pro}[thm]{Proposition}
\newtheorem{cor}[thm]{Corollary}
\newtheorem{remark}[thm]{Remark}

\newcommand{\R}{\mathbb{R}}
\newcommand{\real}{\mathbb{R}}

\newcommand{\realnonnegative}{\mathbb{R}_{\geq 0}}

\DeclareMathOperator{\Ker}{\operatorname{Ker}}
\DeclareMathOperator{\Img}{\operatorname{Im}}

\newcommand{\norm}[2]{\|#1\|_{#2}}

\newcommand{\kron}{\operatorname{\otimes}}
\newcommand{\diag}[1]{\operatorname{diag}(#1)}

\newcommand{\subscr}[2]{#1_{\textup{#2}}}

\newcommand{\map}[3]{#1 \colon #2 \rightarrow #3}

\newcommand{\Lip}{\operatorname{\mathsf{Lip}}}

\newcommand{\Faug}{\subscr{F}{aug}}
\newcommand{\vectorized}[1]{\mathbf{vec}(#1)}

\newcommand{\mcM}{\mathcal{M}}
\newcommand{\mcN}{\mathcal{N}}
\newcommand{\mcO}{\mathcal{O}}

\newcommand{\mcR}{\mathcal{R}}
\newcommand{\mcT}{\mathcal{T}}
\newcommand{\mcU}{\mathcal{U}}
\newcommand{\mcW}{\mathcal{W}}
\newcommand{\mcX}{\mathcal{X}}
\newcommand{\mcY}{\mathcal{Y}}

\newcommand{\xstar}{x^{\star}}
\newcommand{\ustar}{u^{\star}}

\newcommand{\uext}{u_{\text{ext}}}

\newcommand{\0}{\mbox{\fontencoding{U}\fontfamily{bbold}\selectfont0}}

\newcommand{\fzero}{\0}

\title{Neural networks for control and learning: \\Contractivity and separation principle
}

\title{Contractivity and a Separation Principle in Neural Networks for Control and Learning}

\title{A Nonlinear Separation Principle: \\
Contraction theory, Neural networks and Learning}

\title{A Nonlinear Separation Principle via Contraction Theory: Applications to Neural Networks, Control, and Learning}

\author{Anand Gokhale, Anton V. Proskurnikov, Yu Kawano, Francesco Bullo \thanks{ This work was in part supported by AFOSR project FA9550-22-1-0059 and by JST FOREST Program Grant Number JPMJFR222E. Anand Gokhale and Francesco Bullo are with
    the Center for Control, Dynamical Systems, and Computation, UC Santa Barbara, Santa Barbara, CA 93106 USA. email:  {\tt\small \{anand\_gokhale,bullo\}@ucsb.edu}. \newline
    Anton V. Proskurnikov is with Department of Electronics and Telecommunications, Politecnico di Torino, Italy, 10129.  email: {\tt\small anton.p.1982@ieee.org} \newline
    Yu Kawano is with the Graduate School of Advanced Science and Engineering, Hiroshima University, Higashi-Hiroshima 739-8527, Japan. email: {\tt\small ykawano@hiroshima-u.ac.jp}  \newline
    Anand Gokhale thanks Alexander Davydov for fruitful discussions.
    }}

\begin{document}

\maketitle

\thispagestyle{empty}
\pagestyle{empty}

\begin{abstract}
 This paper establishes a nonlinear separation principle based on contraction theory and derives sharp stability conditions for recurrent neural networks (RNNs). First, we introduce a nonlinear separation principle that guarantees global exponential stability for the interconnection of a contracting state-feedback controller and a contracting observer, alongside parametric extensions for robustness and equilibrium tracking. Second, we derive sharp linear matrix inequality (LMI) conditions that guarantee the contractivity of both firing rate and Hopfield neural network architectures. We establish structural relationships among these certificates—demonstrating that continuous-time models with monotone non-decreasing activations maximize the admissible weight space—and extend these stability guarantees to interconnected systems and Graph RNNs. Third, we combine our separation principle and LMI framework to solve the output reference tracking problem for RNN-modeled plants. We provide LMI synthesis methods for feedback controllers and observers, and rigorously design a low-gain integral controller to eliminate steady-state error. Finally, we derive an exact, unconstrained algebraic parameterization of our contraction LMIs to design highly expressive implicit neural networks, achieving competitive accuracy and parameter efficiency on standard image classification benchmarks.
\end{abstract}
\section{Introduction}

\paragraph{Motivation}
A central problem in control theory is the design of controllers which utilize sensor data to achieve certain objectives, such as stabilization, reference tracking, or optimal control based on some cost function. In linear systems, a separation principle enables the independent design of observers for state reconstruction, and full state feedback controllers. However, in nonlinear systems, a general counterpart remains elusive. Existing approaches typically rely on time-scale separation and high-gain observer constructions~\cite{ANA-HKK:02, AT-LP:95}, or are restricted to specific system classes such as Lur’e systems and control-affine dynamics~\cite{AS-RJ-AR-LF:08, IRM-JJES:14, MG-VA-ST-DA:23}. These methods often impose restrictive structural assumptions and may exhibit undesirable transient behavior, including peaking and sensitivity to noise and model uncertainty~\cite{FE-HKK:92}.

Here, we bridge this gap by establishing a nonlinear separation principle based on contraction theory~\cite{FB:26-CTDS}. A nonlinear system is said to be contracting if any two trajectories of the system approach each other at an exponential rate. While stronger than Lyapunov stability, contraction provides many properties that are desirable in control design.  Of particular interest in this work are (i) the existence of a unique globally exponentially stable equilibrium, (ii) robustness via incremental ISS bounds to model errors, and (iii) robust equilibrium tracking errors in time varying environments~\cite{AD-VC-AG-GR-FB:23f}.

In parallel, neural networks have become common means of modeling plants  in a data-driven form. Recurrent Neural Networks (RNNs), in particular, have garnered significant interdisciplinary interest. They maintain low computational and memory requirements, making them well-suited for deployment in resource-constrained edge applications and real-time data-driven control~\cite{WDA-ALB-MF:24, WDA-ALB-MF:25}. Continuous-time versions of RNNs naturally mirror biological neural circuits~\cite{CJR-DHJ-RGB-BAO:08} and enable lower power consumption in analog circuit design~\cite{DWT-JJH:84}. Beyond traditional sequence processing, RNNs are also central to implicit deep learning: in architectures such as Deep Equilibrium Models (DEQs)~\cite{EW-JZK:20}, a cascade of layers is replaced by the steady-state equilibrium of an RNN.

Motivated by the requirements of our separation principle, we seek to identify sharp, computationally tractable contraction conditions for RNNs. Such conditions not only expand the applicability of our separation principle, but are also of fundamental importance across the aforementioned domains to guarantee reliable, robust implementations. Recent works~\cite{LEG-FG-BT-AA-AYT:21, WDA-ALB-MF:24} derive stability conditions for globally non-expansive activation functions. However, widely used activation functions—such as ReLU, tanh, and sigmoid—are also monotone nondecreasing. Neglecting this additional structure yields overly conservative guarantees. Accordingly, we seek to identify the broadest class of RNNs that are guaranteed to contract under these practical nonlinearities.

\paragraph{Relevant Literature}
Existing approaches to nonlinear separation principles have relied heavily on time-scale-separation arguments and high-gain observers~\cite{ANA-HKK:02,AT-LP:95}. These methods require the observer dynamics to be sufficiently faster than the plant dynamics, 
which may lead to undesirable transient peaks in the closed-loop trajectories and even finite escape times~\cite{FE-HKK:92}. Under various structural assumptions on the system, separation principles have been provided for Lur'e systems~\cite{AS-RJ-AR-LF:08,MG-VA-ST-DA:23}, control-affine systems with linear inputs~\cite{IRM-JJES:14}, and SSMs~\cite{MZ-VG-AK-ECB-GFT:26}. Our formulation is most directly comparable to the foundational work of~\cite{MV:80} which establishes exponential stability via a Lyapunov-based analysis. By adopting a contraction-theoretic framework, we offer an alternative approach that yields three distinct practical advantages. First, we establish global exponential stability with explicit convergence rates in terms of state and observer errors. Second, the incremental input-to-state stability (iISS) inherent to contracting systems provides native robustness against plant model errors. Finally, contraction yields robust equilibrium tracking bounds in time-varying environments~\cite{AD-VC-AG-GR-FB:23f}; this not only addresses exogenous disturbances but also enables seamless augmentation of the controller with external signals to achieve higher-level goals like reference tracking.

Contraction of continuous-time neural networks in the $\ell_1$ and $\ell_\infty$ norms is presented in~\cite{SJ-AD-AVP-FB:21f}, and the sharpest known conditions in the Euclidean norm are limited to the symmetric case~\cite{VC-AG-AD-GR-FB:23c}. Discrete-time robust RNNs are parameterized in~\cite{MR-RW-IRM:21}.

Control design for discrete time RNNs has been previously studied in the context of asymptotic stability~\cite{AN-MSH-RDB:22} and incremental ISS properties~\cite{WDA-ALB-MF:24}, without closed loop guarantees. Structurally, RNNs can be considered a subclass of Lur'e-type systems, i.e., feedback interconnections of LTI blocks and static nonlinear maps. 
\textcolor{black}{A distinguishing feature of RNNs is that their external inputs and outputs may enter and exit the system through a nonlinearity, unlike the affine inputs and linear outputs usually assumed in standard Lur'e-type models~\cite{VA-ST:19, MG-VA-ST-DA:23}. To control RNNs in the general case (in particular, to design a reference tracking controller) we thus employ}
tools from singular perturbation theory~\cite{JWSP:21}. 

Finally, in implicit deep learning, recent findings show that increasing the number of fixed-point iterations can significantly improve model expressivity~\cite{JL-LD-SO-WY:25}, an effect enabled by the model's local Lipschitz dependence on its input. 
\textcolor{black}{Motivated by this insight, we} 
design input-dependent DEQs that are locally Lipschitz by construction.

\paragraph{Contributions}

\textcolor{black}{This paper establishes a nonlinear separation principle based on contraction and derives sharp contractivity conditions for continuous-time and discrete-time firing-rate neural networks (FRNNs) and Hopfield neural networks (HNNs). We apply these results to controller design for RNN-modeled plants and to implicit-model architectures in machine learning.} Our specific contributions are as follows.

First, we introduce a \emph{nonlinear separation principle} (Theorem~\ref{thm:separation_principle}). This principle states that if a plant can be rendered contracting under full-state feedback and observed through a contracting observer, then the resulting output-feedback closed loop is globally exponentially stable, with explicit estimates of both the observer and state errors. Crucially, this guarantee holds even when the coupled closed-loop system is not itself globally contracting. Building on this framework, we extend the separation principle to parametric systems to establish rigorous robustness guarantees. We derive explicit error bounds for two practical scenarios: robustness against model mismatch when the controller and observer are designed using an estimated parameter, and equilibrium tracking performance when the system is subjected to time-varying parameters.

Second, we derive certificates that guarantee the contraction of FRNNs and HNNs across standard classes of activation functions, summarized in Table~\ref{tab:lmi_conditions}, demonstrate their tightness and establish structural relationships among these conditions, summarized by Theorem~\ref{thm:structural_relationships}. Our analysis shows that the set of weight matrices guaranteeing contraction expands when passing from discrete-time to continuous-time models, and enlarges further when the activations are monotone non-decreasing rather than merely non-expansive, as is often assumed~\cite{LEG-FG-BT-AA-AYT:21}. We also show that our certificates are optimal in the case of symmetric matrices. We establish a necessary condition for our certificate to hold for interconnected RNN systems (Theorem~\ref{thm:network_impossibility}): 
such an interconnection may satisfy our contractivity certificate only if each subsystem is either individually satisfies our certificate or can be made to do so via static output feedback. Given that solving for static output feedback gain is an inherently nonconvex problem, this finding further highlights the utility of our separation principle. We also provide sufficient conditions for the global contractivity of Graph RNNs (Theorem~\ref{thm:ignn_contraction}), demonstrating that identical contracting subsystems preserve stability when coupled over undirected graphs, thereby enabling scalable, decentralized computation for graph neural networks.

Third, we combine the proposed separation principle with our contraction certificates to address the reference-tracking problem for plants modeled by FRNNs. We provide LMI methods to synthesize full-state feedback controllers and state observers that satisfy the assumptions of our separation principle and also characterize necessary and sufficient conditions for the feasibility of these LMIs. Beyond exponential stability, we derive a synthesis procedure for a low-gain integral controller. Treating this integral controller as a time-varying parameter, we use the equilibrium tracking corollary of our separation principle to show the resulting closed-loop system achieves reference tracking with global exponential stability.

Finally, we translate our sharp contractivity conditions into practical tools for machine learning. We derive an exact, unconstrained algebraic parameterization of the most expressive set of synaptic matrices characterized by our contraction LMIs (Theorem~\ref{thm:parameterization}). Motivated by~\cite{JL-LD-SO-WY:25}, we leverage our parameterization to design highly expressive, yet parameter efficient implicit neural networks with input-dependent weights. We demonstrate that the resulting architecture demonstrates competitive accuracy on the MNIST and CIFAR-10 benchmarks.%while using fewer parameters.

Compared to the early version of this work~\cite{AG-AVP-YK-FB:26a}, this extended journal version contains several novel contributions. First, we present a nonlinear separation principle within the framework of contraction theory. Second, we provide an extensive analysis of interconnections of RNNs with specialized results for Graph RNNs. Third, 
the results on reference tracking in~\cite{AG-AVP-YK-FB:26a} are substantially generalized; in particular, our analysis accommodates open-loop unstable plants, which are excluded by the conditions in~\cite{AG-AVP-YK-FB:26a}. Finally, we provide a formal analysis with an explicit local Lipschitz bound for our parameter efficient implicit neural networks.

The rest of this paper is organized as follows: We introduce some preliminaries in Section~\ref{sec:background}. We present our first main result, a nonlinear separation principle in Section~\ref{sec:separation_principle}. Next, we provide an analysis of the conditions for contractivity of FRNNs and HNNs including results on interconnections of RNNs in Section~\ref{sec:analysis_frnn_hnn}. The applications of our theoretical efforts in control design are discussed in Section~\ref{sec:control_design} and in machine learning are discussed in Section~\ref{sec:machine_learning}.

\section{Background}\label{sec:background}

\paragraph*{Notation}
We let $\0_{n \times m}$ be the $n \times m$ all zero matrix, $I_n$ be the $n \times n$ identity matrix. For symmetric $A$, we write $A \succ 0$ (respectively, $A \succeq 0$) if $A$ is positive definite (respectively, semidefinite); $A \succ B$ means $A - B \succ 0$. The opposite relations $\prec, \preceq$ are defined analogously.
Given an $n \times n$ matrix $P \succ 0$, we let $\norm{\cdot}{P}$ denote the $P$-weighted Euclidean norm on $\R^n$, defined by $\norm{x}{P} := \sqrt{x^\top P x}$. Given a norm $\norm{\cdot}{\mcX}$ on $\R^n$, and $\norm{\cdot}{\mcY}$ on $\R^m$, and a matrix $A \in \R^{m \times n}$, we denote the induced matrix norm $\norm{A}{\mcX \to \mcY} = \sup_{\norm{x}{\mcX} = 1} \frac{\norm{Ax}{\mcY}}{\norm{x}{\mcX}}$. Given two normed spaces $(\mcX, \norm{\cdot}{\mcX})$ and  $(\mcY, \norm{\cdot}{\mcY})$, a map $\map{F}{\mcX}{\mcY}$ is Lipschitz with constant $\rho \geq 0$, if for all $x, \tilde x \in \mcX$, $\norm{F(x) - F(\tilde x)}{\mcY} \leq \rho \norm{x - \tilde x}{\mcX}$.
\textcolor{black}{We denote by $\Lip(F)$ the (smallest) Lipschitz constant of $F$. If a map depends on several variables, e.g., $x$ and $u$, we write $\Lip_x(F)$ for the Lipschitz constant with respect to the corresponding variable.}
We let $\diag{A_1, A_2, \cdots, A_n}$ denote the block-diagonal matrix, 
where each $A_i$ is either a matrix or a scalar. For each matrix $A \in \R^{m \times n}$, let $A_{\perp}$ denote any full-column-rank matrix satisfying $\Img A_{\perp} = \Ker(A)$, that is, a matrix whose columns form a basis of the kernel of $A$.
\textcolor{black}{Given a matrix $X\in\R^{n\times m}$, we use $\vectorized{X}\in\R^{nm}$ for its column-wise vectorization.} We denote the upper right Dini derivative of the function $f$ $D^+$
\subsection{Contraction theory}\label{subsect:contraction}

Unless otherwise stated, all vector fields $F(t,x)\in\R^n$, where $t\in\realnonnegative$ and $x\in\R^n$, are supposed to be continuous in $t$ and locally Lipschitz in $x$. 
We say that $F$ is strongly infinitesimally contracting with respect to $\norm{\cdot}{P}$  with  rate $c > 0$ if for all $x, \tilde x\in \R^n$ and $t \geq 0$ the inequality holds
\begin{align*}
    (F(t, x) - F(t, \tilde x))^\top P (x - \tilde x) &\leq -c\norm{x - \tilde x}{P}^2.
\end{align*}
The associated dynamical system $\dot{x} = F(t,x)$ is then said to be contracting in the same sense.
If $x(t)$ and $\tilde x(t)$ are two trajectories of this system, then $\norm{x(t) - \tilde x(t)}{P} \leq \e^{-c(t- t_0)}\norm{x(t_0) - \tilde x(t_0)}{P}$, for all $t \geq t_0 \geq 0$. A discrete-time system $x^+ = F(t,x)$ is strongly contracting with constant $\rho$ if $\Lip_x(F) \leq \rho < 1$. We refer to~\cite{FB:26-CTDS} for a recent review of contraction theory. A strongly contracting time-invariant system with $F(t,x)=F(x)$ always admits a unique and globally exponentially stable equilibrium.

\subsection{Incremental multipliers}

For our analysis, we utilize incremental matrix multipliers constraints~\cite{MF-AR-HH-MM-GJP:19} paired with the S-lemma.

\begin{defn}[Incremental multiplier matrix]

    \textcolor{black}{Consider a function $\map{\Psi}{\R^m}{\R^m}$, and a symmetric matrix $M \in \R^{(2m)\times (2m)}$. The function $\Psi$ is said to admit the incremental multiplier matrix $M$, if, for any $x,\tilde x\in \R^m$}, 
    \begin{align}
    \begin{bmatrix}
        x - \tilde x \\
        \Psi(x) - \Psi(\tilde x)
    \end{bmatrix}^\top
    M
    \begin{bmatrix}
        x - \tilde x \\
        \Psi(x) - \Psi(\tilde x)
    \end{bmatrix}
    \ge 0. \label{eq:imm-d}
\end{align}
\end{defn}

Henceforth, we primarily consider \emph{diagonal} nonlinearities $\Psi:\R^m\to\R^m$, where the coordinate $\Psi_{[i]}$ depends solely on $x_{[i]}$. We call such a map slope-restricted in $[k_1, k_2]$, where $-\infty < k_1 \leq k_2 < +\infty$, if all coordinate functions satisfy 
\begin{align*}
k_1\leq\frac{\Psi_{[i]}(x_{[i]})-\Psi_{[i]}(\tilde x_{[i]})}{x_{[i]}-\tilde x_{[i]}}\leq k_2\quad\forall x_{[i]},\tilde x_{[i]}\in\R,\; x_{[i]}\ne\tilde x_{[i]}.
\end{align*}
In the context of neural networks, we define two classes of diagonal nonlinearities based on slope restrictions.

\begin{defn}[CONE and MONE nonlinearities]\label{def:cone_mone}
    A diagonal nonlinearity $\map{\Psi}{\R^n}{\R^n}$ is said to be:
    \begin{enumerate}
        \item {component-wise non-expansive (CONE)} if it is slope-restricted in $[-1, 1]$, and
        \item {monotonically non-decreasing and non-expansive (MONE)} if it is slope-restricted in $[0, 1]$.
    \end{enumerate}
\end{defn}
%The following lemma is known.
\begin{lemma}[Incremental multiplier matrices, see{~\cite[E3.25]{FB:26-CTDS}}] \label{lem:IMM_conditions}
    If a diagonal nonlinearity $\map{\Psi}{\R^m}{\R^m}$ is slope-restricted in $[k_1, k_2]$, then for each diagonal positive definite $Q \in \R^{m \times m}$, $\Psi$ admits the incremental multiplier matrix:
    \begin{align*}
        M = \begin{bmatrix}
            -2k_1k_2 Q & (k_1 + k_2) Q \\ 
            (k_1 + k_2) Q  & -2Q
        \end{bmatrix}.
    \end{align*}
    Consequently, the multiplier matrices
    \begin{equation*}
        M_{\textup{CONE}} = \begin{bmatrix}
            Q & \fzero_{n \times n} \\
            \fzero_{n \times n} & -Q
        \end{bmatrix} \enspace\text{and}\enspace
        M_{\textup{MONE}} = \begin{bmatrix}
            \fzero_{n \times n} & Q \\
            Q & -2Q
        \end{bmatrix}
    \end{equation*}
    are admitted by CONE and MONE nonlinearities respectively.
\end{lemma}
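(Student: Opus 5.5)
Because both the nonlinearity and $Q$ are diagonal, I will first reduce the statement to a scalar inequality per coordinate. Fix $x,\tilde x\in\R^m$ and set $\delta_i := x_{[i]}-\tilde x_{[i]}$ and $\eta_i := \Psi_{[i]}(x_{[i]})-\Psi_{[i]}(\tilde x_{[i]})$. Write $Q=\diag{q_1,\dots,q_m}$ with $q_i>0$. Then the quadratic form in \eqref{eq:imm-d} with the stated $M$ splits as
\begin{align*}
\sum_{i=1}^m q_i\bigl(-2k_1k_2\,\delta_i^2 + 2(k_1+k_2)\,\delta_i\eta_i - 2\eta_i^2\bigr)
= -2\sum_{i=1}^m q_i\,(\eta_i - k_1\delta_i)(\eta_i - k_2\delta_i),
\end{align*}
where the factorization follows by expanding the product. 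Since each $q_i>0$, it is enough to show $(\eta_i-k_1\delta_i)(\eta_i-k_2\delta_i)\le 0$ for every $i$.

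This scalar inequality I will get from the slope restriction. If $\delta_i=0$, then $\eta_i=0$ because $\Psi_{[i]}$ is a function, so the product vanishes. If $\delta_i\neq 0$, divide by $\delta_i^2>0$ and let $s=\eta_i/\delta_i\in[k_1,k_2]$. The product becomes $\delta_i^2(s-k_1)(s-k_2)$. Here $s-k_1\ge 0$ and $s-k_2\le 0$, so the product is nonpositive. Summing over $i$ with positive weights shows the full form is $\ge 0$, which proves the general claim.

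The two special multipliers then come from substituting specific slope bounds. For CONE, take $k_1=-1$ and $k_2=1$. This gives $M=\begin{bmatrix}2Q&0\\0&-2Q\end{bmatrix}$, and $Q\mapsto Q/2$ is again an arbitrary diagonal positive definite matrix, so the factor $2$ is harmless. For MONE, take $k_1=0$ and $k_2=1$. This gives $\begin{bmatrix}0&Q\\Q&-2Q\end{bmatrix}$ directly.

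I expect no real obstacle; the only step needing care is the factorization identity, which is routine algebra to verify.
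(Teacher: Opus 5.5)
Your proof is correct: the identity $-2k_1k_2\delta_i^2+2(k_1+k_2)\delta_i\eta_i-2\eta_i^2=-2(\eta_i-k_1\delta_i)(\eta_i-k_2\delta_i)$ checks out, and the case split on $\delta_i$ applies the slope restriction correctly. The rescaling $Q\mapsto Q/2$ legitimately produces $M_{\textup{CONE}}$, and the paper gives no proof of its own (it defers to the cited exercise), so your coordinate-wise reduction to a positively weighted sum of scalar sector inequalities is the standard argument behind that reference.
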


\subsection{Contractivity of Lur'e-type systems} \label{sec:lure}
We consider both the continuous and discrete time Lur'e-type systems. These are presented below, 
\begin{align}
    \dot x &= Ax + B\Psi(Hx), \quad \textrm{and } \label{eq:cts_lure} \\
    x^+ &= Ax + B\Psi(Hx) \label{eq:disc_lure}
\end{align}
Here $x \in \R^n$, $A \in \R^{n \times n}$, $B \in \R^{n \times m}$, $H \in \R^{m \times n}$ and $\map{\Psi}{\R^m}{\R^m}$ is a nonlinearity. We start with a criteria for absolute contractivity of systems~\eqref{eq:cts_lure} and~\eqref{eq:disc_lure}. The term ``absolute'' emphasizes that the property of contractivity is guaranteed uniformly over a class of nonlinearities, which, throughout this paper, is characterized by a given incremental multiplier $M$. Note that if $\Psi$ admits the incremental multiplier $M$, then $\Psi(Hx)$ admits the incremental multiplier
\begin{equation}\label{eq:imm-d-outp} 
M_H=\begin{bmatrix} H & \0_{m\times{m}} \\ \0_{m\times{n}} & I_m   \end{bmatrix}^{\top}M\begin{bmatrix}H & \0_{m\times{m}} \\ \0_{m\times{n}} & I_m   \end{bmatrix}.   
\end{equation}
{\color{black}
This criteria follows from the trivial direction of the S-lemma, using the matrix inequality characterizing the nonlinearity to imply the matrix inequality for contractivity.} The continuous-time part of the next lemma is presented in~\cite[Theorem 4.2]{LDA-MC:13}, we present an extension to discrete time. 

\begin{lemma}[Absolute contractivity of Lur'e systems] \label{lem:absolute_contractivity_lure}
\textcolor{black}{Let $\Psi$ admit an incremental multiplier $M \in \R^{2m \times 2m}$, and let $M_C$ be defined as in~\eqref{eq:imm-d-outp}. Then, the system~\eqref{eq:cts_lure} is strongly infinitesimally contracting with respect to $\norm{\cdot}{P}$ with rate $c > 0$, where $P=P^{\top}\succ 0$ is a $n\times n$ matrix, if}
    \begin{equation}\label{eq:cts_lure_lmi}
      \begin{bmatrix}
        PA+A^\top P + 2c P & PB \\    B^\top P & \0_{m\times{m}}
      \end{bmatrix} + M_H
      \preceq 0.
    \end{equation}    
The system~\eqref{eq:disc_lure} is strongly contracting with respect to $\norm{\cdot}{P}$ with a constant $\rho \in [0, 1)$ if  
    \begin{equation}\label{eq:disc_lure_lmi}
      \begin{bmatrix}
        A^\top P A - \rho^2 P &  A^\top P B  \\
        B^\top P A &  B^\top P B
      \end{bmatrix} + M_H
      \preceq 0.
    \end{equation}  
\end{lemma}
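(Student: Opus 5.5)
We use the S-lemma in its trivial direction: multiply the matrix inequality by the increment vector and add the nonnegative multiplier term. Fix $x,\tilde x\in\R^n$ and set $\delta x := x-\tilde x$ and $\delta\psi := \Psi(Hx)-\Psi(H\tilde x)$. The definition of $M_H$ in~\eqref{eq:imm-d-outp} gives
\begin{equation*}
\begin{bmatrix}\delta x\\ \delta\psi\end{bmatrix}^\top M_H \begin{bmatrix}\delta x\\ \delta\psi\end{bmatrix}
= \begin{bmatrix}H\delta x\\ \delta\psi\end{bmatrix}^\top M \begin{bmatrix}H\delta x\\ \delta\psi\end{bmatrix} \ge 0,
\end{equation*}
because $\Psi$ admits $M$ and $H\delta x = Hx - H\tilde x$. (The statement's $M_C$ is read as this $M_H$.)

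\emph{Continuous time.} Take the quadratic form of~\eqref{eq:cts_lure_lmi} at the vector $(\delta x,\delta\psi)$. This yields
\begin{equation*}
\delta x^\top(PA+A^\top P+2cP)\delta x + 2\delta x^\top PB\,\delta\psi + (\text{multiplier term}) \le 0.
\end{equation*}
Drop the nonnegative multiplier term. Since $F(x)-F(\tilde x) = A\delta x + B\delta\psi$, what remains is
\begin{equation*}
2(F(x)-F(\tilde x))^\top P\delta x + 2c\norm{\delta x}{P}^2 \le 0.
\end{equation*}
This is exactly the one-sided Lipschitz inequality that defines strong infinitesimal contractivity with rate $c$.

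\emph{Discrete time.} Now $F(x)-F(\tilde x) = A\delta x + B\delta\psi$ and
\begin{equation*}
\norm{A\delta x + B\delta\psi}{P}^2 = \begin{bmatrix}\delta x\\ \delta\psi\end{bmatrix}^\top \begin{bmatrix}A^\top PA & A^\top PB\\ B^\top PA & B^\top PB\end{bmatrix}\begin{bmatrix}\delta x\\ \delta\psi\end{bmatrix}.
\end{equation*}
Evaluate~\eqref{eq:disc_lure_lmi} at $(\delta x,\delta\psi)$ and again discard the nonnegative multiplier term. This gives $\norm{F(x)-F(\tilde x)}{P}^2 \le \rho^2\norm{\delta x}{P}^2$. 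Hence $\Lip_x(F)\le\rho<1$ in $\norm{\cdot}{P}$, which is the paper's definition of strong contractivity with constant $\rho$.

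There is no substantive obstacle. The only points needing care are the bookkeeping of the vector (we use $\delta\psi$ itself, not $H\delta\psi$, as the second component) and the sign convention that the multiplier term is $\ge 0$ and appears with a $+$ sign in an LMI that is $\preceq 0$.
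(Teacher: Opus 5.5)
Your proposal is correct and follows the paper's argument: evaluate the quadratic form of the matrix inequality at $(\Delta x,\Delta\Psi)$, use that the $M_H$-term is nonnegative, and read off the contraction inequality. The paper only writes this out for discrete time and cites \cite[Theorem 4.2]{LDA-MC:13} for continuous time, whereas you carry out the same short computation explicitly in both cases.
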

\begin{proof}
    The proof is presented in Appendix~\ref{app:lure_proof}.
\end{proof}

\section{A Nonlinear Separation Principle}\label{sec:separation_principle}

Certifying the stability of an observer-based feedback loop is a fundamental challenge in nonlinear control. While the classical separation principle elegantly resolves this for linear systems, no general analog exists in the nonlinear setting as
dynamic interconnections of nonlinear systems do not inherently preserve stability properties. Existing nonlinear separation results typically rely on time-scale-separation arguments~\cite{ANA-HKK:02,AT-LP:95} and high-gain observers whose
well-known drawbacks are peaking and, in some cases, finite escape times~\cite{FE-HKK:92}.

Departing from high-gain methods, we adopt a contraction-theoretic framework. We seek to characterize the closed loop stability of the interconnection of an open loop unstable plant with an observer and a state-feedback controller. Standard interconnection theorems in contraction (e.g.,~\cite[Theorem 3.23]{FB:26-CTDS}) require that each subsystem is contracting, precluding their application to open-loop unstable plants. \textcolor{black}{The node-wise contraction assumption is inherent to contraction criteria based on log-norm bounds of the network Jacobian, as implied by~\cite[Theorem 2.32]{FB:26-CTDS}. The following theorem shows, that when an open-loop unstable plant is rendered contracting by a controller and observed through a contracting observer, the closed-loop system -- although not contracting itself -- still preserves a key property of contracting systems: the global exponential stability of the equilibrium.
Denote for brevity
\begin{align}\label{eq:beta_def}
    \beta(t; c_K, c_O) = \begin{cases} \dfrac{e^{-c_O t} - e^{-c_K t}}{c_K - c_O}, & c_K \neq c_O, \\[1ex] te^{-c_K t}, & c_K = c_O, \end{cases}
\end{align}
}

\begin{thm}[Separation principle for contracting controllers and observers] \label{thm:separation_principle}
Given maps $\map{F}{\real^n \times \real^m}{\real^n}$ and $\map{h}{\real^n}{\real^p}$,
consider the control system $\dot{x} = F(x,u)$ with output $y = h(x)$. Let $\map{L}{\real^n \times \real^m \times \real^p}{\real^n}$ be an observer, and let $\map{K}{\real^n}{\real^m}$ be a controller. Given norms $\norm{\cdot}{\mcX}$ and $\norm{\cdot}{\mcO}$ on $\real^n$ and a $\norm{\cdot}{\mcU}$ on $\real^m$, assume the following conditions:
    \begin{enumerate}[label=\textup{(A\arabic*)},leftmargin=*]
        \item\label{aSP:PlantC} (Plant contraction by state feedback) The dynamics $\dot{x} = F(x, K(x))$ is strongly infinitesimally contracting with rate $c_K$ with respect to $\norm{\cdot}{\mcX}$, \textcolor{black}{whose equilibrium is $\xstar$}.
        \item\label{aSP:LipBounds} (Lipschitz controller and plant input) Let $K(x)$ be $\ell_K$-Lipschitz in $x$ from $\norm{\cdot}{\mcO}$ to $\norm{\cdot}{\mcU}$. Let $F(x,u)$ be $\ell_u$-Lipschitz in $u$, from $\norm{\cdot}{\mcU}$ to $\norm{\cdot}{\mcX}$, uniformly in $x$.        
        \item\label{aSP:ZeroObserver}(\textcolor{black}{Plant-observer matching}) %(Zero observer injection) 
        $L(x, u, h(x)) = F(x,u)$ for all $x$ and $u$. 
        \item\label{aSP:ObserverC} (Observer contraction) The dynamics $\dot{z} = L(z, u, y)$ is strongly infinitesimally contracting with respect to $\norm{\cdot}{\mcO}$ with rate $c_O$, uniformly in inputs $u$ and $y$.        
    \end{enumerate}
    Then the closed-loop system
    \begin{subequations}\label{eq:closed_loop_seperation}
        \begin{align}
            \dot{x} &= F(x, K(\xi)), \label{eq: controller_seperation}\\
            \dot{\xi} &= L(\xi, K(\xi), h(x)) \label{eq: observer_seperation}
        \end{align}
    \end{subequations}
    \textcolor{black}{has a globally exponentially stable equilibrium $(\xstar,\xstar)$, and its trajectories satisfy the following error bounds:}
    \begin{align}
      &\norm{\xi(t) - x(t)}{\mcO} \leq \norm{\xi(0) - x(0)}{\mcO} \e^{-c_O t}, \label{eq:obs-error}\\
      &\norm{x(t)-\xstar}{\mcX} \leq \norm{x(0)-\xstar}{\mcX} \e^{-c_K t}\notag \\
         &\quad + \ell_u\ell_K \norm{\xi(0)-x(0)}{\mcO}\,\beta(t;c_K,c_O).\label{eq:contr-error}
    \end{align}
    {\color{black} where $\beta(t;c_K,c_O)$ is defined in~\eqref{eq:beta_def}.}
\end{thm}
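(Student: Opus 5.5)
The plan is a cascade argument: first establish the observer error bound, then treat the plant under the estimated feedback as the contracting closed loop of \ref{aSP:PlantC} driven by a decaying perturbation.

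\emph{Step 1 (observer error).} By \ref{aSP:ZeroObserver}, the true state satisfies $\dot x = F(x,K(\xi)) = L(x, K(\xi), h(x))$. Thus $x(t)$ is itself a trajectory of the observer dynamics $\dot z = L(z,u(t),y(t))$ with the exogenous signals $u(t)=K(\xi(t))$ and $y(t)=h(x(t))$. The estimate $\xi(t)$ is another trajectory of the same time-varying system, driven by the same inputs. By \ref{aSP:ObserverC}, contraction holds uniformly in $(u,y)$, so the time-varying vector field $(t,z)\mapsto L(z,u(t),y(t))$ is strongly infinitesimally contracting with rate $c_O$. The standard trajectory-distance estimate from Section~\ref{subsect:contraction} then gives \eqref{eq:obs-error} directly. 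I will briefly remark that global existence of solutions follows from the bounds obtained, as is standard.

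\emph{Step 2 (perturbed plant).} I will write $\dot x = F(x,K(x)) + d(t)$, where $d(t) := F(x,K(\xi)) - F(x,K(x))$. By \ref{aSP:LipBounds},
\[
\norm{d(t)}{\mcX} \le \ell_u \norm{K(\xi)-K(x)}{\mcU} \le \ell_u\ell_K \norm{\xi(t)-x(t)}{\mcO} \le \ell_u\ell_K \norm{\xi(0)-x(0)}{\mcO}\e^{-c_O t}.
\]
Next I compare $x(t)$ with the constant trajectory $\xstar$ of the contracting field $G(x):=F(x,K(x))$. The key tool is the incremental ISS/Dini-derivative inequality for contracting systems. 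For general norms, the one-sided Lipschitz property with rate $c_K$ yields $D^+\norm{x(t)-\xstar}{\mcX} \le -c_K\norm{x(t)-\xstar}{\mcX} + \norm{d(t)}{\mcX}$. For the weighted Euclidean norm of Section~\ref{subsect:contraction}, this follows by differentiating $\norm{x-\xstar}{P}^2$, using $G(\xstar)=0$ and Cauchy--Schwarz in the $P$-inner product. For general norms, I would cite the weak pairing/log-norm version in~\cite{FB:26-CTDS}. I expect this step to be the only delicate point, namely justifying the Dini derivative bound for a non-Euclidean $\norm{\cdot}{\mcX}$. I would handle it via the standard one-sided Lipschitz characterization of contraction together with the subadditivity of the Dini derivative.

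\emph{Step 3 (comparison and integration).} By the comparison lemma,
\[
\norm{x(t)-\xstar}{\mcX} \le \e^{-c_K t}\norm{x(0)-\xstar}{\mcX} + \ell_u\ell_K\norm{\xi(0)-x(0)}{\mcO}\int_0^t \e^{-c_K(t-s)}\e^{-c_O s}\,ds.
\]
The integral evaluates exactly to $\beta(t;c_K,c_O)$ in \eqref{eq:beta_def}, covering both the case $c_K\ne c_O$ and the case $c_K=c_O$. This gives \eqref{eq:contr-error}. Finally, $(\xstar,\xstar)$ is an equilibrium: $F(\xstar,K(\xstar))=0$, and by \ref{aSP:ZeroObserver} $L(\xstar,K(\xstar),h(\xstar))=0$. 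Global exponential stability follows from the two bounds. The inequality $\beta(t)\le C\e^{-\min(c_K,c_O)t/2}$ handles the $te^{-ct}$ case. Combining this with the triangle inequality $\norm{\xi-\xstar}{}\le\norm{\xi-x}{}+\norm{x-\xstar}{}$ and the equivalence of norms on $\R^n$ gives a bound of the form $\norm{(x,\xi)(t)-(\xstar,\xstar)}{} \le C\e^{-\lambda t}\norm{(x,\xi)(0)-(\xstar,\xstar)}{}$.
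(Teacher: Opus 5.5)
Your proposal is correct and follows essentially the same route as the paper. You obtain \eqref{eq:obs-error} by viewing $x$ and $\xi$ as two trajectories of the contracting observer driven by the common signals $K(\xi(t))$ and $h(x(t))$. You then treat the plant as the contracting loop $F(x,K(x))$ perturbed by $F(x,K(\xi))-F(x,K(x))$, apply the iISS Dini-derivative bound, and integrate via the comparison lemma to obtain $\beta$; your explicit triangle-inequality estimate for $(x,\xi)$ and the global-existence remark make the final GES claim slightly more complete than the paper's appeal to norm equivalence.
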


\begin{proof}
    Combining~\ref{aSP:PlantC} and~\ref{aSP:ZeroObserver}, we find that $(\xstar,\xstar)$ is an equilibrium of the closed-loop system~\eqref{eq:closed_loop_seperation}, since $L(\xstar, K(\xstar), h(\xstar)) = F(\xstar, K(\xstar)) = \0_n$.
    \textcolor{black}{To establish~\eqref{eq:obs-error}, we observe that $z(t) = x(t)$ and $z(t) = \xi(t)$ are both trajectories of $\dot{z} = L(z, u, y)$ corresponding to the same input signals $u(t) = K(\xi(t))$ and $y(t) = h(x(t))$. For $\xi(t)$, this follows from~\eqref{eq: observer_seperation}; for $x(t)$, it is implied 
    by~\eqref{eq: controller_seperation} and Assumption~\ref{aSP:ZeroObserver}, because $\dot x=F(x,u)=L(x,u,h(x))=L(x,u,y)$. 
    Due to the contractivity Assumption~\ref{aSP:ObserverC}, all solutions of $\dot z = L(z, u, y)$ converge exponentially to one another, resulting in~\eqref{eq:obs-error}.}
    Next, define the augmented plant dynamics $ \dot{x} = F(x, K(x)) + v = \Faug(x, v)$. By~\ref{aSP:PlantC}, $\Faug$ is strongly infinitesimally contracting with respect to $x$, uniformly with respect to exogenous input $v$, and 
    $1$-Lipschitz with respect to $v$, uniformly in $x$. Consider two trajectories of this system: one with initialization $x_0$ and input $v_1 = F(x, K(\xi)) - F(x, K(x))$, and the second the nominal stationary trajectory with initialization $\xstar$ and input $v_2 = 0$. Using the incremental ISS property of contracting systems with Lipschitz inputs~\cite[Theorem~3.16]{FB:26-CTDS} and Assumption~\ref{aSP:LipBounds}, we bound the distance between these trajectories:
    \begin{align}
        D^+ \norm{x - \xstar}{\mcX} &\leq -c_K \norm{x - \xstar}{\mcX} + \norm{v_1 - v_2}{\mcX} \nonumber\\
        &\leq -c_K \norm{x - \xstar}{\mcX} + \ell_u\ell_K\norm{\xi - x}{\mcO}. \label{eq:fb_error} 
    \end{align}    
    Substituting the observer error bound~\eqref{eq:obs-error} into~\eqref{eq:fb_error} yields the linear differential inequality:
    \begin{align*}
        D^+ \norm{x - \xstar}{\mcX} \leq &-c_K \norm{x - \xstar}{\mcX} \\
        &+ \ell_u\ell_K \norm{\xi(0) - x(0)}{\mcO} \e^{-c_O t}.
    \end{align*}
    Applying the Comparison Lemma, we obtain
    \begin{align*}
    \norm{x(t) - &\xstar}{\mcX} \leq \norm{x(0) - \xstar}{\mcX} \e^{-c_K t} \\
    &+ \ell_u\ell_K \norm{\xi(0) - x(0)}{\mcO} \int_0^t \e^{-c_K(t-s)} \e^{-c_O s} ds.
\end{align*}
\textcolor{black}{Recognizing the latter integral as $\beta(t; c_K, c_O)$, we obtain~\eqref{eq:contr-error}. Since all norms on $\R^n$ are equivalent, $(x(t), \xi(t))$ converges to $(\xstar, \xstar)$ for all initial conditions $x(0),\xi(0)\in\R^n$.}
\end{proof}

\color{black}
\begin{remark}[Convergence rate]
The closed-loop convergence rate is determined by $\beta(t; c_K, c_O)$, the slowest term on the right-hand side of~\eqref{eq:obs-error},~\eqref{eq:contr-error}. This function decays as $O(e^{-ct})$ for $c < c_*=\min(c_K, c_O)$, with $c = c_*$ when for $c_K \neq c_O$.
\end{remark}
\color{black}

\begin{remark}[Extension to discrete time and sampled data systems]
    Analogous result holds for discrete-time systems, with parallel assumptions on contraction and Lipschitz bounds.
\end{remark}

\begin{remark}[Relationship to the linear separation principle]
    Our result seeks to be a nonlinear analog of the classic separation principle~\cite[Theorem 16.10]{JPH:09}. Indeed, when $F(x, u) = Ax + Bu$, $h(x) = Cx$, $K(\xi)=K_0\xi$ and the Luenberger observer has gain $L_0$, the assumptions in Theorem~\ref{thm:separation_principle} reduce to stabilizability of $(A,B)$ and detectability of $(A, C)$. The contraction rates $c_K$ and $c_O$ correspond to the spectral abscissa of $A - BK_0$ and $A - L_0C$, respectively. 
\end{remark}

\begin{remark}[Relationship to iISS and iIOSS]
    Assumptions~\ref{aSP:PlantC} and~\ref{aSP:LipBounds} together imply that the plant can be rendered incremental input-to-state stable (iISS) via full state feedback~\cite{FB:26-CTDS}, while Assumptions~\ref{aSP:ZeroObserver} and~\ref{aSP:ObserverC} imply that the plant is incremental input/output-to-state stable (iIOSS)~\cite{EDS-YW:97}.
\end{remark}

\subsubsection*{Parametric extensions}
    Theorem~\ref{thm:separation_principle} also extends to parameterized systems. If the plant, output map, observer, and controller maps further depend on $\theta \in \Theta$ and Assumptions~\ref{aSP:PlantC}--\ref{aSP:ObserverC} hold uniformly over $\Theta$, the system globally exponentially converges to the parameter-dependent fixed point $(x^*(\theta), x^*(\theta))$, with the same bounds as in Theorem~\ref{thm:separation_principle}. 

Extending these results to parametric systems allows one to translate the robustness properties of contracting dynamics into robust stability of
the closed-loop system with explicit convergence rate estimates. 
We formalize this in two corollaries. The first establishes error bounds when the controller and observer design relies on an imprecise parameter's estimate. The second bounds the tracking error \textcolor{black}{relative to the instantaneous equilibrium $\xstar(\theta(t))$ when the parameter varies in time.}

\begin{cor}[Robustness to model error] Let $\Theta$ be a subset of a normed space with norm $\|\cdot\|_{\Theta}$.
Given maps $\map{F}{\real^n \times \real^m \times \Theta}{\real^n}$ and $\map{h}{\real^n \times \Theta}{\real^p}$,
consider the control system $\dot{x} = F(x,u,\theta)$ with output $y = h(x,\theta)$. Let $\map{L}{\real^n \times \real^m \times \real^p \times \Theta}{\real^n}$ be an observer, and let $\map{K}{\real^n \times \Theta}{\real^m}$ be a controller.  Let Assumptions~\ref{aSP:PlantC}--\ref{aSP:ObserverC} hold uniformly for $\theta \in \Theta$. Further, \textcolor{black}{let $K$ be $\ell_{K, \theta}$-Lipschitz in $\theta$, uniformly in $x$, and let $L$ be $\ell_{L, \theta}$-Lipschitz in $\theta$, uniformly in its other arguments.} Consider the closed-loop system where the plant corresponds to the true parameter $\theta$, while the controller and the observer use the estimate $\hat{\theta}$:
\begin{subequations}\label{eq:parametric_closed_loop}
\begin{align}
\dot{x} &= F(x, K(\xi, \hat{\theta}), \theta), \label{eq:parametric_plant}\\
\dot{\xi} &= L(\xi, K(\xi, \hat{\theta}), h(x, \theta), \hat{\theta}). \label{eq:parametric_observer}
\end{align}
\end{subequations}
\textcolor{black}{Then, the solutions satisfy the differential inequalities}
\begin{align} 
D^+ \norm{\xi - x}{\mcO} &\leq -c_O \norm{\xi - x}{\mcO} + \ell_{L,\theta} \norm{\hat{\theta} - \theta}{\Theta}, \label{eq:iss_bound_obs_parametric} \\
D^+ \norm{x - \xstar(\theta)}{\mcX} &\leq -c_K \norm{x - \xstar(\theta)}{\mcX} + \ell_u\ell_K \norm{\xi - x}{\mcO} \nonumber\\
&\quad + \ell_u \ell_{K,\theta} \norm{\hat{\theta} - \theta}{\Theta}. \label{eq:iss_bound_state_parametric}
\end{align}
\end{cor}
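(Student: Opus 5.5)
The argument follows the proof of Theorem~\ref{thm:separation_principle} step by step; the only change is that mismatches between $\theta$ and $\hat\theta$ now enter as Lipschitz-bounded disturbances.

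\emph{Observer error.} We compare $\xi(t)$ with $x(t)$, regarding both as trajectories of an observer driven by the common input signals $u(t)=K(\xi(t),\hat\theta)$ and $y(t)=h(x(t),\theta)$. By Assumption~\ref{aSP:ZeroObserver} at the true parameter, the plant satisfies
\[
\dot x=F(x,u,\theta)=L(x,u,h(x,\theta),\theta).
\]
So $x$ is a trajectory of $\dot z=L(z,u,y,\theta)$. Meanwhile $\xi$ is a trajectory of $\dot z=L(z,u,y,\hat\theta)$, which we write as
\[
L(z,u,y,\theta)+w,\qquad w=L(\xi,u,y,\hat\theta)-L(\xi,u,y,\theta).
\]
The Lipschitz assumption on $L$ in $\theta$ gives $\norm{w}{\mcO}\le \ell_{L,\theta}\norm{\hat\theta-\theta}{\Theta}$. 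Assumption~\ref{aSP:ObserverC} holds uniformly in $u,y$ and at the fixed parameter $\theta$. We then invoke the incremental ISS property of contracting systems with an additive, $1$-Lipschitz input~\cite[Theorem~3.16]{FB:26-CTDS}, comparing the trajectory with input $w$ against the one with input $0$. This yields~\eqref{eq:iss_bound_obs_parametric}.

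\emph{State error.} We use the augmented dynamics
\[
\dot x=F(x,K(x,\theta),\theta)+v.
\]
By Assumption~\ref{aSP:PlantC} at $\theta$, this is contracting with rate $c_K$ in $\norm{\cdot}{\mcX}$, uniformly in $v$, and $1$-Lipschitz in $v$. Its equilibrium for $v=0$ is $\xstar(\theta)$. The closed-loop plant~\eqref{eq:parametric_plant} corresponds to the input
\[
v=F(x,K(\xi,\hat\theta),\theta)-F(x,K(x,\theta),\theta).
\]
Comparing with the stationary trajectory $\xstar(\theta)$ (input $0$), the same iISS estimate gives
\[
D^+\norm{x-\xstar(\theta)}{\mcX}\le -c_K\norm{x-\xstar(\theta)}{\mcX}+\norm{v}{\mcX}.
\]
To bound $\norm{v}{\mcX}$, we first use the $\ell_u$-Lipschitz property of $F$ in $u$ from Assumption~\ref{aSP:LipBounds}:
\[
\norm{v}{\mcX}\le \ell_u\norm{K(\xi,\hat\theta)-K(x,\theta)}{\mcU}.
\]
Then we insert the intermediate term $K(\xi,\theta)$ and apply the triangle inequality. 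Assumption~\ref{aSP:LipBounds} bounds $\norm{K(\xi,\theta)-K(x,\theta)}{\mcU}$ by $\ell_K\norm{\xi-x}{\mcO}$, uniformly in $\theta$. The $\theta$-Lipschitz assumption on $K$ bounds $\norm{K(\xi,\hat\theta)-K(\xi,\theta)}{\mcU}$ by $\ell_{K,\theta}\norm{\hat\theta-\theta}{\Theta}$. Combining these gives~\eqref{eq:iss_bound_state_parametric}.

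\emph{Main obstacle.} The delicate point is the applicability of the Dini-derivative iISS estimate for one trajectory whose perturbation depends on the state itself. The disturbances $w$ and $v$ are state-dependent, so they are not exogenous inputs. The cure is to evaluate them along the given closed-loop solution, so that they become fixed continuous time signals. Each is then a legitimate exogenous input to a system that is contracting uniformly in that input. This requires $\theta$ and $\hat\theta$ to be in $\Theta$, so that the uniform assumptions apply. A further subtlety is that Assumption~\ref{aSP:ZeroObserver} must be used at the true $\theta$, not at $\hat\theta$; this is exactly why the $L$-mismatch term appears in~\eqref{eq:iss_bound_obs_parametric}.
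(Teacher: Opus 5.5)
Your proposal is correct and follows the paper's proof. Both arguments treat $x$ and $\xi$ as trajectories of the observer dynamics driven by the common signals $u=K(\xi,\hat\theta)$ and $y=h(x,\theta)$, and both compare $x$ with $\xstar(\theta)$ through the augmented system $\dot x=F(x,K(x,\theta),\theta)+v$, bounding $\norm{v}{\mcX}$ via the intermediate term $K(\xi,\theta)$. The only cosmetic difference is that you recast the observer's parameter mismatch as an additive disturbance $w$ with $\norm{w}{\mcO}\le\ell_{L,\theta}\norm{\hat\theta-\theta}{\Theta}$, whereas the paper applies the iISS estimate with the parameter $\vartheta$ itself as the input; both yield the identical bound.
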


\begin{proof}
    \textcolor{black}{Consider the trajectory $(x, \xi)$ of the closed loop system~\eqref{eq:parametric_closed_loop}; we suppress the time $t$ throughout. 
    As in the proof of Theorem~\ref{thm:separation_principle}, $z=x$ and $z=\xi$ are, respectively, the solutions to the dynamical system $\dot z=L(z,u,y,\vartheta)$,
    corresponding to the same pair of signals $u=K(\xi,\hat\theta)$, $y=h(x,\theta)$ yet different parameters $\vartheta=\theta$ and $\vartheta=\hat\theta$; this entails~\eqref{eq:iss_bound_obs_parametric} thanks to the iISS property of contracting systems~\cite[Theorem~3.16]{FB:26-CTDS}. Similarly, $x$ and $\xstar(\theta)$ are two solutions of the system $\dot{x} = F(x, K(x, \theta), \theta) + v$, corresponding to inputs $v_1 = F(x, K(\xi, \hat{\theta}), \theta) - F(x, K(x, \theta), \theta)$ and $v_2=0$, respectively. As in the proof of Theorem~\ref{thm:separation_principle}, the iISS property implies~\eqref{eq:iss_bound_state_parametric}. Compared to~\eqref{eq:fb_error}, the upper bound for $\norm{v_2-v_1}{\mcX}$ acquires an additional term $\ell_{K,\theta}\norm{\theta-\hat\theta}{\Theta}$, since the controller is parameter-dependent.}    
\end{proof}

\textcolor{black}{By Assumption~\ref{aSP:PlantC} and~\cite[Lemma~16]{AD-VC-AG-GR-FB:23f}, the map $\theta \mapsto \xstar(\theta)$ is Lipschitz, so $\xstar(\theta(t))$ is locally Lipschitz and almost everywhere differentiable for any smooth curve $\theta \colon (a,b) \to \Theta$ thanks to\cite[Lemma~17]{AD-VC-AG-GR-FB:23f}. The differential iISS property~\cite[Theorem~3.16]{FB:26-CTDS} extends to inputs defined almost everywhere, and can be turned into the integral form via the standard Gr\"onwall inequality; see~\cite[Theorem~2]{AD-VC-AG-GR-FB:23f} for details.}

\begin{cor}[Equilibrium tracking] \label{cor:separation_moving_target}
Let $\Theta$ be a subset of a normed space with norm $\|\cdot\|_{\Theta}$. Given maps $\map{F}{\real^n \times \real^m \times \Theta}{\real^n}$ and $\map{h}{\real^n \times \Theta}{\real^p}$,
consider the control system $\dot{x} = F(x,u, \theta)$ with output $y = h(x,\theta)$. Let $\map{L}{\real^n \times \real^m \times \real^p \times \Theta}{\real^n}$ be an observer and  $\map{K}{\real^n \times \Theta}{\real^m}$ be a controller.  Let Assumptions~\ref{aSP:PlantC}--\ref{aSP:ObserverC} hold uniformly for $\theta \in \Theta$. Let $K$ be $\ell_{K, \theta}$-Lipschitz in $\theta$, uniformly in $x$, and let $F$ be $\ell_{F, \theta}$-Lipschitz in $\theta$, uniformly in its other arguments. Let the parameter $\theta$ evolve along a continuously differentiable trajectory $\theta(t)$. Consider the closed loop system
\begin{subequations}\label{eq:moving_closed_loop}
\begin{align}
\dot{x} &= F(x, K(\xi, \theta(t)), \theta(t)), \label{eq:moving_controller}\\
\dot{\xi} &= L(\xi, K(\xi, \theta(t)), h(x, \theta(t)), \theta(t)). \label{eq:moving_observer}
\end{align}
\end{subequations}
Let $\xstar(\theta(t))$ be the time-varying equilibrium curve of the ideal nominal system. Then, \textcolor{black}{for almost all $t$, one has}
\begin{multline} \label{eq:iss_bound_moving}
D^+ \norm{x - \xstar(\theta(t))}{\mcX} \leq -c_K \norm{x - \xstar(\theta(t))}{\mcX} \\
+ \frac{\ell_u \ell_{K, \theta} + \ell_{F,\theta}}{c_K}\norm{\dot{\theta}(t)}{\Theta}+ \ell_u\ell_K \norm{\xi - x}{\mcO}.
\end{multline}
\end{cor}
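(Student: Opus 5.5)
We follow the proof of Theorem~\ref{thm:separation_principle} and of the robustness corollary, adding a term that accounts for the motion of the equilibrium $\xstar(\theta(t))$. Fix a trajectory $(x,\xi)$ of~\eqref{eq:moving_closed_loop} and a time $t$ at which $\theta(\cdot)$ and $\xstar(\theta(\cdot))$ are both differentiable. By the remark preceding the corollary, $\xstar(\theta(t))$ is locally Lipschitz and differentiable almost everywhere, so such $t$ form a set of full measure. We write $x$ as a trajectory of the augmented time-varying system $\dot x=\Faug(x,v,\theta(t)):=F(x,K(x,\theta(t)),\theta(t))+v$ with input
\[
v_1=F(x,K(\xi,\theta(t)),\theta(t))-F(x,K(x,\theta(t)),\theta(t)).
\]
By~\ref{aSP:PlantC}, which holds uniformly in $\theta$, this vector field is contracting with rate $c_K$ for every frozen $\theta$. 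By~\ref{aSP:LipBounds}, $\norm{v_1}{\mcX}\le \ell_u\ell_K\norm{\xi-x}{\mcO}$. There is no $\theta$-mismatch here, because controller and plant use the same $\theta(t)$.

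Next we treat the reference curve $x_r(t):=\xstar(\theta(t))$. It is not a solution of the augmented system. It is a solution with input $v_2=\dot x_r(t)$, because $F(x_r,K(x_r,\theta(t)),\theta(t))=\0_n$. The iISS estimate~\cite[Theorem~3.16]{FB:26-CTDS} for the time-varying contracting field, applied at the same time $t$ to two trajectories with different inputs, gives
\[
D^+\norm{x-x_r}{\mcX}\le -c_K\norm{x-x_r}{\mcX}+\norm{v_1-v_2}{\mcX}\le -c_K\norm{x-x_r}{\mcX}+\ell_u\ell_K\norm{\xi-x}{\mcO}+\norm{\dot x_r}{\mcX}.
\]
This estimate is pointwise in time, so the time-dependence through $\theta(t)$ does not matter. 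The extension to inputs defined only almost everywhere is the one noted before the corollary.

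The remaining and main step is to bound $\norm{\dot x_r}{\mcX}\le \tfrac{\ell_u\ell_{K,\theta}+\ell_{F,\theta}}{c_K}\norm{\dot\theta}{\Theta}$. We plan to use the Lipschitz equilibrium estimate~\cite[Lemma~16]{AD-VC-AG-GR-FB:23f}: if $G(x,\theta)$ is contracting in $x$ with rate $c_K$ and Lipschitz in $\theta$ with constant $\ell_\theta$, then $\norm{\xstar(\theta)-\xstar(\theta')}{\mcX}\le \tfrac{\ell_\theta}{c_K}\norm{\theta-\theta'}{\Theta}$. We apply it to $G(x,\theta)=F(x,K(x,\theta),\theta)$. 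To obtain $\ell_\theta$, insert the intermediate term $F(x,K(x,\theta'),\theta)$ and use the triangle inequality. The change in $K$ contributes $\ell_u\ell_{K,\theta}$ through the $u$-Lipschitz bound in~\ref{aSP:LipBounds}, and the direct change in $\theta$ contributes $\ell_{F,\theta}$. Hence $\ell_\theta\le \ell_u\ell_{K,\theta}+\ell_{F,\theta}$. Dividing $\norm{x_r(t+h)-x_r(t)}{\mcX}$ by $h$ and letting $h\to0$, using that $\theta$ is $C^1$, yields the bound on $\norm{\dot x_r}{\mcX}$. Substituting it into the previous inequality gives~\eqref{eq:iss_bound_moving}.

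The only delicate point is the one-sided derivative of the norm difference at points where $x_r$ is merely Lipschitz. We handle it by restricting to almost every $t$, as the statement permits.
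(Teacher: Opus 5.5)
Your proposal is correct and follows essentially the same route as the paper: you treat both $x(t)$ and $\xstar(\theta(t))$ as trajectories of the augmented contracting system $\dot x = F(x,K(x,\theta(t)),\theta(t)) + w$, with inputs $w_1 = F(x,K(\xi,\theta),\theta)-F(x,K(x,\theta),\theta)$ and $w_2=\dot x^\star(\theta(t))$. You then apply the iISS estimate and bound $\norm{\dot x^\star}{\mcX}$ by $c_K^{-1}(\ell_u\ell_{K,\theta}+\ell_{F,\theta})\norm{\dot\theta}{\Theta}$. The only cosmetic difference is that you obtain this velocity bound from the Lipschitz equilibrium estimate (Lemma~16 of the cited reference) plus a difference-quotient limit, with an explicit triangle-inequality computation of the $\theta$-Lipschitz constant, whereas the paper cites Lemma~17 of the same reference directly.
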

\begin{proof}
Consider the auxiliary dynamics,
\begin{align}\label{eq:moving_aux_dynamics}
    \dot{x} = F(x, K(x, \theta(t)),\theta(t)) + w(t) = T(x, \theta(t), w(t)),
\end{align}
where $\map{T}{\R^n \times \Theta \times \R^n}{\R^n}$ is Lipschitz in $w$ with constant $1$ and  Lipschitz in $x$ uniformly in $\theta,w$ due to Assumption~\ref{aSP:LipBounds}. 
By~\ref{aSP:PlantC}, the system~\eqref{eq:moving_aux_dynamics} is
strongly infinitesimally contracting uniformly in $\theta$ and $w$. 
Both $x(t)$ and $\xstar(\theta(t))$ are solutions of~\eqref{eq:moving_aux_dynamics}, corresponding to inputs $w_1(t) = F(x, K(\xi, \theta(t)), \theta(t)) - F(x, K(x, \theta(t)), \theta(t))$ and $w_2(t) = \dot{x}^\star(\theta(t))$ (defined for almost all $t$). 
Using the iISS property~\cite[Theorem 3.16]{FB:26-CTDS} of contracting systems,
\begin{align*}
    &D^+ \norm{x - \xstar(\theta(t))}{\mcX} \nonumber\\
    &\leq -c_K \norm{x - \xstar(\theta(t))}{\mcX} + \norm{w_1 - w_2}{\mcX} \nonumber \\
    &\leq -c_K \norm{x - \xstar(\theta(t))}{\mcX} + \norm{\dot{x}^\star(\theta(t))}{\mcX} \nonumber\\
    &\quad + \norm{F(x, K(\xi, \theta(t)),\theta(t)) - F(x, K(x, \theta(t)),\theta(t))}{\mcX}.
\end{align*}
In view of~\cite[Lemma~17]{AD-VC-AG-GR-FB:23f}, $\norm{\dot{x}^\star(\theta(t))}{\mcX}$ does not exceed $c_K^{-1}\Lip_\theta (F(x, K(x, \theta), \theta))\norm{\dot\theta(t)}{\Theta}$ for almost all $t$, where the Lipschitz constant does not exceed $\ell_u \ell_{K, \theta} + \ell_{F,\theta}$. Furthermore, $\norm{F(x, K(\xi, \theta(t)),\theta(t)) - F(x, K(x, \theta(t)),\theta(t))}{\mcX}\leq \ell_u\ell_K \norm{\xi - x}{\mcO}$ due to~\ref{aSP:LipBounds}. These estimates entail the desired differential inequality~\eqref{eq:iss_bound_moving} for almost all $t$.
\end{proof}
\begin{remark}[Applications of time varying parameters]
    The extension of our bounds to time varying parameters is useful across many domains. Not only can it be used for modeling plants subject to time varying disturbances, but it is also useful for designing controllers which have additional useful characteristics. Later in this work, we utilize this result to design a low gain reference tracking controller in Theorem~\ref{thm:ref_tracking}, other applications include the design of gradient-based controllers building on the work in~\cite{MC-ED-AB:20}, and the design of controllers with time-varying objectives.
\end{remark}

\section{Contraction of firing rate and Hopfield Neural networks} \label{sec:analysis_frnn_hnn}

We are specifically motivated by the study of systems modeled by  the firing rate neural network (FRNN), and the Hopfield neural network (HNN) in both continuous and discrete time. In this section, we present sharp contraction conditions and related properties for these dynamics. The continuous time FRNN and HNN dynamics are given below:
\begin{align}
    \dot{x} &= -x + \Psi(Wx + Bu); \quad y = Cx+Du \label{eq:cts_firing_rate} \\
    \dot{x} &= -x + W\Psi(x) + Bu; \quad y = C\Psi(x)+Du. \label{eq:cts_hopfield}
\end{align}
Here, the state $x\in \R^n$, input $u \in \R^m$,  output $y \in \R^p$, and the synaptic matrix $W \in \R^{n\times n}$, output matrices $C \in \R^{p \times n}$, $D \in \R^{p \times m}$.  $\Psi(\cdot)$ is a diagonal, slope restricted nonlinearity. Equation~\eqref{eq:cts_firing_rate} represents the FRNN dynamics, and equation~\eqref{eq:cts_hopfield} represents the HNN dynamics. The corresponding discrete time dynamics are
\begin{align}
    x^+ &=\Psi(Wx + Bu); \quad y = Cx+Du, \label{eq:disc_firing_rate}\\
    x^+ &= W\Psi(x) + Bu ; \quad y = C\Psi(x)+Du. \label{eq:disc_hopfield}
\end{align}
In the HNN models~\eqref{eq:cts_hopfield},~\eqref{eq:disc_hopfield}, the state $x$ represents the internal membrane potentials of the neurons. Consequently, the measurable output $y$ is read out as a linear combination of their resulting firing rates, $\Psi(x)$. 
\begin{remark}
The state evolution in the FRNN and HNN dynamics~\eqref{eq:cts_firing_rate}-\eqref{eq:disc_hopfield} is structurally similar to a Lur'e model, which enables the use of Lemma~\ref{lem:absolute_contractivity_lure} for the analysis of contractivity. However, in classical Lur'e systems, the input enters linearly, and the output is a linear function of the state. In FRNNs, the input appears \emph{inside} the nonlinearity, and in HNNs, the output is \emph{a nonlinear function} of the state. This nonlinear coupling precludes the direct application of results developed for standard Lur'e systems~\cite{MG-VA-ST-DA:23}.
\end{remark}

\begin{table*}[t]
    \centering
    \renewcommand{\arraystretch}{1.5}
    \caption{Contractivity Conditions for firing rate and Hopfield Models. See Theorem~\ref{thm:main_fr_h_conditions} for definitions of symbols. \newline The neural network with synaptic matrix $W$ corresponding to each entry is contracting (with rate $c$ or factor $\rho$) if there exist $P\succ0$ and diagonal $Q\succ0$ satisfying the corresponding LMI.}
    \setlength{\tabcolsep}{8pt}
    \begin{tabular}{@{}ll | c l r | c l r @{}}
        \toprule
        \textbf{Architecture} & \textbf{Nonlinearity} & \multicolumn{3}{c|}{\textbf{Discrete Time}} & \multicolumn{3}{c@{}}{\textbf{Continuous Time}} \\
        \midrule
        \multirow{2}{*}{\textbf{Firing Rate}} 
        & \textbf{CONE} $[-1, 1]$ & 
        $\begin{bmatrix}
        - \rho^2 P + W^\top Q W  & \0_{n \times n} \\
        \0_{n \times n} & P - Q
        \end{bmatrix}$ & $\preceq 0$ & (\refstepcounter{equation}\theequation)\label{eq:fr_disc_cone} & 
        $\begin{bmatrix}
        -2(1 - c)P + W^\top Q W & P  \\
        P  & - Q
        \end{bmatrix}$ & $\preceq 0$ & (\refstepcounter{equation}\theequation)\label{eq:fr_cts_cone} \\
        \addlinespace
        & \textbf{MONE} $[0, 1]$ &  
        $\begin{bmatrix}
        - \rho^2 P & W^\top Q \\
        Q W & P - 2Q
        \end{bmatrix}$ & $\preceq 0$ & (\refstepcounter{equation}\theequation)\label{eq:fr_disc_mone} &  
        $\begin{bmatrix}
        -2(1 - c)P & P + W^\top Q \\
        P + QW & -2Q
        \end{bmatrix}$ & $\preceq 0$ & (\refstepcounter{equation}\theequation)\label{eq:fr_cts_mone} \\
        \midrule
        \multirow{2}{*}{\textbf{Hopfield}} 
        & \textbf{CONE} $[-1, 1]$ & 
        $\begin{bmatrix}
        - \rho^2 P + Q & \0_{n \times n} \\
        \0_{n \times n} & W^\top P W - Q
        \end{bmatrix}$ & $\preceq 0$ & (\refstepcounter{equation}\theequation)\label{eq:h_disc_cone} & 
        $\begin{bmatrix}
        -2(1 - c)P + Q & PW  \\
        W^\top P  & -Q
        \end{bmatrix}$ & $\preceq 0$ & (\refstepcounter{equation}\theequation)\label{eq:h_cts_cone} \\
        \addlinespace
        & \textbf{MONE} $[0, 1]$ &              
        $\begin{bmatrix}
        - \rho^2 P & Q \\
        Q & W^\top P W - 2Q
        \end{bmatrix}$ & $\preceq 0$ & (\refstepcounter{equation}\theequation)\label{eq:h_disc_mone} & 
        $\begin{bmatrix}
        -2(1 - c)P & PW +  Q \\
        W^\top P + Q & -2Q
        \end{bmatrix}$ & $\preceq 0$ & (\refstepcounter{equation}\theequation)\label{eq:h_cts_mone} \\
        \bottomrule
    \end{tabular}
    \label{tab:lmi_conditions}
\end{table*}

\begin{thm}[Contractivity of FRNN and HNN]\label{thm:main_fr_h_conditions}
    Consider the FRNN and HNN dynamics~\eqref{eq:cts_firing_rate}-\eqref{eq:disc_hopfield}, for some constant $u$. Let $W$ be the synaptic weight matrix and let the activation function $\Psi(\cdot)$ belong to either the CONE or MONE class. \textcolor{black}{If the corresponding matrix inequality in Table~\ref{tab:lmi_conditions} is satisfied by $P\succ 0$, diagonal $Q\succ 0$ and scalar $c>0$ or $\rho\in [0,1)$, then}
    the system is infinitesimally strongly contracting  with rate $c$ (in continuous time) or strongly contracting with factor $\rho$ (in discrete time) in the norm $\norm{\cdot}{P}$. 
\end{thm}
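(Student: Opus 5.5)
Each of the eight conditions in Table~\ref{tab:lmi_conditions} is an instance of Lemma~\ref{lem:absolute_contractivity_lure}, so the plan is to rewrite each FRNN/HNN model as a Lur'e system~\eqref{eq:cts_lure} or~\eqref{eq:disc_lure}. The multiplier is taken from Lemma~\ref{lem:IMM_conditions}, with $M=M_{\textup{CONE}}$ or $M=M_{\textup{MONE}}$ and the same diagonal $Q\succ0$ that appears in the table. It then remains to check that~\eqref{eq:cts_lure_lmi} or~\eqref{eq:disc_lure_lmi} reduces literally to the table entry.

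\emph{Handling the constant input.} Fix the constant $u$.
\begin{itemize}
\item For the FRNN, define $\Psi_u(z):=\Psi(z+Bu)$. It is still diagonal, and each coordinate is a translate of $\Psi_{[i]}$, so it has the same slope restriction. Hence $\Psi_u$ admits the same incremental multiplier as $\Psi$, and the FRNN becomes $\dot x=-x+\Psi_u(Wx)$, respectively $x^+=\Psi_u(Wx)$.
\item For the HNN, the term $Bu$ is an additive constant. It cancels in every increment $F(x)-F(\tilde x)$, and the proof of Lemma~\ref{lem:absolute_contractivity_lure} only uses such increments. Concretely, the infinitesimal contraction inequality in continuous time, and the Lipschitz bound $\Lip_x(F)\le\rho$ in discrete time, are both statements about increments. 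So the lemma applies unchanged to $\dot x=-x+W\Psi(x)+Bu$ and $x^+=W\Psi(x)+Bu$.
\end{itemize}
This is the only point needing care beyond bookkeeping. I would state it explicitly, for instance by recalling that the S-lemma argument of Appendix~\ref{app:lure_proof} is applied to the increment vector $(x-\tilde x,\Psi(Hx)-\Psi(H\tilde x))$.

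\emph{Identification of parameters.} The four Lur'e realizations are:
\begin{itemize}
\item continuous-time FRNN: $A=-I_n$, $B=I_n$, $H=W$;
\item discrete-time FRNN: $A=\0_{n\times n}$, $B=I_n$, $H=W$;
\item continuous-time HNN: $A=-I_n$, $B=W$, $H=I_n$;
\item discrete-time HNN: $A=\0_{n\times n}$, $B=W$, $H=I_n$.
\end{itemize}
By~\eqref{eq:imm-d-outp}, $M_H=\begin{bmatrix}H^\top Q H&0\\0&-Q\end{bmatrix}$ in the CONE case and $M_H=\begin{bmatrix}0&H^\top Q\\QH&-2Q\end{bmatrix}$ in the MONE case.

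\emph{Matching the table.} Substituting into~\eqref{eq:cts_lure_lmi} and~\eqref{eq:disc_lure_lmi} gives the following.
\begin{itemize}
\item Continuous-time FRNN: $PA+A^\top P+2cP=-2(1-c)P$ and $PB=P$. Adding $M_H$ yields~\eqref{eq:fr_cts_cone} and~\eqref{eq:fr_cts_mone}.
\item Discrete-time FRNN: with $A=0$, the upper-left block of~\eqref{eq:disc_lure_lmi} is $-\rho^2P$ and the lower-right block is $B^\top PB=P$. Adding $M_H$ gives~\eqref{eq:fr_disc_cone} and~\eqref{eq:fr_disc_mone}.
\item Continuous-time HNN: the off-diagonal block is $PB=PW$ and $H=I$. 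This gives~\eqref{eq:h_cts_cone} and~\eqref{eq:h_cts_mone}.
\item Discrete-time HNN: the lower-right block is $W^\top PW$. This gives~\eqref{eq:h_disc_cone} and~\eqref{eq:h_disc_mone}.
\end{itemize}
Lemma~\ref{lem:absolute_contractivity_lure} then yields strong infinitesimal contraction with rate $c$, or strong contraction with factor $\rho$, in $\norm{\cdot}{P}$.

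I expect no real obstacle beyond this block-by-block verification and the justification for the constant input.
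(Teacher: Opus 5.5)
Your proposal is correct and follows the paper's proof. Both write each model as a Lur'e system, take the multipliers from Lemma~\ref{lem:IMM_conditions}, and read off the table entries from Lemma~\ref{lem:absolute_contractivity_lure}. You go further than the paper, which works out only the continuous-time FRNN MONE case and says nothing about the constant input: you check all eight block identities, and you justify the constant $u$ explicitly via the translated nonlinearity $\Psi_u$ for the FRNN and the cancellation of $Bu$ in increments for the HNN.
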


\begin{proof}
     The proof of each of the eight cases follows from  Lemma~\ref{lem:IMM_conditions} and~\ref{lem:absolute_contractivity_lure}. For simplicity, we provide the proof for only the FRNN case in continuous time for MONE nonlinearities; the other cases follow similarly. Since~\eqref{eq:cts_firing_rate} is a continuous-time Lur'e system of the form~\eqref{eq:cts_lure}, with $A = - I_n, B = I_n$ and $H = W$. Per Lemma~\ref{lem:IMM_conditions} a MONE nonlinearity admits the matrix multiplier $M = \begin{bmatrix}
         \0_{n \times n} & Q  \\ Q & -2Q
     \end{bmatrix}$, for some positive diagonal $Q$. Substituting these equalities into~\eqref{eq:cts_lure_lmi} yields~\eqref{eq:fr_cts_mone}.
\end{proof}
We refer to each entry in Table~\ref{tab:lmi_conditions} as a contraction certificate. 
\subsection{Structural relationships of contractivity conditions}
First we show that the discrete-time CONE certificates for both network architectures are reducible to Schur diagonal stability, generalizing prior results on firing-rate models~\cite{WDA-ALB-MF:24}.

\begin{lemma}[Schur diagonal stability]
\label{lem:discrete_cone_schur} 
$W \in \R^{n \times n}$ satisfies the discrete-time CONE certificates~\eqref{eq:fr_disc_cone} and~\eqref{eq:h_disc_cone} if and only if $W$ is Schur diagonally stable.
\end{lemma}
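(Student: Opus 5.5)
Throughout I take \emph{Schur diagonally stable} to mean the usual thing: there is a diagonal $D \succ 0$ with $W^\top D W - D \prec 0$. Both certificates~\eqref{eq:fr_disc_cone} and~\eqref{eq:h_disc_cone} are block diagonal. So each LMI is equivalent to a pair of decoupled matrix inequalities, and the proof reduces to chaining these inequalities in each direction. Throughout, $P \succ 0$, $Q \succ 0$ is diagonal, and $\rho \in [0,1)$.

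\emph{Necessity for the FRNN certificate~\eqref{eq:fr_disc_cone}.} The two diagonal blocks give $W^\top Q W \preceq \rho^2 P$ and $P \preceq Q$. Chaining them yields
\[
W^\top Q W \preceq \rho^2 P \preceq \rho^2 Q,
\]
hence $W^\top Q W - Q \preceq -(1-\rho^2) Q \prec 0$. Thus $D = Q$ is the required diagonal certificate.

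\emph{Necessity for the HNN certificate~\eqref{eq:h_disc_cone}.} The blocks give $Q \preceq \rho^2 P$ and $W^\top P W \preceq Q$. Here $P$ need not be diagonal, so I will not use $P$ as the certificate. Instead I transport the inequality to $Q$, using that congruence preserves the Loewner order. From $Q \preceq \rho^2 P$ we get
\[
W^\top Q W \preceq \rho^2 W^\top P W \preceq \rho^2 Q,
\]
and therefore $W^\top Q W - Q \preceq -(1-\rho^2)Q \prec 0$. Again $D = Q$ works.

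\emph{Sufficiency for both certificates.} Let $D \succ 0$ be diagonal with $W^\top D W \prec D$. By strictness and compactness of the unit sphere, there is $\rho \in (0,1)$ with $W^\top D W \preceq \rho^2 D$. For instance, take $\rho^2$ to be the largest eigenvalue of $D^{-1/2} W^\top D W D^{-1/2}$, which is $<1$. If this eigenvalue is $0$, enlarge $\rho$ to any value in $(0,1)$. For~\eqref{eq:fr_disc_cone}, choose $P = Q = D$: the first block is $W^\top D W - \rho^2 D \preceq 0$ and the second block is $0$. For~\eqref{eq:h_disc_cone}, choose $P = D$ and $Q = \rho^2 D$, which is diagonal and positive definite since $\rho > 0$. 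Then the first block vanishes, and the second is $W^\top D W - \rho^2 D \preceq 0$.

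The only point requiring care is the HNN necessity direction, where the certificate $P$ is not diagonal. This is handled by the congruence step above, which moves the contraction inequality from $P$ onto the diagonal $Q$. The only other subtlety is ensuring $\rho > 0$ in the converse, so that $Q = \rho^2 D \succ 0$.
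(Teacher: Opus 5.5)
Your proof is correct and follows the paper's argument essentially line for line: the same splitting of the block-diagonal certificates into two decoupled inequalities, and the same chains $W^\top Q W \preceq \rho^2 P \preceq \rho^2 Q$ and $W^\top Q W \preceq \rho^2 W^\top P W \preceq \rho^2 Q$ for necessity, with the diagonal $Q$ serving as the Schur certificate. Your sufficiency construction is also the paper's, since your Hopfield choice $(P,Q)=(D,\rho^2 D)$ is just the paper's $(P,Q)=(\rho^{-2}D,D)$ rescaled by $\rho^2$, which is harmless because the certificate is homogeneous in $(P,Q)$.
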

\begin{proof}
\color{black}
    Note that ~\eqref{eq:fr_disc_cone} $\iff P \preceq Q$ and $W^\top QW \preceq \rho^2 P$. Similarly,~\eqref{eq:h_disc_cone} $\iff Q \preceq \rho^2P$ and $W^\top PW \preceq Q$.
    \textcolor{black}{Hence, if~\eqref{eq:fr_disc_cone} holds, then $W^{\top}QW-Q\preceq\rho^2 P-Q\preceq (\rho^2-1)Q\prec 0$, so $W$ is diagonally Schur stable with diagonal matrix $Q$. Analogously, ~\eqref{eq:h_disc_cone} entails diagonal Schur stability, since that $W^{\top}QW-Q\preceq \rho^2W^\top PW-Q\leq (\rho^2-1)Q\prec 0$.} 
        
    Conversely, let $W^{\top}QW-Q\prec 0$ for some diagonal matrix $Q\succ 0$. Then $\exists \; \rho\in(0,1)$ such that $W^{\top}QW-\rho^2Q\preceq 0$. Then, $P=Q$ satisfies~\eqref{eq:fr_disc_cone} and $P=\rho^{-2}Q$ satisfies~\eqref{eq:h_disc_cone}.
\end{proof}

Next, we establish the relationships between the various certificates in Table~\ref{tab:lmi_conditions}.

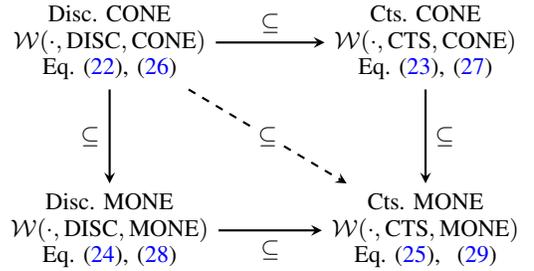
\begin{figure}[htpb]
    \centering

\begin{tikzpicture}[
    >=stealth,
    thick,
    every node/.style={align=center, font=\small}
]
    % Nodes defining the four Firing Rate conditions
    \node (DC) at (0, 2.5) {Disc. CONE \\ $\mathcal{W}(\cdot, \text{DISC}, \text{CONE})$ \\ Eq.~\eqref{eq:fr_disc_cone},~\eqref{eq:h_disc_cone}};
    \node (DM) at (4.2, 2.5) {
    Cts. CONE \\ $\mathcal{W}(\cdot, \text{CTS}, \text{CONE})$ \\ Eq.~\eqref{eq:fr_cts_cone},~\eqref{eq:h_cts_cone}
    };
    \node (CC) at (0, 0) {
        Disc. MONE \\ $\mathcal{W}(\cdot, \text{DISC}, \text{MONE})$ \\ Eq.~\eqref{eq:fr_disc_mone},~\eqref{eq:h_disc_mone}
    };
    \node (CM) at (4.2, 0) {Cts. MONE \\ $\mathcal{W}(\cdot, \text{CTS}, \text{MONE})$ \\ Eq.~\eqref{eq:fr_cts_mone}, ~\eqref{eq:h_cts_mone}};

    % Subset Inclusion Arrows
    \draw[->] (DC) -- node[above] {$\subseteq$} (DM);
    \draw[->] (DC) -- node[left] {$\subseteq$} (CC);
    \draw[->] (CC) -- node[below] {$\subseteq$} (CM);
    \draw[->] (DM) -- node[right] {$\subseteq$} (CM);
    
    \draw[->, dashed] (DC) -- node[fill=white, inner sep=1pt] {$\subseteq$} (CM);
\end{tikzpicture}
\caption{A summary of relationships for the contractivity conditions from Table~\ref{tab:lmi_conditions}. The sets $\mcW(\cdot, \cdot, \cdot)$ are described in Theorem~\ref{thm:structural_relationships}. The discrete time CONE condition restricts the weight matrices the most, whereas the continuous time MONE condition enables maximum expressivity.}
\end{figure}

\begin{thm}[Reductions and duality of the certificates]\label{thm:structural_relationships}
Let $\mcW(\mcM, \mcT, \mcN)$ denote the set of synaptic matrices~$W$ satisfying the contraction certificates in Table~\ref{tab:lmi_conditions} for model $\mathcal{M} \in \{\text{FR}, \text{Hop}\}$, time domain $\mathcal{T} \in \{\text{CTS}, \text{DISC}\}$, and nonlinearity class $\mathcal{N} \in \{\text{CONE}, \text{MONE}\}$. The following hold:
    \begin{enumerate}
        \item The set of synaptic matrices satisfying the condition for CONE nonlinearities is contained in the corresponding set for MONE nonlinearities.\label{item:nonlinearity_inclusion}
        \begin{equation*}
            \mathcal{W}(\cdot, \cdot, \text{CONE}) \subseteq \mathcal{W}(\cdot, \cdot, \text{MONE}).
        \end{equation*}
        
        \item \label{item:discretization} If a synaptic matrix $W$ satisfies a discrete-time condition with factor $\rho \in [0, 1)$, then it also satisfies the corresponding continuous-time condition with rate $c = (1 - \rho^2)/2$:
        \begin{equation*}
             \mathcal{W}(\cdot, \text{DISC}, \cdot) \subseteq \mathcal{W}(\cdot, \text{CTS}, \cdot).
        \end{equation*}
        
        \item \label{item:duality}
        A matrix $W$ satisfies the firing rate condition for some parameters $P \succ 0$ and diagonal $Q \succ 0$ if and only if $W^\top$ satisfies the corresponding Hopfield LMI condition for some $P' \succ 0$ and diagonal $Q' \succ 0$.
        \begin{equation*}
             W \in \mathcal{W}(\text{FR}, \cdot, \cdot) \iff W^\top \in \mathcal{W}(\text{Hop}, \cdot, \cdot).
        \end{equation*}
    \end{enumerate}
\end{thm}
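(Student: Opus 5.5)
The plan is to rewrite every entry of Table~\ref{tab:lmi_conditions} as ``base matrix $+\,M_H$'', following Lemma~\ref{lem:absolute_contractivity_lure}, and then to compare the certificates through matrix orderings and congruences. Items~\ref{item:nonlinearity_inclusion} and~\ref{item:discretization} will follow from showing that a certain difference of LMI matrices is negative semidefinite for the same $(P,Q)$. Item~\ref{item:duality} will follow from an explicit change of variables $P'=P^{-1}$ (up to scaling) and $Q'=Q^{-1}$; the latter is still diagonal and positive definite.

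\emph{Item~\ref{item:nonlinearity_inclusion}.} The key observation is that $M_{\textup{MONE}}\preceq M_{\textup{CONE}}$ for the same diagonal $Q$. Indeed, for $(\delta,\phi)\in\R^{2n}$,
\[
\begin{bmatrix}\delta\\\phi\end{bmatrix}^\top (M_{\textup{MONE}}-M_{\textup{CONE}})\begin{bmatrix}\delta\\\phi\end{bmatrix}=-(\delta-\phi)^\top Q(\delta-\phi)\le 0 .
\]
Congruence by $\diag{H,I}$ in~\eqref{eq:imm-d-outp} preserves this order, so the MONE LMI matrix is dominated by the CONE one. Hence any $(P,Q)$ certifying a CONE entry certifies the corresponding MONE entry with the same $c$ or $\rho$. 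As a sanity check for the continuous-time FRNN: the MONE matrix minus the CONE matrix equals $-\begin{bmatrix}W & -I_n\end{bmatrix}^\top Q\begin{bmatrix}W & -I_n\end{bmatrix}$. The other three pairs are handled identically.

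\emph{Item~\ref{item:discretization}.} Fix $c=(1-\rho^2)/2$, so that $-2(1-c)P=-(1+\rho^2)P$. I will subtract the discrete-time base matrix from the continuous-time one, keeping the same $M_H$. For the FRNN ($A=0$ vs.\ $A=-I$, $B=I$, $H=W$), the difference is $\begin{bmatrix}-P & P\\ P & -P\end{bmatrix}=-\begin{bmatrix}I & -I\end{bmatrix}^\top P\begin{bmatrix}I & -I\end{bmatrix}\preceq 0$. For the HNN ($B=W$, $H=I$), the difference is $-\begin{bmatrix}I & -W\end{bmatrix}^\top P\begin{bmatrix}I & -W\end{bmatrix}\preceq 0$. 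Therefore the discrete-time certificate with $(P,Q,\rho)$ implies the continuous-time one with the same $(P,Q)$, for both CONE and MONE multipliers. Since $\rho<1$, we get $c>0$.

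\emph{Item~\ref{item:duality}.} I will treat the four cases separately.

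\begin{itemize}
\item Continuous-time MONE: congruence of~\eqref{eq:fr_cts_mone} by $\diag{P^{-1},Q^{-1}}$ gives exactly~\eqref{eq:h_cts_mone} for $W^\top$ with $P'=P^{-1}$ and $Q'=Q^{-1}$.
\item Continuous-time CONE: take the Schur complement with respect to the $-Q$ block, obtaining $-2(1-c)P+W^\top QW+PQ^{-1}P\preceq 0$. Multiplying on both sides by $P^{-1}$ gives the Schur complement form of~\eqref{eq:h_cts_cone} for $W^\top$, again with $P'=P^{-1}$ and $Q'=Q^{-1}$.
\item Discrete-time CONE: invoke Lemma~\ref{lem:discrete_cone_schur}. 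It then suffices to note that $W^\top QW\prec Q$ holds if and only if $\|Q^{1/2}WQ^{-1/2}\|_2<1$. Since the spectral norm is invariant under transposition, this is equivalent to $WQ^{-1}W^\top\prec Q^{-1}$, i.e.\ $W^\top$ is Schur diagonally stable with the diagonal matrix $Q^{-1}$.
\item Discrete-time MONE: for $\rho>0$, take the Schur complement on the $(1,1)$ block of~\eqref{eq:fr_disc_mone}, obtaining $\rho^{-2}QWP^{-1}W^\top Q+P-2Q\preceq 0$. Multiply on both sides by $Q^{-1}$. The result coincides with the Schur complement of~\eqref{eq:h_disc_mone} for $W^\top$ under the choice $P'=\rho^{-2}P^{-1}$ and $Q'=Q^{-1}$.
\end{itemize}

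In every case the argument is reversible, using $W^{\top\top}=W$, which gives the equivalence.

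The main obstacle is the discrete-time MONE case. Finding the right rescaling $P'=\rho^{-2}P^{-1}$ is not automatic, and the boundary case $\rho=0$ needs separate care. When $\rho=0$, the $(1,1)$ block vanishes, so negative semidefiniteness forces $QW=0$ and hence $W=0$; this case is then trivial. Alternatively, one can perturb to a small $\rho>0$, since the certificate for $\rho$ implies the certificate for any larger $\rho'<1$. Semidefinite Schur complements are otherwise safe here, because the eliminated blocks ($-Q$, $-\rho^2P$) are definite.
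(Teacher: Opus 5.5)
Your proof is correct and follows essentially the same route as the paper. For (i) and (ii) you show the relevant difference of LMI matrices is negative semidefinite. For (iii) you use the same congruence and Schur-complement substitutions $(P,Q)\mapsto(P^{-1},Q^{-1})$, take $P'=\rho^{-2}P^{-1}$ in the discrete-time MONE case, and use transposition invariance of singular values in the discrete-time CONE case. Your unified ``base $+\,M_H$'' bookkeeping also makes explicit two points the paper leaves implicit: the Hopfield case of (ii), and the boundary $\rho=0$ in (iii). For the latter, your perturbation-in-$\rho$ argument is the right one rather than ``trivial'', because the Hopfield certificates are infeasible at $\rho=0$ itself.
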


\begin{proof}
    To prove~\ref{item:nonlinearity_inclusion}, notice that the left-hand side of each CONE inequality is the sum of the left-hand side of the corresponding MONE inequality and a negative semidefinite matrix: $M_1 = \begin{bmatrix}
        - W^\top Q W & W^\top Q \\ Q W & - Q
    \end{bmatrix}$ in the FRNN case, and $M_2 = \begin{bmatrix}
        - Q & Q \\ Q & - Q
    \end{bmatrix}$ in the HNN case. Therefore, the CONE conditions imply the corresponding MONE conditions.
    
   To prove~\ref{item:discretization}, notice that each discrete-time LMI condition entails the corresponding continuous-time condition with $c = (1-\rho^2)/2>0$ and $2(1-c)=1+\rho^2$. For instance, in the firing-rate MONE case, the LMI~\eqref{eq:fr_disc_mone} implies~\eqref{eq:fr_cts_mone}:
     \begin{align*}
        &\begin{bmatrix} P & -P \\ -P & P \end{bmatrix} \succeq 0 \implies \nonumber \\
        &\begin{bmatrix} -(1 + \rho^2)P & P + W^\top Q \\ P + QW & -2Q \end{bmatrix} \preceq 
        \begin{bmatrix} -\rho^2P & W^\top Q \\ QW & P -2Q \end{bmatrix}\preceq 0.
    \end{align*} 
    \textcolor{black}{The proofs of~\ref{item:duality} differ in each of the four cases. In the continuous-time cases, under the substitution $(P, Q, W) \mapsto (P^{-1}, Q^{-1}, W^\top)$, each firing-rate condition becomes the corresponding Hopfield condition with $W$ replaced by $W^\top$.}
    For the MONE condition~\eqref{eq:fr_cts_mone}, this is proven via the congruence transformation with matrix
    $T = \diag{P^{-1}, Q^{-1}}=T^{\top}\succ 0$: 
    \begin{equation*}
         T \begin{bmatrix} -2(1 - c)P & \star \\ P + QW & -2Q \end{bmatrix} T = 
         \begin{bmatrix} -2(1 - c)P^{-1} & \star \\ W P^{-1} + Q^{-1} & -2Q^{-1} \end{bmatrix}\preceq 0,
    \end{equation*}
     is equivalent to~\eqref{eq:fr_cts_mone} while also being identical to the LMI~\eqref{eq:h_cts_mone} with new variables $W' = W^\top$, $P' = P^{-1}\succ 0$, and $Q' = Q^{-1}\succ 0$. For the continuous-time CONE case, this is proven via the Schur complement: for $Q\succ 0$, the firing-rate inequality~\eqref{eq:fr_cts_cone} reduces to
    \begin{align}
        -2(1 - c)P + W^\top Q W + PQ^{-1}P \preceq 0.
    \end{align}
    Pre- and post-multiplying this inequality by $P^{-1}$ yields
    \begin{align}\label{eq:cone_duality_step}
        -2(1 - c)P^{-1} + P^{-1}W^\top Q WP^{-1} + Q^{-1} \preceq 0,
    \end{align}
    which is equivalent, by the Schur complement, to the inequality~\eqref{eq:h_cts_cone} with $W' = W^\top$, $P' = P^{-1}$, and $Q' = Q^{-1}$.  
    
    A similar trick, based on the Schur complement, applies to the discrete-time MONE condition: upon rewriting it as
    \begin{align}
        P - 2Q + \rho^{-2} QWP^{-1}W^\top Q \preceq 0
    \end{align}
    and pre- and post-multiplying by $Q^{-1}$, one obtains
    \begin{align}\label{eq:mone_duality_step}
        Q^{-1}PQ^{-1} - 2Q^{-1} + \rho^{-2} WP^{-1}W^\top  \preceq 0,        
    \end{align}
    which is equivalent, by the Schur complement, to the Hopfield matrix inequality~\eqref{eq:h_disc_mone} with parameters $W' = W^\top$, $P' = \rho^{-2}P^{-1}$, and $Q' = Q^{-1}$. Finally, for the discrete time CONE case, both conditions are reducible to Schur diagonal stability as shown in Lemma~\ref{lem:discrete_cone_schur}. 
    We show that $W^\top QW \preceq \rho^2 Q$ if and only if $WQ^{-1}W^\top \preceq \rho^2 Q^{-1}$.
    \textcolor{black}{The former inequality can be written as $X^{\top}X\leq I$, where $X=\rho^{-1}Q^{1/2}WQ^{-1/2}$, and the latter as $XX^{\top}\leq I$. The equivalence follows since $X$ and $X^{\top}$ have same singular values. A more direct proof can be obtained by applying Schur complement to the LMI $\left[\begin{smallmatrix} I & X \\ X^{\top} & I \end{smallmatrix}\right]\succeq 0$ with respect to each diagonal block.}    
\end{proof}

\begin{cor}[Equivalence to LMI]
    For a fixed $c > 0$ or $\rho \in [0,1)$, each certificate in Table~\ref{tab:lmi_conditions} is equivalent to a Linear Matrix Inequality (LMI) under a bijective change of variables. Consequently, for fixed rate, the set of admissible triplets $(W,P,Q)$ is characterized by a convex feasibility problem. 
\end{cor}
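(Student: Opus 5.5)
The plan is to treat the eight certificates of Table~\ref{tab:lmi_conditions} case by case. For each one, I will give an explicit invertible change of variables under which the matrix inequality, for fixed $c$ or $\rho$, becomes jointly affine in the new unknowns. The only bilinear products that occur are $QW$ and $W^\top Q$ (FRNN MONE), $W^\top Q W$ (FRNN CONE), and $PW$ and $W^\top P W$ (HNN). The guiding substitution is $Y := QW$ for firing-rate certificates and $Y := PW$ for Hopfield certificates. Since $Q\succ 0$ and $P\succ 0$ are invertible, $(W,P,Q)\mapsto(Y,P,Q)$ is a bijection onto $\{(Y,P,Q): P\succ0,\ Q\succ 0 \text{ diagonal}\}$, with inverse $W=Q^{-1}Y$ or $W=P^{-1}Y$.

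\textbf{Cases with only a cross product.} In~\eqref{eq:fr_cts_mone} and~\eqref{eq:fr_disc_mone}, $W$ enters only through $QW$, so setting $Y=QW$ makes both inequalities immediately linear in $(Y,P,Q)$. Likewise,~\eqref{eq:h_cts_cone} and~\eqref{eq:h_cts_mone} contain $W$ only through $PW$, so $Y=PW$ linearizes them directly.

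\textbf{Cases with quadratic terms.} These are~\eqref{eq:fr_cts_cone},~\eqref{eq:fr_disc_cone},~\eqref{eq:h_disc_cone} and~\eqref{eq:h_disc_mone}, where a term $W^\top Q W = Y^\top Q^{-1} Y$ or $W^\top P W = Y^\top P^{-1}Y$ appears. The plan is to expand each such term via a Schur complement into a larger block matrix that is linear in $(Y,P,Q)$.
\begin{itemize}
\item For~\eqref{eq:fr_cts_cone}, the inequality is equivalent to
\[\begin{bmatrix} -2(1-c)P & P & Y^\top\\ P & -Q & 0\\ Y & 0 & -Q\end{bmatrix}\preceq 0,\]
using $Q\succ0$.
\item For~\eqref{eq:h_disc_mone}, the $(2,2)$ block $Y^\top P^{-1} Y-2Q$ is expanded in the same way with a third block row $[\,0\ \ Y\ \ -P\,]$.
\item For the discrete CONE certificates, the block-diagonal structure splits each condition into one linear inequality and one quadratic one. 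The quadratic part, $W^\top Q W\preceq \rho^2 P$ or $W^\top P W\preceq Q$, becomes $\left[\begin{smallmatrix}-\rho^2P & Y^\top\\ Y & -Q\end{smallmatrix}\right]\preceq 0$ or its analogue with $Q$ replaced by $P$ in the lower block.
\end{itemize}

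Each of these expansions is an equivalence, not just an implication, because the diagonal block being inverted is negative definite. The one point I would check carefully is strictness versus non-strictness: the Schur complement lemma for $\preceq 0$ with a strictly definite pivot block gives exact equivalence, so no feasible triplets are lost.

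The main obstacle is the phrase ``under a bijective change of variables'' once auxiliary blocks are added. The lifted LMI has the same unknowns $(Y,P,Q)$, so the map $(W,P,Q)\leftrightarrow(Y,P,Q)$ is still a bijection, and the feasible set is exactly the image of the original admissible set.

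For the convexity conclusion, the feasible set in $(Y,P,Q)$ is an intersection of LMI-constrained sets with the convex cones $P\succ0$ and $Q\succ 0$ diagonal, so it is convex. Feasibility is therefore a convex (semidefinite) problem. The set of admissible $W$ is recovered as the image $\{Q^{-1}Y\}$ or $\{P^{-1}Y\}$, which matches the statement's claim that the triplets are characterized, through this change of variables, by a convex feasibility problem.
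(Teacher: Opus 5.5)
Your argument is correct, but it is organized differently from the paper's. The paper does not linearize the eight certificates one by one. It uses the firing-rate/Hopfield duality of Theorem~\ref{thm:structural_relationships}, so only one architecture per case needs a linearizing substitution; the other follows through $(P,Q,W)\mapsto(P^{-1},Q^{-1},W^\top)$. It linearizes the continuous-time Hopfield certificates with $S=W^\top P$ (your $Y^\top$) and the discrete-time firing-rate MONE certificate with $S=QW$ (your $Y$). It handles both discrete-time CONE certificates through Lemma~\ref{lem:discrete_cone_schur}, writing $W^\top QW\preceq\rho^2Q$ as $\left[\begin{smallmatrix}\rho^2Q & S^\top\\ S & Q\end{smallmatrix}\right]\succeq 0$.

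For the continuous-time firing-rate certificates, the paper's route therefore ends up in the inverse variables $(P^{-1},Q^{-1},WP^{-1})$ with $2n\times 2n$ LMIs. These are the same variables Proposition~\ref{prop:contracting_state_feedback} needs for feedback synthesis. You stay in $(P,Q,QW)$, which is just as direct for \eqref{eq:fr_cts_mone} but requires a $3n\times 3n$ lifted LMI for \eqref{eq:fr_cts_cone}, and likewise for \eqref{eq:h_disc_mone}.

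In exchange, your route is self-contained and matches the claim about triplets more literally:
\begin{itemize}
\item You never eliminate $P$, so the feasible set in $(Y,P,Q)$ is exactly the image of the admissible $(W,P,Q)$. The paper's Schur-diagonal-stability step only characterizes the pairs $(W,Q)$ for which a suitable $P$ exists (namely $P=Q$ or $P=\rho^{-2}Q$).
\item Your Schur complements always pivot on $-Q$ or $-P$, so they hold verbatim at $\rho=0$. The paper's discrete MONE duality step pivots on $-\rho^2P$ and tacitly needs $\rho>0$.
\end{itemize}

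One small slip: for \eqref{eq:h_disc_cone} the lifted block is $\left[\begin{smallmatrix}-Q & Y^\top\\ Y & -P\end{smallmatrix}\right]\preceq 0$. The upper-left block changes to $-Q$ as well, not only the lower block.
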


\begin{proof}
\color{black}
By the duality between FRNNs and HNNs (Theorem~\ref{thm:structural_relationships}), it suffices to establish the LMI equivalence for one architecture; the other follows via the transformation $(P, Q, W) \mapsto (P^{-1}, Q^{-1}, W^\top)$. For continuous-time Hopfield networks, the substitution $S = W^\top P$ renders both the CONE and MONE conditions linear in $(P, Q, S)$. In discrete time, the firing-rate MONE condition is linear in $(P, Q, S)$ under $S = QW$. Finally, the discrete-time CONE conditions reduce to Schur diagonal stability (Lemma~\ref{lem:discrete_cone_schur}), which admits the LMI reformulation $\left[\begin{smallmatrix} \rho^2 Q & S^\top \\ S & Q \end{smallmatrix}\right] \succeq 0$ via the Schur complement with $S = QW$.
\end{proof}

\subsection{Necessary vs. sufficient conditions for contractivity}

Theorem~\ref{thm:structural_relationships} implies that continuous-time FRNNs and HNNs with MONE nonlinearities admit the largest set of synaptic matrices. Next, we seek to characterize the sharpness of these certificates. We focus on the FRNN case, as HNN results hold via duality. First we relate this certificate to the Lyapunov Diagonal Stability (LDS) of $W{-}I_n$.
\begin{pro}[Relationship with Lyapunov Diagonal Stability]\label{prop:LDS}
    Let $W \in \R^{n\times n}$ be a synaptic matrix for the firing rate neural network~\eqref{eq:cts_firing_rate}. The following hold
    \begin{enumerate}
        \item If $W$ satisfies~\eqref{eq:fr_cts_mone} for $P \succ 0$, diagonal $Q \succ 0$ and $c > 0$, then $W - I_n$ is $Q-$LDS. \label{item:lds_i}
        \item The converse is not true; $W - I_n$ being LDS is not sufficient to guarantee contraction in any fixed norm. \label{item:lds_ii}
    \end{enumerate}
\end{pro}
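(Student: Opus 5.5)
For part~\ref{item:lds_i}, I would argue directly from the certificate~\eqref{eq:fr_cts_mone}, using a congruence that collapses the two block rows. Pre- and post-multiply the matrix in~\eqref{eq:fr_cts_mone} by $\begin{bmatrix} I_n & I_n\end{bmatrix}$ and its transpose. Negative semidefiniteness is preserved, and the result is
\begin{equation*}
-2(1-c)P + 2P + W^\top Q + QW - 2Q = 2cP + (W-I_n)^\top Q + Q(W-I_n) \preceq 0 .
\end{equation*}
Since $c>0$ and $P\succ 0$, this gives $(W-I_n)^\top Q + Q(W-I_n) \preceq -2cP \prec 0$. This is exactly the statement that $W-I_n$ is Lyapunov diagonally stable with the same diagonal matrix $Q$. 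This part is routine.

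For part~\ref{item:lds_ii}, I would first turn ``contraction in a fixed norm for every MONE activation'' into a matrix condition. Take piecewise-linear diagonal activations $\Psi_{[i]}$ whose slope equals any prescribed $d_i\in[0,1]$ on some interval, realized at a common point $Wx+Bu$. Then the Jacobian $-I_n + DW$ with $D=\diag{d_1,\dots,d_n}$ is attained, so contraction with respect to a norm $\norm{\cdot}{}$ forces $\mu(-I_n+DW)<0$ for all $D\in[0,1]^n$. In particular, every such matrix must be Hurwitz. It therefore suffices to exhibit a $W$ and a diagonal $D\in[0,1]^n$ such that $W-I_n$ is LDS but $-I_n+DW$ has an eigenvalue with nonnegative real part. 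The purely linear FRNN $\dot x=-x+DWx$, with $\Psi(z)=Dz$ (which is MONE), then has non-decaying solutions and is not contracting in any norm.

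The main obstacle is constructing this example. My computation suggests $n=2$ will not work: when $W-I_n$ is LDS, the trace and determinant of $-I_n+DW$ keep their Hurwitz signs. So I would go to $n=3$ and use the simplest LDS certificate, $Q=I_n$, i.e.\ $W+W^\top\prec 2I_n$. I would then look for a large skew-symmetric part together with a non-uniform scaling $D$, for instance $D=\diag{1,\varepsilon,1}$. Row scaling destroys the skew-symmetry, and the aim is to make a Routh--Hurwitz quantity of $-I_n+DW$ fail, such as $\det(I_n-DW)<0$ or the $a_1a_2-a_3$ condition. I would first try $W=\tfrac{1}{2}I_n + K$ with $K$ skew and having entries of order $10$. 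Then I would scan $\varepsilon$ numerically, and finally verify the chosen instance by hand.

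If no constant $D$ succeeds, the fallback is the weaker claim about norms. I would show that the family $\{-I_n+DW : D\in[0,1]^n\}$ admits no common norm with negative log-norm: find two vertices $D_1,D_2$ for which the switched linear system $\dot x = (-I_n + D_{\sigma(t)}W)x$ is unstable. This rules out any fixed contraction norm, by the Jacobian characterization above.
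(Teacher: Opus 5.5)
\textbf{Part (i)} is correct and is exactly the paper's argument.

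\textbf{Part (ii) has a genuine gap.} Your primary plan is to find a diagonal $D\in[0,1]^n$ with $-I_n+DW$ not Hurwitz while $W-I_n$ is LDS. That is impossible in every dimension, not just $n=2$. Suppose $Q(W-I_n)+(W-I_n)^\top Q\prec 0$ with $Q$ diagonal. For any diagonal $D$ with entries in $(0,1]$, the diagonal matrix $R=QD^{-1}$ satisfies
\[
R(DW-I_n)+(DW-I_n)^\top R = QW+W^\top Q-2QD^{-1}\preceq QW+W^\top Q-2Q\prec 0 .
\]
If some $d_i=0$, let $J=\{i: d_i>0\}$. After a permutation, $DW-I_n$ is block upper triangular with diagonal blocks $D_JW_{JJ}-I$ and $-I$, and $W_{JJ}-I$ is again LDS with $Q_J$. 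So every linear MONE network $\Psi(z)=Dz$ is Hurwitz and hence contracting in some $D$-dependent norm. Your $n=2$ trace/determinant observation is a special case of this fact. The scan over $W=\tfrac12 I_n+K$, $D=\operatorname{diag}(1,\varepsilon,1)$ cannot succeed, because there $D^{-1}$ itself is a Lyapunov matrix. What fails is a \emph{common} norm for the whole family, so the counterexample must be joint.

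Your fallback has exactly this joint form and is the right idea. However, it is not carried out: you specify no $W$, no vertices, and no instability check, so part (ii) is currently unproved. It closes with the paper's matrix $W=\begin{bmatrix}0&4\\-4&0\end{bmatrix}$, which is LDS with $Q=I_2$, and the vertices $D_1=\operatorname{diag}(1,0)$, $D_2=\operatorname{diag}(0,1)$.
\begin{itemize}
\item Each $A_i=-I_2+D_iW$ is $-I_2$ plus a nilpotent matrix, so $e^{A_2}e^{A_1}=e^{-2}\begin{bmatrix}1&4\\-4&-15\end{bmatrix}$.
\item This product has spectral radius $e^{-2}(7+4\sqrt3)\approx 1.89>1$.
\item If some norm gave $\mu(A_1),\mu(A_2)\le -c<0$, as contraction of the MONE networks $\Psi(z)=D_iz$ would require, the induced norm of the product would be at most $e^{-2c}<1$. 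That contradicts the spectral radius exceeding $1$.
\end{itemize}

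\textbf{Comparison with the paper.} The paper uses the same $W$ and the same two vertices but argues differently. It shows that $2P-PD_iW-W^\top D_iP\succ 0$ for $i=1,2$ forces $p_{22}>4p_{11}$ and $p_{11}>4p_{22}$. That rules out only weighted Euclidean norms $\norm{\cdot}{P}$. Your switched-system route, once made concrete, proves the stronger ``any fixed norm'' claim as stated.
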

\begin{proof}
    To prove~\ref{item:lds_i}, we left and right multiply the matrix inequality~\eqref{eq:fr_cts_mone} by the matrix $[I_n, I_n] \in \R^{n \times 2n}$ obtaining,
    \begin{align}
    -2(1-c) P + P + W^\top Q + P + QW - 2Q &\preceq 0 \\
    \implies W^\top Q + QW  \preceq 2Q - 2cP &\prec 2Q.
    \end{align}
    This proves that the $W - I_n$ is $Q-$LDS. 
    
    The claim in~\ref{item:lds_ii} is established by the following counterexample\footnote{This counterexample can be found in the literature, e.g.,~\cite[Theorem 7]{LK-ME-JJES:22}, but we include a proof for completeness.}. Consider a skew-synaptic weight matrix $ W = \begin{bmatrix} 0 & 4 \\ -4 & 0 \end{bmatrix}$. Since $W + W^\top  = 0 \prec 2 I$, $W-I_n$  is LDS. For contraction in the norm $\norm{\cdot}{P}$, the matrix $\mcM(D) = 2P - PDW - W^\top DP $ must be positive definite for all diagonal $D \in [0,I]$, e.g., see~\cite{VC-AG-AD-GR-FB:23c}.  Testing the vertices $D_1 = \diag{1, 0}$ and $D_2 = \diag{0, 1}$,
    \begin{align}
        \mcM(D_1) &= \begin{bmatrix}
            2 p_{11} & 2p_{12} - 4 p_{11} \\
            2p_{12} - 4 p_{11} & 2p_{22} - 8p_{12} 
        \end{bmatrix},  \\
        \mcM(D_2) &= \begin{bmatrix}
            2p_{11} - 8 p_{12} & 2p_{12} - 4 p_{22} \\
            2p_{12} - 4 p_{22} & 2 p_{22}
        \end{bmatrix} . 
    \end{align}
    Since the determinants of these matrices need to be positive, 
    \begin{itemize}
        \item $p_{11} p_{22} > p_{12}^2 + 4p_{11}^2 \implies p_{22} > 4p_{11}$.
        \item $p_{11} p_{22} > p_{12}^2 + 4p_{22}^2 \implies p_{11} > 4p_{22}$.
    \end{itemize}
    These conditions contradict to the condition $P \succ 0$. 
\end{proof}

The next proposition shows that the FRNN MONE certificate is tight for symmetric matrices, recovering the
optimal contraction rate and contractivity characterizations from~\cite{VC-AG-AD-GR-FB:23c}. \textcolor{black}{Note that, in the case of $\Psi(x)=x$, the FRNN and HNN linear dynamics can only be contracting (i.e., stable) when $W-I$ is Hurwitz, that is, the spectral abscissa of $W$ is less than $1$. For $W=W^{\top}$, this condition is also sufficient.}
\begin{pro}[Log-optimal characterization for symmetric weights]\label{prop:symm_matrix}
    Let $W = W^\top \in \R^{n \times n}$ with spectral abscissa $\alpha$. 
    \begin{enumerate}
        \item For $\alpha\leq 0$, the FRNN MONE certificate~\eqref{eq:fr_cts_mone} holds for ${P = - W}$, ${Q = I_n}$, and ${c = 1}$. 
        \item For $0 < \alpha < 1$, there exists $P \succ 0$ such that $ W = P ^{1/2} - (4\alpha)^{-1}P$, and the FRNN MONE certificate~\eqref{eq:fr_cts_mone} holds for this  $P$, ${Q = 4\alpha I_n}$, and $c = 1 - \alpha(W)$. 
    \end{enumerate}
\end{pro}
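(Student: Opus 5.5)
The plan is a direct construction in both cases: plug the proposed $(P,Q,c)$ into~\eqref{eq:fr_cts_mone} and check negative semidefiniteness. Since $W=W^\top$, every matrix involved is a function of $W$, so I will work in the eigenbasis $W=U\Lambda U^\top$ with $U$ orthogonal and $\Lambda=\diag{\lambda_1,\dots,\lambda_n}$, $\lambda_i\le\alpha$.

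\emph{Case (i), $\alpha\le 0$.} Take $P=-W$, $Q=I_n$, $c=1$. Then $P\succeq 0$, with $P\succ 0$ when $\alpha<0$; I will note the degenerate case $\alpha=0$, where $P$ is only semidefinite, as a boundary issue. With $c=1$ the $(1,1)$ block $-2(1-c)P$ vanishes. The off-diagonal block is $P+W^\top Q=-W+W=\0_{n\times n}$. The left-hand side of~\eqref{eq:fr_cts_mone} therefore reduces to $\diag{\0_{n\times n},-2I_n}\preceq 0$, and the certificate holds.

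\emph{Case (ii), $0<\alpha<1$.} The first step is to construct $P$. I look for $P=U\diag{p_1,\dots,p_n}U^\top$ with $p_i=s_i^2$, where $s_i>0$ solves the scalar equation
\[
s-\tfrac{s^2}{4\alpha}=\lambda_i,
\qquad\text{i.e.}\qquad
s^2-4\alpha s+4\alpha\lambda_i=0 .
\]
Its root $s_i=2\alpha+2\sqrt{\alpha(\alpha-\lambda_i)}$ is real because $\lambda_i\le\alpha$, and positive because $\alpha>0$. Since $P^{1/2}=U\diag{s_i}U^\top$ commutes with $P$, this choice gives $W=P^{1/2}-(4\alpha)^{-1}P$ with $P\succ 0$.

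Next I substitute $Q=4\alpha I_n$ and $c=1-\alpha$, so that $2(1-c)=2\alpha$. The key simplification is in the off-diagonal block:
\[
P+W^\top Q=P+4\alpha W=P+4\alpha P^{1/2}-P=4\alpha P^{1/2}.
\]
The LMI thus becomes
\[
\begin{bmatrix} -2\alpha P & 4\alpha P^{1/2}\\ 4\alpha P^{1/2} & -8\alpha I_n\end{bmatrix}\preceq 0 .
\]
The $(2,2)$ block is $-8\alpha I_n\prec 0$, so by the Schur complement the LMI is equivalent to
\[
-2\alpha P+\frac{16\alpha^2 P}{8\alpha}=0\preceq 0 ,
\]
which holds with equality. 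This establishes the certificate with rate $c=1-\alpha$.

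Optimality is the interpretive part of the claim. It follows from the remark preceding the statement: when $\Psi(x)=x$, which is a MONE nonlinearity, the rate cannot exceed $1-\alpha$. Hence the certificate recovers the optimal rate.

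The only step requiring care is solving for $P$. One must pick the correct root of the quadratic, the one that keeps $s_i>0$, and use commutativity to justify $P^{1/2}=U\diag{s_i}U^\top$. Everything else is a one-line Schur complement computation.
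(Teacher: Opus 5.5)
Your proposal is correct and follows the paper's proof: substitute the given $(P,Q,c)$ into~\eqref{eq:fr_cts_mone} and reduce via the Schur complement to $-2(1-c)P+2\alpha P=0$. The only difference is that you construct $P$ explicitly from the eigendecomposition of $W$ (via the positive root $s_i=2\alpha+2\sqrt{\alpha(\alpha-\lambda_i)}$), whereas the paper cites an external lemma for its existence; your remark that $P=-W$ is only semidefinite at $\alpha=0$ is accurate, and the paper's proof of case (i) likewise covers only $\alpha<0$.
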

\begin{proof}
    Consider the Schur complement of~\eqref{eq:fr_cts_mone}. 
    \begin{align}
        -2(1 - c)P + \frac{1}{2}(P + W^\top Q)Q^{-1}(P + QW) \preceq 0.\label{eq:schur_fr_cts_mone}
    \end{align}
If $\alpha<0$, this inequality is satisfied by $P = - W\succ 0$, $Q = I_n$ and $c = 1$ into~\eqref{eq:schur_fr_cts_mone}. If $\alpha\in (0,1)$, it can be shown~\cite[Lemma 10]{VC-AG-AD-GR-FB:23c} that the equation ${W = P^{1/2} - \frac{1}{4\alpha}P}$ has a solution $P \succ 0$.  Choosing $Q = 4\alpha I$, and substituting $W$ and $Q$ into the LHS of~\eqref{eq:schur_fr_cts_mone} and $c=1-a$ yields ${-2(1 - c)P + 2\alpha P}=0$.
\end{proof}
\begin{remark}
    The equivalent results to Proposition~\ref{prop:symm_matrix} for HNNs hold via duality (Theorem~\ref{thm:structural_relationships}).
\end{remark}

Having shown the sharpness of the FRNN and HNN MONE certificates, and their optimality for the class of symmetric matrices, we define the notion of S-contraction corresponding to these certificates.

\begin{defn}[S-contraction] \label{def:s_contraction}
An FRNN (or HNN) with a MONE nonlinearity is said to be \emph{S-contracting} if its synaptic matrix satisfies the FRNN(or HNN) MONE contraction certificate~\eqref{eq:fr_cts_mone} (or~\eqref{eq:h_cts_mone}).
\end{defn}
\begin{remark}[Sharpness and Complexity]
The term \emph{S-contraction} signifies the use of the \emph{S-procedure} to incorporate incremental multiplier matrices. This characterization is:
\begin{enumerate}
\item \emph{Polynomial-time verifiable:} While certifying absolute contractivity for neural networks involves checking $2^n$ matrix inequalities~\cite{VC-AG-AD-GR-FB:23c}, S-contraction is a convex feasibility problem solvable in polynomial time via LMIs. 
\item \emph{Sharp:} The conditions are \emph{sharp} as they constitute the tightest constraints obtainable via the S-lemma using incremental multipliers for absolute contractivity.
\end{enumerate}
\end{remark}

\subsection{Interconnections of RNNs}
The study of interconnected recurrent neural networks (RNNs) is motivated by both biological~\cite{LK-ME-JJES:22} and engineering~\cite{WDA-ALB-MF:24} contexts. We first establish a necessary condition for a network of interconnected RNNs to be S-contracting.
\begin{thm}[Necessary conditions for S-contraction of interconnections of FRNNs]\label{thm:network_impossibility}
Consider an interconnection of $N$ FRNN subsystems~\eqref{eq:cts_firing_rate} with MONE nonlinearities where the $i$th subsystem is given by
\begin{align}
    \dot{x}_i &= - x_i + \Psi_i (W_i x_i + B_i u_i), \\
    y_i &= C_ix_i + D_i u_i,
\end{align}
where $x_i \in \R^{n_i}, u_i \in \R^{m_i}, y_i \in \R^{k_i}$. Consider an interconnection of these systems of the form $    u_i = \sum\nolimits_{j = 1}^N A_{ij}y_j$. Let $\mathbf{A}$ be the block matrix whose $(i,j)$th block is $A_{ij}$. Further, let $W = \diag{W_1, \dots, W_N}$, $B = \diag{B_1, \dots, B_N}$, $C = \diag{C_1, \dots, C_N}$, and $D = \diag{D_1, \dots, D_N}$.

If the interconnection is well-posed, meaning $I_m - \mathbf{A}D$ is invertible (where $m = \sum_{i =1}^N m_i$), then, for ${\Delta=(I_m - \mathbf{A}D)^{-1}\mathbf{A}}$,  
\begin{enumerate}
    \item The interconnected system is an FRNN with synaptic matrix $W_{net} = W + B\Delta C$. \label{item:interconnected_FRNN}
    \item Further, if the interconnected FRNN is S-contracting, then for each $i$, the FRNN with the synaptic matrix $W_i + B_i \Delta_{ii} C_i$ is S-contracting, where $\Delta_{ii}$ is the $i$th principal diagonal block of $\Delta$.\label{item:necessary_condition}
\end{enumerate}
\end{thm}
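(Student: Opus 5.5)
For part~\ref{item:interconnected_FRNN} I would eliminate the inputs algebraically. Stack the subsystems as $x=(x_1,\dots,x_N)$, $u=(u_1,\dots,u_N)$ and $y=(y_1,\dots,y_N)$, and let $\Psi=\operatorname{diag}(\Psi_1,\dots,\Psi_N)$. The interconnection reads $u=\mathbf{A}y$ and $y=Cx+Du$, so $(I_m-\mathbf{A}D)u=\mathbf{A}Cx$. Well-posedness then gives $u=\Delta Cx$. Substituting into the stacked dynamics $\dot x=-x+\Psi(Wx+Bu)$ yields $\dot x=-x+\Psi((W+B\Delta C)x)$. The stacked map $\Psi$ is diagonal and slope-restricted in $[0,1]$, because each $\Psi_i$ is. Hence the closed loop is an FRNN with MONE activation and synaptic matrix $W_{net}=W+B\Delta C$. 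If exogenous inputs are present, they only add a constant term inside $\Psi$ and do not affect the synaptic matrix.

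For part~\ref{item:necessary_condition} I would use the fact that principal submatrices of a negative semidefinite matrix are negative semidefinite. Suppose~\eqref{eq:fr_cts_mone} holds for $W_{net}$ with some $P\succ0$, diagonal $Q\succ0$ and $c>0$. Let $E_i\in\R^{n\times n_i}$ be the selection matrix for the coordinates of subsystem $i$, and set $T_i=\operatorname{diag}(E_i,E_i)$. Then $T_i^\top(\cdot)T_i\preceq 0$ applied to the certificate gives
\[
\begin{bmatrix} -2(1-c)P_{ii} & P_{ii}+E_i^\top W_{net}^\top Q E_i\\ P_{ii}+E_i^\top Q W_{net}E_i & -2Q_i\end{bmatrix}\preceq 0,
\]
where $P_{ii}=E_i^\top PE_i\succ0$ and $Q_i=E_i^\top QE_i$ is diagonal and positive definite.

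The step needing care is the off-diagonal block. Because $Q$ is diagonal, and in particular block diagonal, $QE_i=E_iQ_i$, so $E_i^\top QW_{net}E_i=Q_iE_i^\top W_{net}E_i$. Moreover $W$, $B$ and $C$ are block diagonal, so the $(i,i)$ block of $W_{net}$ is $E_i^\top W_{net}E_i=W_i+B_i\Delta_{ii}C_i$. Substituting these identities shows that the compressed inequality is exactly the MONE certificate~\eqref{eq:fr_cts_mone} for $W_i+B_i\Delta_{ii}C_i$, with the same rate $c$ and weights $P_{ii}$, $Q_i$. By Definition~\ref{def:s_contraction}, the $i$th subsystem with this synaptic matrix is therefore S-contracting.

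I expect no serious obstacle. The only point to state explicitly is that the diagonal structure of $Q$ and the block-diagonal structure of $W,B,C$ are what make compression commute with the products $QW_{net}$ and $B\Delta C$. A full $P$ causes no difficulty, since it enters the certificate only linearly and its compression stays positive definite.
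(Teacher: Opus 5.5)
Your proposal is correct and follows the paper's proof. You eliminate $u$ via well-posedness to obtain $W_{net}=W+B\Delta C$, then compress the MONE certificate to the $i$th diagonal blocks, using that principal submatrices of a negative semidefinite matrix are negative semidefinite. Your explicit use of $QE_i=E_iQ_i$ and of the block-diagonal structure of $W,B,C$ to identify the off-diagonal block as $Q_i(W_i+B_i\Delta_{ii}C_i)$ spells out what the paper leaves implicit when it ``extracts the $i$th diagonal block from each of the four entries.''
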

\begin{proof}
    Let $x = [x_1^\top, \dots, x_N^\top]^\top$, $u = [u_1^\top, \dots, u_N^\top]^\top$, and $y = [y_1^\top, \dots, y_N^\top]^\top$ denote the stacked network vectors. The decoupled open-loop dynamics may be written as
    \begin{align}
        \dot{x} &= -x + \Psi(Wx + Bu)\label{eq:netw-state} \\
        y &= Cx + Du, \label{eq:netw-out}
    \end{align}
    where  $\Psi(z) = [\Psi_1(z_1)^\top, \dots, \Psi_N(z_N)^\top]^\top$. The interconnection imposes the relationship $u = \mathbf{A}y$. 
    When $I_m - \mathbf{A}D$ is invertible, substituting the output from~\eqref{eq:netw-out} yields
    \begin{align}
        u = (I_m - \mathbf{A}D)^{-1} \mathbf{A} C x.
    \end{align}
    Substituting this into~\eqref{eq:netw-state} and denoting $\Delta =(I_m - \mathbf{A}D)^{-1} \mathbf{A}$, the system turns into the FRNN with synaptic matrix $W_{net}$:
\begin{align}\label{eq:interconnected_FRNN}
        \dot{x} &= -x + \Psi((W + B\Delta C)x),
    \end{align}
   which proves~\ref{item:interconnected_FRNN}. Now, if the system~\eqref{eq:interconnected_FRNN} is S-contracting, 
    \begin{align} 
        \begin{bmatrix} -2(1-c)P & P + (W_{net})^\top Q \\ P + Q W_{net} & -2Q \end{bmatrix} \preceq 0.
    \end{align}
    Any principal submatrix of a negative semidefinite matrix must also be negative semidefinite. Extracting the $i$th diagonal block from each of the four entries results in the inequality
    \begin{align} 
        \begin{bmatrix} -2(1-c)P_{ii} & P_{ii} + (W_{net})_{ii}^\top Q_i \\ P_{ii} + Q_i (W_{net})_{ii} & -2Q_i \end{bmatrix} \preceq 0.
    \end{align}
    Note that $(W_{net})_{ii} = W_i + B_i \Delta_{ii} C_i$, where $\Delta_{ii}$ is the $i$th diagonal block of $\Delta$. Therefore, an FRNN with  synaptic matrix $W_i + B_i \Delta_{ii} C_i$ is S-contracting, with $P = P_{ii}, Q = Q_i$. 
\end{proof}

\begin{remark}
    Theorem~\ref{thm:network_impossibility} shows that an interconnection of FRNNs can be S-contracting only if each FRNN is S-contracting either independently or via static output feedback.  When there is no feed-through ($D_i = 0$) and no self-loops ($A_{ii} = 0$), output feedback is unavailable, so each FRNN must be independently S-contracting. 
    Parallel results hold for the interconnected HNNs~\eqref{eq:cts_hopfield}, as well as in discrete time.
\end{remark}

\begin{remark}
     Interconnections of RNNs are particularly useful in control design~\cite{WDA-ALB-MF:24}. \textcolor{black}{Theorem~\ref{thm:network_impossibility} shows that a static output interconnection involving open-loop unstable subsystems can only be contracting if every such subsystem is stabilized by static output feedback, which is an inherently nonconvex problem. This highlights the practical utility of our separation principle (Theorem~\ref{thm:separation_principle}), which relies on a dynamic observer-based controller that can  be designed via  LMI feasibility.} 
\end{remark}

\subsubsection*{Homogeneous networks}
Next, we discuss the special case where the subsystems are identical. Such interconnections arise in graph neural network architectures, such as implicit graph neural networks~\cite{FG-HC-WZ-SS-LEG:20}, continuous graph neural ODEs~\cite{LPX-MQ-JT:20}, and single-layer graph convolutional ODEs~\cite{MP-SM-JP-AY-HA-JP:19}. Consider the Graph FRNN:
\begin{align}\label{eq:cts_graph_implicit_dynamics}
    \dot{X} &= - X + \Psi(WXA + BU). 
\end{align}
Here $X \in \R^{m \times n}$ represents the state matrix, and $U \in \R^{p \times n}$ represents the input node features, $W \in \R^{m \times m}$ is the synaptic matrix of each FRNN, $B \in \R^{m \times p}$ is the input matrix, and $A \in \R^{n \times n}$ is the adjacency matrix. Although the graph FRNN~\eqref{eq:cts_graph_implicit_dynamics} is expressed in matrix form, its vectorized form is structurally identical to a standard FRNN with the synaptic matrix $A^{\top}\kron W$: introducing the state $\tilde X=\vectorized{X}$, one has
\begin{align}
    \dot{\tilde X}  &=  - \tilde{X} + \Psi((A^\top \kron W)\tilde{X} + \vectorized{BU}). \label{eq:cts_graph_implicit_dynamics_vectorized}
\end{align}
Next, we study the S-contraction of Graph FRNNs.
\begin{thm}[Contraction of Graph FRNN]\label{thm:ignn_contraction}
Consider the dynamics~\eqref{eq:cts_graph_implicit_dynamics_vectorized} with a MONE nonlinearity $\Psi$. If
    \begin{enumerate}[label=\textup{(A\arabic*)},leftmargin=*]
        \item~\label{asP:adj_symmetry} The adjacency matrix satisfies $A = A^\top$.
        \item~\label{asP:adj_spectrum} The eigenvalues of $A$ lie in $[0,1]$.
        \item~\label{asP:graph_W_condn} $W$ satisfies matrix inequality~\eqref{eq:fr_cts_mone} for $P \succ 0$, diagonal $Q \succ 0$, rate $c > 0$, which satisfy $PQ^{-1}P \preceq 4(1-c)P$.
    \end{enumerate}
    Then, the dynamics~\eqref{eq:cts_graph_implicit_dynamics_vectorized} are strongly infinitesimally contracting with rate $c$ in the norm $\norm{\cdot}{I_n \kron P}$.
\end{thm}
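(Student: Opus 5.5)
The plan is to verify directly that the vectorized system~\eqref{eq:cts_graph_implicit_dynamics_vectorized} satisfies the FRNN MONE certificate~\eqref{eq:fr_cts_mone}. We use synaptic matrix $A^\top\kron W = A\kron W$ (by~\ref{asP:adj_symmetry}) and take the candidate multipliers $\tilde P = I_n\kron P\succ 0$ and $\tilde Q = I_n\kron Q$. The matrix $\tilde Q$ is diagonal and positive definite because $Q$ is. Theorem~\ref{thm:main_fr_h_conditions} then yields contraction with rate $c$ in $\norm{\cdot}{I_n\kron P}$. The input term $\vectorized{BU}$ is constant and plays no role.

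First, I will decouple the certificate along the graph spectrum. By~\ref{asP:adj_symmetry}, write $A = U\Lambda U^\top$ with $U$ orthogonal and $\Lambda=\diag{\lambda_1,\dots,\lambda_n}$. Applying the congruence $T=\diag{U\kron I_m, U\kron I_m}$ to the $2mn\times 2mn$ certificate matrix, and using that $U\kron I_m$ commutes with $I_n\kron P$ and $I_n\kron Q$, turns the off-diagonal block $I_n\kron P + (A\kron W^\top)(I_n\kron Q)$ into $I_n\kron P + \Lambda\kron W^\top Q$. After a permutation grouping coordinates by eigenvalue, the certificate becomes block diagonal. It is therefore equivalent to requiring, for every eigenvalue $\lambda$ of $A$,
\begin{equation*}
\mathcal{M}(\lambda):=\begin{bmatrix} -2(1-c)P & P+\lambda W^\top Q\\ P+\lambda QW & -2Q\end{bmatrix}\preceq 0 .
\end{equation*}

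Next, I will show $\mathcal{M}(\lambda)\preceq 0$ for all $\lambda\in[0,1]$, which suffices by~\ref{asP:adj_spectrum}. Since $\mathcal{M}(\lambda)$ is affine in $\lambda$ and the negative semidefinite cone is convex, it is enough to check the endpoints. At $\lambda=1$, the inequality is exactly~\eqref{eq:fr_cts_mone}, which holds by~\ref{asP:graph_W_condn}. At $\lambda=0$, we take the Schur complement with respect to the block $-2Q\prec 0$. This gives the equivalent condition $-2(1-c)P+\tfrac12 PQ^{-1}P\preceq 0$, that is, $PQ^{-1}P\preceq 4(1-c)P$, which is precisely the extra hypothesis in~\ref{asP:graph_W_condn}. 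Convex combination then gives $\mathcal{M}(\lambda)=(1-\lambda)\mathcal{M}(0)+\lambda\mathcal{M}(1)\preceq 0$.

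The main step needing care is the $\lambda=0$ endpoint. The reason is that~\eqref{eq:fr_cts_mone} alone does not imply the certificate for the zero synaptic matrix with the same $(P,Q,c)$. This is exactly why the additional constraint $PQ^{-1}P\preceq 4(1-c)P$ is imposed, and the Schur complement shows it is exactly what is needed. The remaining work is bookkeeping: checking that the Kronecker congruence and the permutation preserve semidefiniteness and map $\tilde P,\tilde Q$ to block-diagonal copies of $P,Q$. Both steps are routine, since the transformations are orthogonal and $\tilde P,\tilde Q$ have identity factors on the graph side.
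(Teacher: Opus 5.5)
Your proposal is correct and follows the paper's proof essentially step for step. You substitute $I_n\kron P$ and $I_n\kron Q$ into \eqref{eq:fr_cts_mone}, diagonalize $A$ via the orthogonal congruence $\diag{U\kron I_m, U\kron I_m}$, split into the per-eigenvalue LMIs, and use affinity in $\lambda$ to reduce to the endpoints: $\lambda=1$ (the certificate itself) and $\lambda=0$ (the Schur complement giving $PQ^{-1}P\preceq 4(1-c)P$). Your $U\kron I_m$ is the dimensionally correct factor, where the paper writes $U\kron I_n$, and your explicit permutation argument makes precise the block-splitting that the paper only sketches.
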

\begin{proof}
    We show that the synaptic matrix $A^\top \kron W$ from~\eqref{eq:cts_graph_implicit_dynamics_vectorized} satisfies~\eqref{eq:fr_cts_mone} with $P$ and $Q$ replaced by $I_n \kron P$ and $I_n \kron Q$. Substituting these matrices into~\eqref{eq:fr_cts_mone} and applying the mixed product property (see Lemma~\ref{lem:mixed_product} in Appendix) yields
    \begin{align} \label{eq:graph_step_1}
        \begin{bmatrix}
        -2(1 - c)I_n \kron P & I_n \kron P + A \kron W^\top Q \\
        I_n \kron P + A \kron QW & -2 I_n \kron Q
        \end{bmatrix} \preceq 0,
    \end{align}
    where we have used $A = A^\top$. Since $A$ is symmetric, it admits the eigendecomposition $A = U \Lambda U^\top$, where each entry of $\Lambda$ (being an eigenvalue of $A$) lies in $[0,1]$. Applying the unitary transform $T = \diag{U \kron I_n, U \kron I_n}$ to~\eqref{eq:graph_step_1} and utilizing $U^\top U = UU^\top = I_n$ yields
    \begin{align}\label{eq:graph_step_2}
        \begin{bmatrix}
        -2(1 - c)I_n \kron P & * \\
        I_n \kron P + \Lambda \kron QW & -2I_n \kron Q
        \end{bmatrix} \preceq 0.
    \end{align}
    \textcolor{black}{It can be shown (e.g., by applying the Schur complement and examining the resulting block-diagonal matrix) that, since $\Lambda$ is diagonal, the condition~\eqref{eq:graph_step_2} splits into $n$ independent LMIs}
    \begin{align}
        \begin{bmatrix}
        -2(1 - c)P &  P + \lambda_i W^\top Q \\
         P + \lambda_i QW & -2Q
        \end{bmatrix} \preceq 0\;\forall i=1,\ldots,n,
    \end{align}
    where $\lambda_i$ is the $i$th diagonal element of $\Lambda$. Since this condition is linear in $\lambda_i$, and $\lambda_i \in [0,1]$, it suffices to verify it at $\lambda = 0$ and $\lambda = 1$. At $\lambda = 1$, this condition is identical to~\eqref{eq:fr_cts_mone}. At $\lambda = 0$, the condition is implied by~\ref{asP:graph_W_condn}, since
    \begin{align*}
        \begin{bmatrix}
        -2(1 - c)P &  P  \\
         P  & -2Q
        \end{bmatrix} \preceq 0 \iff PQ^{-1}P \preceq 4(1 - c)P.
    \end{align*}
    This finishes the proof.
\end{proof}

\begin{remark}[Computational Tractability for Large Graphs]
In applications such as implicit neural networks, only the fixed point of the FRNN dynamics is of interest, not the trajectory. In small-scale systems, alternative methods such as Peaceman-Rachford operator splitting can be used to find this fixed point, and accommodate a larger set of synaptic matrices than S-contraction requires. However, this approach scales poorly, particularly in the context of graph neural networks~\cite{JB-QW-CDH-BW:23}. Peaceman-Rachford requires to invert the matrix $(I_{nm} -  \alpha A^\top \kron W)$, for some constant $\alpha \in \R$. This operation is computationally expensive ($O(m^3n^3)$ in time, $O(m^2n^2)$ in memory), destroys the natural sparse structure of the adjacency matrix, and precludes decentralized execution. While the system~\eqref{eq:cts_graph_implicit_dynamics} restricts the set of synaptic matrices, it is computationally efficient
and permits a distributed implementation.
\end{remark}

\begin{remark}[Assumptions on Graph Structure]
We assume that the underlying graph is undirected. Restricting the eigenvalues to be in $[0,1]$ is common in Graph RNNs~\cite{TNK-MW:16}. Given an arbitrary adjacency matrix $\tilde{A}$, one can introduce the normalized adjacency matrix $A = \frac{1}{2}(\tilde D^{-1/2}\tilde{A}\tilde D^{-1/2} + I_n)$, where $\tilde D$ is the diagonal weight degree matrix for $\tilde{A}$. 
\end{remark}
Extending our contraction guarantees to directed graphs remains an open problem. 
{\color{black}
\begin{remark}[Comparisons to Small gain based results]
    General converse results for heterogeneous networks may be obtained via small gain theorems for contractivity~\cite[Theorem 3.23]{FB:26-CTDS} and exponential stability~\cite{CG-JDS-MAM:26}. Typically, these results treat other systems as disturbances, and as such, yield convergence rates worse than the rates of each subsystem, which we are able to maintain in the homogeneous S-contracting case.
\end{remark}
}

\begin{arxiv}
    An extension of our results to diagonally symmetrizable graphs is presented in Appendix~\ref{app:diagonal_symmetrizable_graphs}
\end{arxiv}

\section{Applications to Control design with RNNs}\label{sec:control_design}

In this section, we combine the separation principle (Theorem~\ref{thm:separation_principle}) with our S-contraction certificates (Theorem~\ref{thm:main_fr_h_conditions}) to design a reference tracking controller for a plant modeled by an RNN. We propose an LMI-based method for designing a full-state feedback controller that renders the plant contracting, alongside a contracting observer; similar contraction conditions for discrete-time models appear in~\cite{WDA-ALB-MF:24, AN-MSH-RDB:22}. We also establish necessary and sufficient feasibility conditions for the derived LMIs, drawing direct parallels to classical linear stabilizability and detectability. In accordance with Theorem~\ref{thm:separation_principle}, the interconnection of this observer and controller is exponentially stable. Finally, in order to achieve reference tracking, we design a low-gain integral controller motivated by~\cite{JWSP:21} and relying on Corollary~\ref{cor:separation_moving_target}.  
Throughout the section, the continuous-time FRNN model~\eqref{eq:cts_firing_rate} with MONE nonlinearities is considered; the analysis extends to all models from Table~\ref{tab:lmi_conditions}.

\subsection{Contraction via full state feedback}

We begin with an LMI-based design for a full-state feedback controller.

\begin{pro}[Contraction via state feedback controller]\label{prop:contracting_state_feedback}    Consider a plant described by an FRNN~\eqref{eq:cts_firing_rate} with a MONE nonlinearity, with access to the full state, i.e. $C = I_n$. For the control law $u = Kx$, for $K \in \R^{m \times n}$, the following hold:
    \begin{enumerate}
        \item The closed-loop system is an FRNN with state $x$, and the closed-loop weight matrix is $W_{cl} = W + BK$.
        \item The closed loop system is S-contracting for a given contraction rate $c \in (0,1]$ if and only if there exist a positive definite $X \in \R^{n\times n}$, a positive diagonal matrix $D$, and a matrix $Y \in \R^{m \times n}$ satisfying the LMI:
        \begin{equation}
            \label{eq:synthesis_lmi_sf}
            \begin{bmatrix}
            -2(1 - c)X & D + XW^\top + Y^\top B^\top \\
            D + WX + BY & -2D
            \end{bmatrix} \preceq 0.
        \end{equation}
        If~\eqref{eq:synthesis_lmi_sf} holds, the state feedback gain guaranteeing contraction is given by $K = YX^{-1}$.
    \end{enumerate} 
\end{pro}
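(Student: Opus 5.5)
Part (i) is immediate: substituting $u=Kx$ into \eqref{eq:cts_firing_rate} gives $\dot x=-x+\Psi(Wx+BKx)=-x+\Psi((W+BK)x)$. This is an FRNN with the same MONE nonlinearity and synaptic matrix $W_{cl}=W+BK$.

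For part (ii), by Definition~\ref{def:s_contraction} the closed loop is S-contracting with rate $c$ if and only if there exist $P\succ0$ and diagonal $Q\succ0$ satisfying \eqref{eq:fr_cts_mone} with $W$ replaced by $W+BK$. This inequality is bilinear in $(P,Q,K)$. I will linearize it with the same congruence used for the duality in Theorem~\ref{thm:structural_relationships}\ref{item:duality}. Let $T=\diag{P^{-1},Q^{-1}}$, which is symmetric and invertible. Then \eqref{eq:fr_cts_mone} for $W_{cl}$ holds if and only if
\begin{equation*}
\begin{bmatrix}
-2(1-c)P^{-1} & Q^{-1}+P^{-1}W_{cl}^\top\\
Q^{-1}+W_{cl}P^{-1} & -2Q^{-1}
\end{bmatrix}\preceq 0 .
\end{equation*}
To check the off-diagonal block, compute $P^{-1}(P+W_{cl}^\top Q)Q^{-1}=Q^{-1}+P^{-1}W_{cl}^\top$, and similarly for its transpose.

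Next set $X=P^{-1}\succ0$ and $D=Q^{-1}$, which is positive diagonal. Also set $Y=KX$, so that $W_{cl}X=WX+BY$. With these substitutions the displayed inequality is exactly \eqref{eq:synthesis_lmi_sf}.

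For the converse, suppose $(X,D,Y)$ satisfies \eqref{eq:synthesis_lmi_sf}. Define $K=YX^{-1}$, $P=X^{-1}$ and $Q=D^{-1}$; then $Y=KX$, so the substitution is bijective. Applying the inverse congruence $\diag{P,Q}$ recovers \eqref{eq:fr_cts_mone} for $W+BK$, and hence S-contraction with rate $c$. Theorem~\ref{thm:main_fr_h_conditions} then gives strong infinitesimal contractivity in $\norm{\cdot}{P}$. The restriction $c\in(0,1]$ only makes $-2(1-c)X\preceq0$ consistent and plays no other role.

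I expect no real obstacle. The one point needing care is that the change of variables $(P,Q,K)\leftrightarrow(X,D,Y)$ is a bijection between the feasible sets. This holds because $X\succ0$ and $D\succ0$ are invertible, the inverse of a positive diagonal matrix is positive diagonal, and congruence by an invertible matrix preserves negative semidefiniteness in both directions. Together these give the claimed ``if and only if''.
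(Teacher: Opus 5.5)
Your proof is correct and follows essentially the same route as the paper's: the congruence by $\operatorname{diag}(P^{-1},Q^{-1})$, then the change of variables $X=P^{-1}$, $D=Q^{-1}$, $Y=KX$, with the gain recovered as $K=YX^{-1}$. Your explicit converse via the inverse congruence $\operatorname{diag}(P,Q)$ only spells out the bijectivity of the variable change, which the paper leaves implicit.
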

\begin{proof}
    The proof is presented in Appendix~\ref{app:proof_feedback}
\end{proof}

\begin{cor}[Necessity of linear stabilizability]
    An FRNN with synaptic matrix $W$, and input matrix $B$ can be rendered S-contracting via full-state feedback with some $c > 0$ only if $(W - I_n, B)$ is stabilizable.
\end{cor}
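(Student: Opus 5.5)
The plan is to reduce the claim to Proposition~\ref{prop:LDS}\ref{item:lds_i}, applied to the closed-loop synaptic matrix. Suppose some gain $K$ renders the FRNN S-contracting under full-state feedback with rate $c>0$. By Proposition~\ref{prop:contracting_state_feedback}, the closed loop is then an FRNN with weight matrix $W_{cl}=W+BK$. This matrix satisfies~\eqref{eq:fr_cts_mone} for some $P\succ 0$, some diagonal $Q\succ 0$, and $c>0$.

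Next we invoke Proposition~\ref{prop:LDS}\ref{item:lds_i}: pre- and post-multiplying~\eqref{eq:fr_cts_mone} by $[I_n,\ I_n]$ and its transpose gives
\begin{equation*}
(W+BK-I_n)^\top Q + Q(W+BK-I_n) \preceq -2cP \prec 0 .
\end{equation*}
So $W_{cl}-I_n$ is $Q$-Lyapunov diagonally stable. In particular, $V(z)=z^\top Q z$ is a quadratic Lyapunov function for the linear system $\dot z=(W-I_n+BK)z$. Hence $W-I_n+BK$ is Hurwitz; equivalently, every eigenvalue $\lambda$ of $W-I_n+BK$ satisfies $2\,\mathrm{Re}(\lambda)\, v^* Q v = v^*(A_K^\top Q+QA_K)v<0$ for an eigenvector $v$, where $A_K:=W-I_n+BK$.

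Finally, the existence of a matrix $K$ such that $(W-I_n)+BK$ is Hurwitz is, by definition, stabilizability of the pair $(W-I_n,B)$. This proves the corollary. If one prefers the PBH form, suppose $v^*(W-I_n)=\lambda v^*$ and $v^*B=0$ with $\mathrm{Re}\,\lambda\ge 0$. Then $v$ is also a left eigenvector of $A_K$ with eigenvalue $\lambda$, which contradicts Hurwitzness.

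There is no real obstacle. The only point requiring care is to note that the left-hand side above is negative definite, not merely semidefinite, which uses $c>0$ and $P\succ0$. Positivity of $Q$ then yields the strict Lyapunov inequality and hence the Hurwitz property.
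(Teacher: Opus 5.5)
Your proof is correct and follows essentially the same argument as the paper: multiply the certificate on both sides by $[I_n,\ I_n]$ to obtain a strict Lyapunov inequality for $W-I_n+BK$ (strict because $c>0$ and $P\succ 0$), hence $W-I_n+BK$ is Hurwitz. The only difference is that the paper applies this to the synthesis LMI~\eqref{eq:synthesis_lmi_sf} in the variables $(X,D,Y)$ and obtains $(W-I_n+BK)X+X(W-I_n+BK)^\top\prec 0$ with $X=P^{-1}$, whereas you apply it to the original certificate and obtain the diagonal Lyapunov matrix $Q$ via Proposition~\ref{prop:LDS}\ref{item:lds_i}; either inequality suffices.
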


\begin{proof}
    Pre- and post-multiplying~\eqref{eq:synthesis_lmi_sf} by $\begin{bmatrix} I_n & I_n \end{bmatrix}$ and its transpose, respectively, and simplifying yields
    \begin{equation*}
        WX + XW^\top - 2X + BY + Y^\top B^\top + 2cX \preceq 0.
    \end{equation*}
    Using the fact that $c> 0$, and $Y = KX$, we obtain
    \begin{equation*}
        (W - I_n + BK)X + X(W - I_n + BK)^\top \prec 0.
    \end{equation*}
    which is the condition for stabilizability of $(W - I_n, B)$.
\end{proof}

Similar to the problem of control design for linear systems, we may also apply the projection lemma~\cite[Section 2.6.2]{SB-LEG-EF-VB:94}.
\begin{cor}[Necessary and sufficient conditions for S-contraction via full state feedback]\label{cor:full_state_feedback_conditions}
    Let $\Pi_B$ denote a matrix whose columns form a basis for the null space of $B^\top$, such that $B^\top \Pi_B = 0$. An FRNN with synaptic matrix $W$, and input matrix $B$ can be rendered S-contracting via full state feedback with some $c > 0$ if and only if the following inequality holds
    \begin{equation}
\label{eq:projected_lmi_congruence}
        \begin{bmatrix} I_n & 0 \\ 0 & \Pi_B^\top \end{bmatrix}
        \begin{bmatrix}
        -2X & D + XW^\top \\
        D + WX & -2D
        \end{bmatrix}
        \begin{bmatrix} I_n & 0 \\ 0 & \Pi_B \end{bmatrix} \prec 0.
    \end{equation}
\end{cor}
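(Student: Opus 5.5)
The plan is to start from Proposition~\ref{prop:contracting_state_feedback}: S-contraction via full state feedback with some $c>0$ is equivalent to feasibility of~\eqref{eq:synthesis_lmi_sf} in $(X,D,Y)$ with $X\succ0$ and $D\succ0$ diagonal. I would first get rid of the rate. If~\eqref{eq:synthesis_lmi_sf} holds for some $c>0$, write $\Phi(X,D)$ for the middle matrix in~\eqref{eq:projected_lmi_congruence}. Then
\[
\Phi(X,D)+\begin{bmatrix}0\\ B\end{bmatrix}Y\begin{bmatrix} I_n & 0\end{bmatrix}+\Bigl(\begin{bmatrix}0\\ B\end{bmatrix}Y\begin{bmatrix} I_n & 0\end{bmatrix}\Bigr)^{\top}\preceq -\diag{2cX,0}.
\]
Conversely, if the strict version $\Phi+\mathcal{B}Y\mathcal{C}+(\mathcal{B}Y\mathcal{C})^\top\prec0$ holds, with $\mathcal{B}=[0;B]$ and $\mathcal{C}=[I_n\;0]$, then by continuity some small $c>0$ also satisfies~\eqref{eq:synthesis_lmi_sf}. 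So the statement reduces to the question: when does there exist $Y$ making $\Phi+\mathcal{B}Y\mathcal{C}+(\cdot)^\top$ negative definite?

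That question is exactly the setting of the projection (elimination) lemma~\cite[Section 2.6.2]{SB-LEG-EF-VB:94}. Such a $Y$ exists if and only if two projected inequalities hold:
\[
\mathcal{B}_\perp^\top\Phi\,\mathcal{B}_\perp\prec0 \quad\text{and}\quad \mathcal{C}_\perp^\top\Phi\,\mathcal{C}_\perp\prec0 .
\]
\begin{itemize}
\item The kernel of $\mathcal{B}^\top=[0\;B^\top]$ is spanned by the columns of $\diag{I_n,\Pi_B}$. Hence the first condition is precisely~\eqref{eq:projected_lmi_congruence}.
\item The kernel of $\mathcal{C}=[I_n\;0]$ is spanned by $[0;I_n]$. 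Hence the second condition reads $-2D\prec0$, which holds automatically since $D\succ0$.
\end{itemize}
Sufficiency therefore follows by constructing $Y$ from the lemma and setting $K=YX^{-1}$. Necessity follows by restricting~\eqref{eq:synthesis_lmi_sf} to $\diag{I_n,\Pi_B}$ by congruence: this kills the $Y$ terms, because $\Pi_B^\top B=0$.

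The main obstacle is strictness on the necessity side. The congruence only yields
\[
\begin{bmatrix}-2X+2cX & D+XW^\top\Pi_B\\ \star & -2\Pi_B^\top D\Pi_B\end{bmatrix}\preceq0,
\]
a non-strict inequality, whereas~\eqref{eq:projected_lmi_congruence} demands $\prec0$. I would recover strictness by exploiting the slack $2cX$. Perturb $D\mapsto(1+\varepsilon)D$ for small $\varepsilon>0$, or equivalently scale the off-diagonal block via the Schur complement with respect to the $(2,2)$ block, which is negative definite because $D\succ0$ and $\Pi_B$ has full column rank. The Schur complement of the projected matrix is then at most $-2cX\prec0$, and it is strict. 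Combined with the negative definite $(2,2)$ block, this gives strict negativity of the whole projected matrix.

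On the sufficiency side, the elimination lemma needs strict inequalities, and these are available in the form just described. The remaining step is to verify that the produced $Y$ together with some $c>0$ gives~\eqref{eq:synthesis_lmi_sf}; this is the continuity argument from the first paragraph.
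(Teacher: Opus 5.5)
Your proposal is correct and follows essentially the paper's route: the projection lemma with $U=[0\;\,B^\top]$ and $V=[I_n\;\,0]$, where the first projected condition is exactly the stated inequality and the second reduces to $-2D\prec 0$. Your Schur-complement argument with respect to the negative definite $(2,2)$ block is a useful addition. The paper justifies the passage from the non-strict inequality with $c>0$ to the strict one at $c=0$ only ``by continuity'', and continuity covers just the opposite direction.
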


\begin{proof}
    Let $U = 
    \begin{bmatrix} \0_{m \times n} & B^\top \end{bmatrix}$. Let $V = \begin{bmatrix} I_n & \0_{n \times n} \end{bmatrix}$.
    By continuity, the LMI~\eqref{eq:synthesis_lmi_sf} holds for some $ c>0 $ if and only if 
    \begin{align}\label{eq:step_1_controllability}
    \underbrace{\begin{bmatrix}
        -2X & D + XW^\top \\
        D + WX & -2D
        \end{bmatrix}}_{\mcM}
        + U^\top Y V
        + V^\top Y^\top U \prec 0.
    \end{align}
    An application of the Projection Lemma (Lemma~\ref{lem:projection_lemma}) states that the inequality~\eqref{eq:step_1_controllability} holds if and only if $U_{\perp}^\top \mcM U_\perp \prec 0$ and $V_{\perp}^\top \mcM V_\perp \prec 0$. The second condition always holds, as the (2,2) block of $\mcM$ is negative definite. The first condition is identical to inequality~\eqref{eq:projected_lmi_congruence}.
\end{proof}

\subsection{Contracting observer design}
We now consider the dual problem of state estimation.  We seek to estimate $x \in \R^n$, given the output $y \in \R^p$, and we propose the following observer architecture:
\begin{align}\label{eq:observer_dynamics}
    \dot{\hat{x}} &= -\hat{x} + \Psi(W\hat{x} + Bu + L(y - \hat{y})),\quad
    \hat{y} = C\hat{x},
\end{align}
where $\hat{x} \in \R^n$ is the estimated state, and $L$ is the observer gain matrix. 

\begin{pro}[S-Contracting observer design]\label{prop:contracting_observer}    Consider the observer~\eqref{eq:observer_dynamics} with exogenous inputs $u$ and $y$.
\begin{enumerate}
    \item The observer is an FRNN with state $\hat{x}$, and a weight matrix $W_{obs} = W - LC$.
    \item The observer is S-contracting for a given contraction rate $c \in (0,1]$ if and only if there exist $P \succ 0$, diagonal $Q \succ 0$ and a matrix $M$ such that, 
        \begin{equation}
            \label{eq:synthesis_lmi_obs}
            \begin{bmatrix}
            -2(1 - c)P & P + W^\top Q - C^\top M^\top \\
            P + QW - MC & -2Q
            \end{bmatrix} \preceq 0.
        \end{equation}       If~\eqref{eq:synthesis_lmi_obs} holds, the observer gain guaranteeing contraction is given by $L = Q^{-1}M$.
    \end{enumerate} 
\end{pro}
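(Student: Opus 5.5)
For part (i) I will simply expand the observer argument. Substituting $\hat y = C\hat x$ into~\eqref{eq:observer_dynamics} gives
\[
W\hat x + Bu + L(y - C\hat x) = (W - LC)\hat x + (Bu + Ly).
\]
The observer is therefore $\dot{\hat x} = -\hat x + \Psi(W_{obs}\hat x + v)$ with $W_{obs} = W - LC$. Here $v = Bu + Ly$ is an exogenous signal entering inside the nonlinearity, exactly as $Bu$ does in~\eqref{eq:cts_firing_rate}. The observer is thus an FRNN in $\hat x$, and its contractivity is governed by the synaptic matrix $W_{obs}$ alone, uniformly in $(u,y)$.

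For part (ii) I use Definition~\ref{def:s_contraction}: the observer is S-contracting with rate $c$ iff there exist $P\succ 0$ and diagonal $Q\succ 0$ such that~\eqref{eq:fr_cts_mone} holds with $W$ replaced by $W - LC$. The off-diagonal block then reads
\[
P + Q(W - LC) = P + QW - QLC .
\]
The only nonlinearity in the decision variables is the product $QL$. I remove it with the bijective change of variables $M = QL$, which is legitimate because $Q$ is diagonal positive definite and hence invertible. Under this substitution the off-diagonal block becomes $P + QW - MC$, and its transpose is $P + W^\top Q - C^\top M^\top$, using $Q = Q^\top$. This is exactly~\eqref{eq:synthesis_lmi_obs}.

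Both directions then follow. If the observer is S-contracting with some $(P,Q,L)$, then $(P,Q,M = QL)$ satisfies~\eqref{eq:synthesis_lmi_obs}. Conversely, if $(P,Q,M)$ satisfies~\eqref{eq:synthesis_lmi_obs}, then $L = Q^{-1}M$ makes~\eqref{eq:fr_cts_mone} hold for $W_{obs}$. Theorem~\ref{thm:main_fr_h_conditions} then yields strong infinitesimal contraction with rate $c$ in $\norm{\cdot}{P}$, uniformly in the inputs $u,y$, since they enter only through the additive term $v$.

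This mirrors Proposition~\ref{prop:contracting_state_feedback}, where the congruence by $\diag{X,D}$ with $X = P^{-1}$ was needed. Here no congruence is needed, because the gain is multiplied on the left by $Q$, which appears directly in the certificate. There is no substantial obstacle. The one point to check is that uniformity in $(u,y)$ holds: Lemma~\ref{lem:absolute_contractivity_lure} applied with a constant offset inside $\Psi$ still works, since the incremental multiplier inequality involves only differences and the offset cancels.
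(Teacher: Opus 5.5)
Your proposal is correct and follows the paper's proof. You substitute $\hat y = C\hat x$ to identify $W_{obs} = W - LC$, write the MONE certificate~\eqref{eq:fr_cts_mone} for $W_{obs}$, linearize it with the bijective substitution $M = QL$, and recover $L = Q^{-1}M$; your remark that the exogenous term $Bu + Ly$ cancels in the incremental multiplier inequality, so contraction holds uniformly in $(u,y)$, makes explicit a detail the paper leaves implicit.
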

\begin{proof}
    The proof is presented in Appendix~\ref{app:proof_observer}
\end{proof}

\begin{cor}[Necessity of linear detectability]
An FRNN with synaptic matrix $W$, and output matrix $C$ can be observed by an S-contracting FRNN only if the pair $(W - I_n, C)$ is detectable.
\end{cor}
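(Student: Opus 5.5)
The plan mirrors the proof of the corollary on necessity of linear stabilizability, now applied to the observer LMI of Proposition~\ref{prop:contracting_observer}. Suppose the FRNN with synaptic matrix $W$ and output matrix $C$ admits an S-contracting observer of the form~\eqref{eq:observer_dynamics}. By Proposition~\ref{prop:contracting_observer}, there then exist $P \succ 0$, a diagonal $Q \succ 0$, a matrix $M$ and a rate $c>0$ satisfying~\eqref{eq:synthesis_lmi_obs}, with $L = Q^{-1}M$ so that $M = QL$.

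The first step is to pre- and post-multiply~\eqref{eq:synthesis_lmi_obs} by $\begin{bmatrix} I_n & I_n\end{bmatrix}$ and its transpose. Summing the four blocks gives
\begin{equation*}
-2(1-c)P + 2P + W^\top Q + QW - C^\top M^\top - MC - 2Q \preceq 0 .
\end{equation*}
After substituting $M = QL$, this becomes
\begin{equation*}
Q(W - LC - I_n) + (W - LC - I_n)^\top Q \preceq -2cP \prec 0 .
\end{equation*}
This is the same computation as item~\ref{item:lds_i} of Proposition~\ref{prop:LDS}, applied to the observer weight $W_{obs} = W - LC$.

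This inequality is a Lyapunov inequality, with the positive definite (indeed diagonal) matrix $Q$, for $W - I_n - LC$. Hence $W - I_n - LC$ is Hurwitz. Since there exists an output injection gain $L$ making $(W-I_n) - LC$ Hurwitz, the pair $(W - I_n, C)$ is detectable by definition. Equivalently, one can argue by duality: $(W-I_n)^\top - C^\top L^\top$ is Hurwitz, so $((W-I_n)^\top, C^\top)$ is stabilizable.

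I do not expect a genuine obstacle in this argument. The only care points are to use the correct substitution $M = QL$ (rather than $LQ$), so that the cross terms combine into $Q(W-LC)$ and its transpose, and to note that the strict inequality comes from $c>0$ together with $P\succ0$.
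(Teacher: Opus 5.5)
Your proposal is correct and follows the paper's proof essentially verbatim. Like the paper, you multiply~\eqref{eq:synthesis_lmi_obs} on both sides by $\begin{bmatrix} I_n & I_n\end{bmatrix}$ (and its transpose), substitute $M = QL$, and use $-2cP \prec 0$ to reach $Q(W - I_n - LC) + (W - I_n - LC)^\top Q \prec 0$; you then spell out that this makes $W - I_n - LC$ Hurwitz and hence $(W - I_n, C)$ detectable, which is what the paper calls the standard Lyapunov condition for detectability.
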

\begin{proof}
    Pre- and post-multiplying~\eqref{eq:synthesis_lmi_obs} by $\begin{bmatrix} I_n & I_n \end{bmatrix}$ and its transpose, respectively, yields
    \begin{equation*}
        QW + W^\top Q - 2Q - MC - C^\top M^\top + 2cP \preceq 0.
    \end{equation*}
    Substituting $M = QL$ and using the fact that $-2cP \prec 0$, 
    \begin{equation*}
        Q(W - I_n - LC) + (W - I_n - LC)^\top Q \prec 0 
    \end{equation*}
    which is the standard continuous-time Lyapunov condition for the pair $(W - I_n, C)$ to be detectable.
\end{proof}

\begin{cor}[Necessary and sufficient conditions for S-contracting observer design]
    Let $\Pi_C$ denote a matrix whose columns form a basis for the null space of $C$, such that $C \Pi_C = 0$.  An FRNN with synaptic matrix $W$, and output matrix $C$ can be observed by an S-contracting FRNN if and only if the following strict inequality holds for some $c > 0$:
    \begin{equation}
        \label{eq:projected_lmi_obs}
        \begin{bmatrix} \Pi_C^\top & 0 \\ 0 & I_n \end{bmatrix}
        \begin{bmatrix}
        -2P & P + W^\top Q \\
        P + QW & -2Q
        \end{bmatrix}
        \begin{bmatrix} \Pi_C & 0 \\ 0 & I_n \end{bmatrix} \prec 0.
    \end{equation}
\end{cor}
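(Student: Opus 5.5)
The plan mirrors the proof of Corollary~\ref{cor:full_state_feedback_conditions}, with the roles of the two diagonal blocks and of the input and output matrices exchanged, in the usual controller/observer duality.

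\emph{Step 1: reduce to a strict inequality.} By Proposition~\ref{prop:contracting_observer}, an S-contracting observer exists if and only if there are $P\succ 0$, diagonal $Q\succ 0$, a matrix $M$ and some $c>0$ satisfying~\eqref{eq:synthesis_lmi_obs}. A continuity argument shows this is equivalent to the strict inequality
\begin{equation*}
\underbrace{\begin{bmatrix} -2P & P+W^\top Q \\ P+QW & -2Q\end{bmatrix}}_{\mcN}
 - V^\top M U - U^\top M^\top V \prec 0,
\end{equation*}
where $U=\begin{bmatrix} C & \0_{p\times n}\end{bmatrix}$ and $V=\begin{bmatrix}\0_{n\times n} & I_n\end{bmatrix}$. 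Indeed, $-MC$ sits in the $(2,1)$ block, so it equals $V^\top(-M)U$, and its transpose sits in the $(1,2)$ block.
\begin{itemize}
\item Forward direction: if~\eqref{eq:synthesis_lmi_obs} holds with $c>0$, the left-hand side above equals the left-hand side of~\eqref{eq:synthesis_lmi_obs} plus $\diag{-2cP,\0}$. This is only $\preceq 0$, not strict. To get strictness, first shrink $c$ to $c/2$; the slack $\diag{-cP,\0}$ is then negative definite in the first block. Definiteness in the second block comes from the strict negativity of $-2Q$ together with a Schur complement argument.
\item Backward direction: a strict inequality leaves room to add $\diag{2cP,\0}$ for some small $c>0$.
\end{itemize}

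\emph{Step 2: apply the Projection Lemma.} With the unknown $-M$ in the role of the free variable, Lemma~\ref{lem:projection_lemma} says the strict inequality is solvable in $M$ if and only if $U_\perp^\top\mcN U_\perp\prec0$ and $V_\perp^\top \mcN V_\perp\prec 0$.
\begin{itemize}
\item We have $V_\perp=\begin{bmatrix} I_n\\ \0\end{bmatrix}$, so the second condition reads $-2P\prec0$. This holds automatically.
\item We have $\Ker U=\{(\xi,\eta): C\xi=0\}$, so $U_\perp=\diag{\Pi_C, I_n}$. The first condition is then exactly~\eqref{eq:projected_lmi_obs}.
\end{itemize}
Combining the two steps gives the equivalence. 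The gain is recovered by solving the LMI for $M$ and setting $L=Q^{-1}M$.

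\emph{Main obstacle.} The delicate point is Step 1, specifically how $c$ enters the statement. The corollary demands~\eqref{eq:projected_lmi_obs} strictly "for some $c>0$", but the displayed matrix no longer contains $c$. The handling I intend is this: strictness of~\eqref{eq:projected_lmi_obs} is robust, so given a strict solution one can reinsert $2cP$ in the $(1,1)$ block for small $c$ and the projected inequality still holds. Conversely, the equality case $c$ in~\eqref{eq:synthesis_lmi_obs} only yields $\preceq$, so strict negativity must be restored by the $c/2$ slack trick described above. I would check the Schur complement step carefully: the slack $-cP\prec0$ in the first block together with $-2Q\prec 0$ gives a negative definite perturbation on the whole space, so the non-strict inequality upgrades to a strict one.
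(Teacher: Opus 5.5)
Your proposal is correct and follows the paper's proof essentially step for step. The paper uses the same $U=\begin{bmatrix} C & 0\end{bmatrix}$ and $V=\begin{bmatrix}0 & I_n\end{bmatrix}$, applies the Projection Lemma (Lemma~\ref{lem:projection_lemma}), gets the $V_\perp$ condition as the automatic $-2P\prec 0$, and obtains~\eqref{eq:projected_lmi_obs} from $U_\perp=\diag{\Pi_C,I_n}$. Your Step~1 spells out the strictness argument that the paper dismisses with ``by continuity'', and the $c/2$ shrink is unnecessary: adding $\diag{-2cP,0}$ to a matrix that is $\preceq 0$ and has $(2,2)$ block $-2Q\prec 0$ already gives a strictly negative definite matrix.
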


\begin{proof}
    Let $U = \begin{bmatrix} C & \0_{p \times n} \end{bmatrix}$. Let $V = \begin{bmatrix} \0_{n \times n} & I_n \end{bmatrix}$. Let $Y = -M^\top$.
    By continuity, the LMI~\eqref{eq:synthesis_lmi_obs} holds strictly for some $c>0$ if and only if 
    \begin{align}\label{eq:step_1_observability}
    \underbrace{\begin{bmatrix}
        -2P & P + W^\top Q \\
        P + QW & -2Q
        \end{bmatrix}}_{\mcM}
        + U^\top Y V
        + V^\top Y^\top U \prec 0.
    \end{align}
    An application of the Projection Lemma~\ref{lem:projection_lemma} states that the inequality~\eqref{eq:step_1_observability} holds if and only if $U_{\perp}^\top \mcM U_\perp \prec 0$ and $V_{\perp}^\top \mcM V_\perp \prec 0$. The second condition always holds, as the (1,1) block of $\mcM$ is negative definite. The first condition is identical to inequality~\eqref{eq:projected_lmi_obs} since $U_\perp = \begin{bmatrix} \Pi_C & 0 \\ 0 & I_n \end{bmatrix}$.
\end{proof}

\subsection{Low gain Integral control}

Having designed a controller in Proposition~\ref{prop:contracting_state_feedback}, and an observer in Proposition~\ref{prop:contracting_observer}, which satisfies the assumptions in our separation principle in Theorem~\ref{thm:separation_principle}, we now proceed to perform reference tracking via an integral controller and a singular perturbation argument. As a first step, we identify sufficient conditions for the contraction of the reduced dynamics.

\begin{lemma}[Contractivity of the reduced dynamics]\label{lem:dc_gain}
Let a map $\map{\xstar}{\R^m}{\R^n}$ exist that satisfies the equation
    \begin{align}
        \xstar(u) = \Psi(W\xstar(u) + Bu)
    \end{align}
    where $W \in \R^{n \times n}, B \in \R^{n \times m}$. Let $A = I_n - W, C \in \R^{p \times n}$, and $Q\in \R^{n\times n}$ be a diagonal positive matrix. Assume:
\begin{enumerate}[label=\textup{(A\arabic*)},leftmargin=*]
    \item $\Psi$ is slope-restricted in $[\delta, 1]$, for $\delta>0$.
    \item  There exists $\ustar$ such that $r = C\xstar(\ustar)$.
    \item With the shorthands ${Z = B^\top Q ((1 - \delta)I_n  + 2\delta A)}$ and ${R = 2\delta A^\top Q A + (1 - \delta) (QA + A^\top Q)}$, there exist matrices $Y$ and $P\succ 0$ such that
    \begin{align}\label{eq:dc_gain_contraction_lmi}
     \begin{bmatrix}
        2c_rP - 2 \delta B^\top QB &  Z -YC \\    * & -R
      \end{bmatrix}  \preceq 0.
    \end{align}
\end{enumerate}
Then, with $K = \varepsilon^{-1} P^{-1}Y$, the dynamical system
\begin{align}\label{eq:reduced_dynamics}
    \dot{u} = \varepsilon K(r - C\xstar(u))
\end{align}
is strongly infinitesimally contracting with rate $c_r$ in $\norm{\cdot}{P}$.
\end{lemma}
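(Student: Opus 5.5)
The plan is to verify the contraction inequality for~\eqref{eq:reduced_dynamics} directly, in the style of Lemma~\ref{lem:absolute_contractivity_lure}: an incremental multiplier for $\Psi$ combined with the S-lemma. With $K = \varepsilon^{-1}P^{-1}Y$ the vector field is $G(u) = P^{-1}Y(r - C\xstar(u))$. Take $u,\tilde u\in\R^m$ and set $\Delta u = u - \tilde u$ and $\Delta x = \xstar(u) - \xstar(\tilde u)$. Then
\begin{equation*}
(G(u)-G(\tilde u))^\top P\,\Delta u = -\Delta u^\top Y C\,\Delta x ,
\end{equation*}
so it suffices to show $-2\Delta u^\top YC\Delta x \le -2c_r\norm{\Delta u}{P}^2$.

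First, I would derive the incremental constraint linking $\Delta x$ and $\Delta u$. Put $z = W\Delta x + B\Delta u$, the increment of the argument of $\Psi$, so that $\Delta x = \Psi(\cdot)-\Psi(\cdot)$ is the corresponding output increment. Since $\Psi$ is diagonal and slope-restricted in $[\delta,1]$, each coordinate satisfies $(\Delta x_{[i]} - \delta z_{[i]})(z_{[i]} - \Delta x_{[i]})\ge 0$. Weighting by the diagonal $Q\succ 0$ and summing gives $(\Delta x - \delta z)^\top Q (z - \Delta x)\ge 0$; this is Lemma~\ref{lem:IMM_conditions} with $k_1=\delta$, $k_2=1$.

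Next, I would eliminate $z$ using $W = I_n - A$. This gives
\begin{equation*}
\Delta x - \delta z = ((1-\delta)I_n + \delta A)\Delta x - \delta B\Delta u, \qquad z - \Delta x = B\Delta u - A\Delta x .
\end{equation*}
Expanding the product, the $\Delta u$-quadratic term is $-\delta\,\Delta u^\top B^\top Q B\Delta u$. The cross terms combine to $\Delta u^\top B^\top Q((1-\delta)I_n + 2\delta A)\Delta x = \Delta u^\top Z \Delta x$. The symmetrized $\Delta x$-quadratic term is $-\tfrac12 \Delta x^\top R\,\Delta x$. Multiplying by $2$ yields
\begin{equation*}
\begin{bmatrix}\Delta u\\ \Delta x\end{bmatrix}^\top \begin{bmatrix} -2\delta B^\top QB & Z\\ Z^\top & -R\end{bmatrix}\begin{bmatrix}\Delta u\\ \Delta x\end{bmatrix}\ge 0 .
\end{equation*}
The main obstacle is this bookkeeping step: recovering exactly the $Z$ and $R$ of the statement, including the factor $2\delta A$ in $Z$, and tracking the signs of the cross terms. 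This is where I would be most careful.

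Finally, I would apply the trivial direction of the S-lemma. Evaluate the quadratic form of~\eqref{eq:dc_gain_contraction_lmi} at $(\Delta u,\Delta x)$. The LMI's matrix is the multiplier matrix above plus $\left[\begin{smallmatrix}2c_rP & -YC\\ -C^\top Y^\top & 0\end{smallmatrix}\right]$. Hence
\begin{equation*}
2c_r\norm{\Delta u}{P}^2 - 2\Delta u^\top YC\Delta x \le -(\text{multiplier form}) \le 0 ,
\end{equation*}
which is exactly the required inequality. This shows that~\eqref{eq:reduced_dynamics} is strongly infinitesimally contracting with rate $c_r$ in $\norm{\cdot}{P}$. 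The existence of $\xstar(\cdot)$ is assumed in the statement, and $\ustar$ from (A2) is only needed afterwards, to identify the equilibrium of~\eqref{eq:reduced_dynamics}.
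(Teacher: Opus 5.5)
Your proof is correct and follows essentially the paper's route. The paper obtains the same multiplier for $\xstar$ by citing Lemma~\ref{lem:implicit_imm} with the $[\delta,1]$ multiplier from Lemma~\ref{lem:IMM_conditions}, applies Lemma~\ref{lem:absolute_contractivity_lure}, identifies the resulting blocks with $Z$ and $-R$ via $W = I_n - A$, and reparameterizes $Y = \varepsilon PK$; you unpack those lemmas by hand in the $(\Delta u,\Delta x)$ coordinates, and your bookkeeping of $Z$, $R$ and the signs checks out.
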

\begin{proof}
    The dynamics~\eqref{eq:reduced_dynamics} form a Lur'e system, where the nonlinearity is given by $\xstar(u)$ and admits an incremental multiplier (Lemma~\ref{lem:implicit_imm}). Applying Lemma~\ref{lem:absolute_contractivity_lure} to~\eqref{eq:reduced_dynamics}, one obtains the following LMI condition for contractivity in the norm $\norm{\cdot}{P}$: 
    \begin{align}\label{eq:int_condition_step_1}
     \begin{bmatrix}
        2c_rP & -\varepsilon PKC \\    -\varepsilon C^\top K^\top P & \0_{m\times{m}}
      \end{bmatrix} + 
       \begin{bmatrix}
            B & W \\ 0 & I_n 
        \end{bmatrix}^\top
        M 
        \begin{bmatrix}
            B & W \\ 0 & I_n 
        \end{bmatrix} \preceq 0.
    \end{align}
    where $M$ is the matrix multiplier for the nonlinearity $\Psi$. Using Lemma~\ref{lem:IMM_conditions}, we may substitute $M$ as $\begin{bmatrix}
            -2\delta Q &  (1 + \delta) Q \\
             (1 + \delta) Q & -2Q
        \end{bmatrix}$. The second term in~\eqref{eq:int_condition_step_1} resolves to
    \begin{align*}
        \begin{bmatrix}
         - 2 \delta B^\top QB &  - 2\delta B^\top QW + (1 + \delta) B^\top Q  \\    * &  (1 + \delta) (W^\top Q + QW) - 2Q - 2\delta W^\top QW 
      \end{bmatrix}
    \end{align*}
    Now, let us set $A = I_n - W$. First, note that 
    \begin{align*}
        - R &=- 2\delta W^\top QW + (1 + \delta) (W^\top Q + QW) - 2Q,  \\
        Z &= - 2\delta B^\top QW + (1 + \delta) B^\top Q.
    \end{align*}
    Now, we may rewrite~\eqref{eq:int_condition_step_1} using these simplifications. 
    \begin{align}\label{eq:dc_gain_contraction_lmi_prep}
     \begin{bmatrix}
        2c_rP - 2 \delta B^\top QB &  Z -\varepsilon PKC \\    * & -R
      \end{bmatrix}  \preceq 0.
    \end{align}
    Reparameterizing $Y = \varepsilon PK$ we get the condition~\eqref{eq:dc_gain_contraction_lmi}. 
\end{proof}

Finally, we may now design a low gain integral controller.

\begin{thm}[Reference Tracking]\label{thm:ref_tracking}
  Consider the FRNN plant~\eqref{eq:cts_firing_rate} with synaptic matrix
  $W \in \R^{n \times n}$, input matrix $B \in \R^{n \times m}$, output
  matrix $C \in \R^{p \times n}$, and reference signal $r \in \R^p$.
  Suppose the following conditions hold.
  \begin{enumerate}[label=\textup{(A\arabic*)},leftmargin=*]
    \item \label{asP:PlantC_FRNN} \textup{(Plant contraction by state-feedback)}
      There exists a matrix $K_f \in \R^{m \times n}$ such that the FRNN
      with synaptic matrix $W + BK_f$ is strongly infinitesimally
      contracting with rate $c_K > 0$ in the norm
      $\norm{\cdot}{\mcX}$.

    \item \label{asP:ObserverC_FRNN} \textup{(Observer contraction)}
      There exists a matrix $L \in \R^{n \times p}$ such that the FRNN
      with synaptic matrix $W - LC$ is strongly infinitesimally
      contracting with rate $c_O > 0$ in the norm
      $\norm{\cdot}{\mcO}$.

    \item \label{asP:ReducedC_FRNN} \textup{(Contraction of reduced-order dynamics)}
      For each constant input $u$, let $\xstar(u)$ denote the unique
      equilibrium of the FRNN with synaptic matrix $W + BK_f$ and input
      matrix $B$. Assume there exists $\ustar \in \R^m$ such that
      $r = C\xstar(\ustar)$. The reduced dynamics
      \begin{align}\label{eq:reduced_dynamics_FRNN}
        \dot{u}_{\mathrm{ext}}
          = \varepsilon\, K_i\bigl(r - C\xstar(u_{\mathrm{ext}})\bigr)
      \end{align}
      be strongly infinitesimally contracting with rate $c_r > 0$ in the
      norm $\norm{\cdot}{\mcR}$.

    \item \label{asP:epsilon_scale} \textup{(Low-gain condition)}
      Define the induced-norm constants
      \begin{alignat*}{2}
        \ell_u &= \norm{B}{\mcU \to \mcX},&\qquad
        \ell_K &= \norm{K_f}{\mcO \to \mcU},\\
        \ell_{i,\mcR} &= \norm{K_i C}{\mcX \to \mcR},&\qquad
        \ell_{i,\mcU} &= \norm{K_i C}{\mcX \to \mcU}.
      \end{alignat*}
      The gain parameter $\varepsilon > 0$ satisfies
      \begin{align}
        \varepsilon\bigl(\ell_{i,\mcU}\,\ell_u - c_K\, c_r\bigr)
          &< c_K^2, \label{eq:eps_cond1}\\
        \varepsilon\,\ell_{i,\mcU}\,\ell_u
          \bigl(c_K\, c_r + \ell_u\, \ell_{i,\mcR}\bigr)
          &< c_r\, c_K^3. \label{eq:eps_cond2}
      \end{align}
  \end{enumerate}

  \noindent
  Consider the closed-loop system
  \begin{subequations}\label{eq:closed_loop_FRNN}
  \begin{align}
    \dot{x}  &= -x + \Psi\bigl(Wx + BK_f\xi + B\, u_{\mathrm{ext}}\bigr),\; y = Cx
      \label{eq:FRNN_plant_controller}\\
    \dot{\xi} &= -\xi + \Psi\bigl(W\xi + BK_f\xi + L(y - C \xi)\bigr),
      \label{eq:FRNN_observer}
  \end{align}
  \end{subequations}
  driven by the integral controller
  \begin{align}\label{eq:integral_controller}
    \dot{u}_{\mathrm{ext}}
      = \varepsilon\, K_i\bigl(r - y\bigr).
  \end{align}
  Then the following statements hold.
  \begin{enumerate}[label=\textup{(\roman*)},leftmargin=*]
    \item \label{C:exponential_stability}
      \textup{(Global exponential stability)}
      For every constant $u_{\mathrm{ext}}$, the
      subsystem~\eqref{eq:closed_loop_FRNN} possesses a unique
      equilibrium $\bigl(\xstar(u_{\mathrm{ext}}),\,
      \xstar(u_{\mathrm{ext}})\bigr)$ that is globally exponentially
      stable, uniformly in $u_{\mathrm{ext}}$.

    \item \label{C:reference_tracking}
      \textup{(Reference tracking)}
      The full closed-loop
      system~\eqref{eq:closed_loop_FRNN}--\eqref{eq:integral_controller}
      possesses the globally exponentially stable equilibrium
      $\bigl(\xstar(\ustar),\, \xstar(\ustar),\, \ustar\bigr)$.
      In particular, $\lim_{t\to\infty} Cx(t) = r$.
  \end{enumerate}
\end{thm}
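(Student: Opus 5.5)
We prove the two claims separately: \ref{C:exponential_stability} by a parametric application of Theorem~\ref{thm:separation_principle}, and \ref{C:reference_tracking} by a small-gain argument on a comparison system in the three error variables.

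\emph{Proof of~\ref{C:exponential_stability}.} We treat $u_{\mathrm{ext}}$ as a constant parameter $\theta\in\Theta=\R^m$ and invoke the parametric extension of Theorem~\ref{thm:separation_principle}. The ingredients are:
\begin{itemize}
\item plant $F(x,u,\theta)=-x+\Psi(Wx+Bu+B\theta)$;
\item output $h(x)=Cx$;
\item controller $K(\xi)=K_f\xi$;
\item observer $L(z,u,y,\theta)=-z+\Psi(Wz+Bu+B\theta+L(y-Cz))$.
\end{itemize}
The observer must contain the term $B\theta$ for the matching assumption~\ref{aSP:ZeroObserver} to hold, i.e.\ $L(x,u,Cx,\theta)=F(x,u,\theta)$. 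I read~\eqref{eq:FRNN_observer} as implicitly carrying $B u_{\mathrm{ext}}$ inside $\Psi$, since otherwise $(\xstar,\xstar)$ would not be an equilibrium.

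The remaining assumptions are checked as follows.
\begin{itemize}
\item \ref{aSP:PlantC} is \ref{asP:PlantC_FRNN}: $\dot x=F(x,K_fx,\theta)$ is the FRNN with synaptic matrix $W+BK_f$, and a constant additive input $B\theta$ inside $\Psi$ does not affect the Lur'e/S-contraction argument.
\item \ref{aSP:ObserverC} is \ref{asP:ObserverC_FRNN}, for the same reason: $u$, $y$ and $\theta$ enter additively inside $\Psi$.
\item \ref{aSP:LipBounds} holds with $\ell_K=\norm{K_f}{\mcO\to\mcU}$ and $\ell_u=\norm{B}{\mcU\to\mcX}$, because $\Psi$ is $1$-Lipschitz componentwise. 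Strictly, $\ell_u$ should be the Lipschitz constant of $u\mapsto\Psi(\cdot+Bu)$ in the $\mcU\to\mcX$ norms. I take it as the induced norm of $B$, as the statement does.
\end{itemize}
Since all of these hold uniformly in $\theta$, the parametric version gives uniform global exponential stability of $(\xstar(\theta),\xstar(\theta))$, which is~\ref{C:exponential_stability}.

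\emph{Proof of~\ref{C:reference_tracking}.} Now let $\theta(t)=u_{\mathrm{ext}}(t)$ evolve according to~\eqref{eq:integral_controller}, and track the three errors
\[
e_O=\norm{\xi-x}{\mcO},\qquad e_x=\norm{x-\xstar(u_{\mathrm{ext}})}{\mcX},\qquad e_u=\norm{u_{\mathrm{ext}}-\ustar}{\mcR}.
\]
The three differential inequalities are obtained as follows.
\begin{itemize}
\item \emph{Observer error.} By~\eqref{eq:obs-error}, which is unaffected by the time variation since $\theta$ enters plant and observer identically, $D^+e_O\le -c_Oe_O$.
\item \emph{State error.} Corollary~\ref{cor:separation_moving_target} applies with $\ell_{K,\theta}=0$ and $\ell_{F,\theta}=\ell_u$. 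It gives $D^+e_x\le -c_Ke_x+\tfrac{\ell_u}{c_K}\norm{\dot u_{\mathrm{ext}}}{\mcU}+\ell_u\ell_Ke_O$.
\item \emph{Bound on $\dot u_{\mathrm{ext}}$.} Write
\[
r-Cx=C\xstar(\ustar)-C\xstar(u_{\mathrm{ext}})-C\bigl(x-\xstar(u_{\mathrm{ext}})\bigr).
\]
The map $u\mapsto\xstar(u)$ is $\ell_u/c_K$-Lipschitz by~\cite[Lemma~16]{AD-VC-AG-GR-FB:23f}. Hence $\norm{\dot u_{\mathrm{ext}}}{\mcU}\le\varepsilon\ell_{i,\mcU}\bigl(e_x+\tfrac{\ell_u}{c_K}\norm{u_{\mathrm{ext}}-\ustar}{}\bigr)$, with the second norm converted to $\mcR$ as discussed below.
\item \emph{Integrator error.} Regard~\eqref{eq:integral_controller} as the reduced dynamics~\eqref{eq:reduced_dynamics_FRNN} perturbed by the input $-\varepsilon K_iC(x-\xstar(u_{\mathrm{ext}}))$. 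Since $\ustar$ is an equilibrium of the reduced dynamics, contractivity \ref{asP:ReducedC_FRNN} and the iISS bound~\cite[Theorem~3.16]{FB:26-CTDS} give $D^+e_u\le -c_re_u+\varepsilon\ell_{i,\mcR}e_x$.
\end{itemize}

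Stacking $(e_O,e_x,e_u)$ yields a linear comparison system $D^+e\le Me$ with $M$ Metzler. Its only feedback loop is the coupling between $e_x$ and $e_u$, which has two sources: $u_{\mathrm{ext}}$ moves the equilibrium $\xstar(u_{\mathrm{ext}})$, and $x$ drives the integrator. The $e_O$ row is decoupled and decays at rate $c_O$, so it enters the other two only as an exponentially vanishing input. Hurwitz stability of a Metzler $2\times2$ block reduces to the conditions that the diagonal entries be negative and the determinant be positive. I expect these to reproduce exactly~\eqref{eq:eps_cond1} (the diagonal entry $-c_K+\varepsilon\ell_{i,\mcU}\ell_u/c_K$, combined with the trace condition) and~\eqref{eq:eps_cond2} (the determinant condition). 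The comparison lemma then gives exponential decay of $e$. By norm equivalence this yields global exponential stability of $(\xstar(\ustar),\xstar(\ustar),\ustar)$, and $Cx(t)\to C\xstar(\ustar)=r$.

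\emph{Main obstacle.} The hardest step is the bookkeeping across the different norms and the check that the Metzler Hurwitz conditions coincide with~\eqref{eq:eps_cond1}--\eqref{eq:eps_cond2}. In particular, the term $\varepsilon\ell_{i,\mcU}\tfrac{\ell_u}{c_K}\norm{u_{\mathrm{ext}}-\ustar}{}$ must be expressed in $\norm{\cdot}{\mcR}$. I would first try to absorb that conversion into the definition of $\ell_{i,\mcU}$, i.e.\ choose the norms consistently so that no extra conversion constant appears. If it does not match, I would accept a sufficient condition of the same form with a modified constant.
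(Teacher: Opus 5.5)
Your route is the paper's own. Part (i) follows from the separation principle applied uniformly in the frozen parameter $u_{\mathrm{ext}}$. Part (ii) uses the same three functions $\norm{\xi-x}{\mcO}$, $\norm{x-\xstar(u_{\mathrm{ext}})}{\mcX}$ and $\norm{u_{\mathrm{ext}}-\ustar}{\mcR}$, combined through Corollary~\ref{cor:separation_moving_target}, the $\ell_u/c_K$-Lipschitz bound on $\xstar$, the iISS bound for the reduced dynamics, and a Metzler comparison matrix. The paper closes with a weighted sum $\alpha^\top V$ rather than a vector comparison lemma; the two are equivalent. Your remark that the observer must contain $Bu_{\mathrm{ext}}$ is correct, and the paper's proof needs it too: otherwise $x(t)$ is not a trajectory of the observer, and $D^+V_1\le -c_OV_1$ fails. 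The paper also silently identifies $\norm{\cdot}{\mcU}$ with $\norm{\cdot}{\mcR}$, which is exactly the conversion you flag.

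The step that does not work as you predict is the last one. Take your integrator estimate $D^+e_u\le -c_re_u+\varepsilon\ell_{i,\mcR}e_x$ and set $a:=\ell_{i,\mcU}\ell_u$. The lower block is then
\[
\begin{bmatrix} -c_K+\varepsilon a/c_K & \varepsilon a\ell_u/c_K^2 \\ \varepsilon\ell_{i,\mcR} & -c_r \end{bmatrix}.
\]
Its Hurwitz conditions are $\varepsilon a<c_K^2+c_Kc_r$ and $\varepsilon a\,(c_Kc_r+\varepsilon\ell_u\ell_{i,\mcR})<c_rc_K^3$. These are not \eqref{eq:eps_cond1}--\eqref{eq:eps_cond2}. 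Your determinant condition carries an extra factor $\varepsilon$ on $\ell_u\ell_{i,\mcR}$, so \eqref{eq:eps_cond2} implies it only when $\varepsilon\le 1$. For example, $c_K=\ell_u=1$, $\ell_{i,\mcU}=\ell_{i,\mcR}=0.01$, $c_r=0.6$, $\varepsilon=60$ satisfies both stated inequalities. Yet your left-hand side is $0.72$, which exceeds $c_rc_K^3=0.6$.

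The stated inequalities are exactly the trace and determinant conditions when the $(3,3)$ entry is $-\varepsilon c_r$, i.e.\ when $D^+e_u\le-\varepsilon c_re_u+\varepsilon\ell_{i,\mcR}e_x$. This is what the paper's proof writes. It treats $c_r$ as the contraction rate of the unscaled field $u\mapsto K_i(r-C\xstar(u))$, so the $\varepsilon$-scaled reduced dynamics contract at rate $\varepsilon c_r$, even though (A3) is worded for the scaled dynamics. If you adopt that reading, your argument goes through verbatim. Under the literal reading you use, you obtain the theorem only under your modified determinant condition, or under \eqref{eq:eps_cond1}--\eqref{eq:eps_cond2} together with $\varepsilon\le 1$.
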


\begin{proof}
By Theorem~\ref{thm:separation_principle}, Assumptions~\ref{asP:PlantC_FRNN}-\ref{asP:ObserverC_FRNN} and the Lipschitz nature of the controller ensure the closed-loop system~\eqref{eq:closed_loop_FRNN} is globally exponentially stable to the unique fixed point $(x^\star(\uext), x^\star(\uext))$, uniformly in $\uext$. 

Now, consider the functions, $V_1 = \norm{\xi - x}{\mcO}$, $V_2 = \norm{x - \xstar(\uext)}{\mcX}$, and $V_3 = \norm{\uext - \ustar}{\mcR}$. We will bound each of their Dini derivatives. First, from Assumption~\ref{asP:ObserverC_FRNN}
\begin{align}\label{eq:V1_dot}
    D^+ V_1 \leq -c_O V_1.
\end{align}
Next, utilizing Corollary~\ref{cor:separation_moving_target} for $V_2$ and treating the low gain integral controller as the time varying parameter, 
\begin{align}\label{eq:V2_bound_1}
D^+ V_2 &\leq -c_K V_2 + \ell_u \ell_K \norm{\xi - x}{\mcO} + \frac{\ell_u}{c_K} \norm{\dot{u}_{\text{ext}}}{\mcU} 
\end{align}

Substituting $r = C\xstar(\ustar)$ in the expression for $\norm{\dot{u}_{\text{ext}}}{\mcU}$, 
\begin{align*}
\norm{\dot{u}_{\text{ext}}}{\mcU} &= \norm{\varepsilon K_i C(\xstar(\ustar) - x)}{\mcU}\\&= \norm{\varepsilon K_i C(\xstar(\ustar) - \xstar(\uext) + \xstar(\uext) - x)}{\mcU}\\
&\leq \varepsilon  \ell_{i,\mcU}(\norm{\xstar(\ustar) - \xstar(\uext))}{\mcX} + V_2)
\end{align*}

Since $\xstar(u)$ is globally Lipschitz in $u$ with constant $\ell_u/ c_K$, 
\begin{align*}
\norm{\dot{u}_{\text{ext}}}{\mcU} 
&\leq \frac{\varepsilon \ell_{i,\mcU} \ell_u}{c_K}V_3 + \varepsilon \ell_{i,\mcU}V_2
\end{align*}

Substituting $\norm{\dot{u}_{\text{ext}}}{\mcU}$ back into~\eqref{eq:V2_bound_1} yields:
\begin{align} \label{eq:V2_dot}
D^+ V_2 \leq -(c_K - \varepsilon \frac{\ell_{i,\mcU}\ell_u}{c_K}) V_2 + \ell_u \ell_K V_1 + \varepsilon \frac{ \ell_{i,\mcU} \ell_u^2}{c_K^2} V_3,
\end{align}
In order to find the Dini derivative of $V_3$, we first rewrite the integral controller~\eqref{eq:integral_controller} as
\begin{align*}
\dot{u}_{\text{ext}} 
&= \varepsilon K_iC(\xstar(\ustar) - C\xstar(\uext)) + \varepsilon K_i C(\xstar(\uext) - x).
\end{align*}
Treating the second term as a disturbance to the contracting reduced order dynamics~\eqref{eq:reduced_dynamics_FRNN}, from the incremental ISS property of contracting systems~\cite[Theorem 3.16]{FB:26-CTDS},
\begin{align}
D^+ V_3 &\leq -\varepsilon c_r V_3 + \varepsilon \ell_{i, \mcR}\norm{\xstar(\uext) - x}{\mcX} \nonumber \\
&= -\varepsilon c_r V_3 + \varepsilon \ell_{i,\mcR} V_2. \label{eq:V3_dot}
\end{align}

Upon stacking inequalities~\eqref{eq:V1_dot},~\eqref{eq:V2_dot},~\eqref{eq:V3_dot}, we obtain the following inequality,
\begin{align*}
    D^+\begin{bmatrix}
        V_1 \\
        V_2 \\
        V_3
    \end{bmatrix} 
    \leq 
\underbrace{
    \begin{bmatrix}
        -c_O & 0 & 0 \\
        \ell_u \ell_K &  -(c_K - \varepsilon \frac{\ell_{i,\mcU}\ell_u}{c_K}) & \varepsilon \frac{\ell_{i,\mcU}\ell_u^2}{c_K^2} \\
        0 & \varepsilon \ell_{i,\mcR} & - \varepsilon c_r
    \end{bmatrix}
}_{M(\varepsilon)}
    \begin{bmatrix}
        V_1 \\
        V_2 \\
        V_3
    \end{bmatrix}
\end{align*}

Next, we show that $M(\varepsilon)$ is Hurwitz. The eigenvalues of $M(\varepsilon)$ are obtained upon solving $\det(M(\varepsilon) - \lambda I_3) = 0$. One eigenvalue is $-c_O$, and the other two must be determined by the lower right $2\times 2$ block. For both of the remaining eigenvalues to be negative, this block must have a negative trace, and a positive determinant. Both these conditions are met under assumption~\ref{asP:epsilon_scale}.

By continuity,  $M(\varepsilon) + \eta I_3$ remains Hurwitz for small $\eta> 0$. Since $M(\varepsilon) + \eta I_3$ is Metzler, this is equivalent to guaranteeing the existence of an $\alpha = [\alpha_1, \alpha_2, \alpha_3]^\top$, such that $(M(\varepsilon) + \eta I_3)^\top \alpha$ is elementwise negative. Therefore, denoting $V = \alpha_1 V_1 + \alpha_2 V_2 + \alpha_3 V_3$, one has
    $D^+ V \leq -\alpha^\top M(\varepsilon) V \leq -\eta V$,
guaranteeing exponential stability to the fixed point $(\xstar(\ustar), \xstar(\ustar), \ustar)$. Since $r = C\xstar(\ustar)$, the output of the plant settles exponentially to the value $r$.
\end{proof}

\subsection{Numerical Validation}
To validate our approach numerically, we apply our controller synthesis on a normalized version of the two tank benchmark system~\cite{MS-JPN:17}. We performed system identification using Neuromancer~\cite{JD-AT-JK-MS-BJ-DV:23}, fitting the plant to an FRNN with $n = 8$ neurons and $\tanh{(\cdot)}$ activations. We apply Propositions~\ref{prop:contracting_state_feedback} and~\ref{prop:contracting_observer} to design a controller and an observer respectively, and design a gain for the integral controller via Lemma~\ref{lem:dc_gain}. As shown in Figure~\ref{fig:data_driven_control}, our control architecture successfully tracks piecewise constant references. 
\begin{figure}
    \centering
    \includegraphics[width=0.97\linewidth]{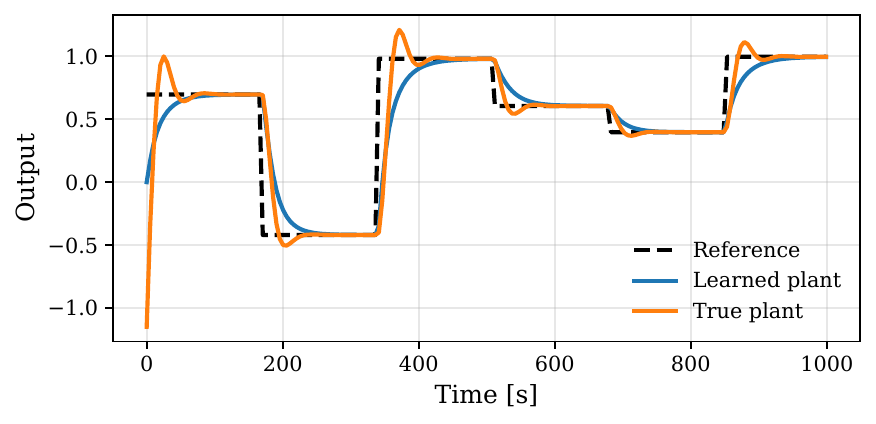}
    \caption{ For a two tank system modeled by an FRNN, we utilize our design mechanism for the full state feedback controller, the contracting observer and the integral gain to design a closed loop system capable of tracking references, validating the proposed theoretical results.}
    \label{fig:data_driven_control}
\end{figure}

\section{Applications in Machine Learning}\label{sec:machine_learning}
We now utilize S-contraction for Deep Equilibrium Models (DEQs)~\cite{SB-JZK-VK:19}. DEQs replace finite-depth architectures by defining their output as solution of an implicit equation, which in turn is solved by iterating a dynamical system. For these models to be well-posed, the equilibrium must be unique and globally asymptotically stable—properties natively guaranteed by a contracting continuous-time FRNN.

We first derive an unconstrained parameterization of weight matrices that satisfies our contractivity LMI by construction. We then utilize this parameterization to design an architecture where the weight and input matrices are input dependent. This allows the DEQ to model locally Lipschitz functions and improves parameter efficiency.

\subsection{Parameterization}
We first derive an unconstrained parameterization for synaptic matrices of S-contracting FRNNs.
\begin{thm}[Parameterization of weight matrices]\label{thm:parameterization}
    The following  statements are equivalent:
    \begin{enumerate}
        \item The synaptic matrix $W$ satisfies the continuous time MONE FRNN certificate~\eqref{eq:fr_cts_mone} for some $P \succ 0$, diagonal matrix $Q\succ 0$, and $c \in [0,1]$.
        \item $W$ can be written as
            \begin{align*}          
                W = 2\sqrt{1 {-} c} \; \diag{\e^d} S V^\top V  - \diag{\e^{2d}}(V^\top V )^2 
            \end{align*}
            where $S \in \R^{n\times n}$ satisfies $S^\top S \preceq I_n$, $V$ is a full rank matrix in $\R^{n\times n}$, and $d\in \R^n$. 
    \end{enumerate}
\end{thm}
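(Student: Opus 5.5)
The plan is to reduce the certificate~\eqref{eq:fr_cts_mone} to a single norm-bound constraint by a Schur complement, and then read off the parameterization by matching factors. Since $Q \succ 0$ is diagonal, the $(2,2)$ block $-2Q$ is negative definite. By the Schur complement (as already used in~\eqref{eq:schur_fr_cts_mone}), the certificate is therefore equivalent to
\begin{equation*}
(P + QW)^\top Q^{-1} (P + QW) \preceq 4(1-c) P .
\end{equation*}

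\emph{Step 1 (reformulation as a contraction).} For $c<1$, I will pre- and post-multiply by $P^{-1/2}$ and set
\begin{equation*}
S := \tfrac{1}{2\sqrt{1-c}}\, Q^{-1/2}(P+QW)P^{-1/2}.
\end{equation*}
The inequality above then becomes exactly $S^\top S \preceq I_n$. Equivalently, the certificate holds if and only if
\begin{equation*}
P + QW = 2\sqrt{1-c}\, Q^{1/2} S P^{1/2} \quad \text{for some } S \text{ with } S^\top S \preceq I_n .
\end{equation*}
For $c=1$ the Schur inequality forces $P+QW=0$, which is the same identity with the factor $2\sqrt{1-c}=0$ and $S$ arbitrary, so this case is consistent with the claimed form.

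\emph{Step 2 (solving for $W$ and matching variables).} Solving the identity of Step 1 gives
\begin{equation*}
W = 2\sqrt{1-c}\, Q^{-1/2} S P^{1/2} - Q^{-1}P .
\end{equation*}
For (i)$\Rightarrow$(ii), I choose $d$ with $\diag{\e^{d}} = Q^{-1/2}$, which is possible since $Q$ is positive diagonal. I then take $V := P^{1/4}$ (any $V$ with $V^\top V = P^{1/2}$ works). This $V$ is full rank, and it gives $V^\top V = P^{1/2}$ and $(V^\top V)^2 = P$. Substituting these choices yields precisely the stated formula.

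For (ii)$\Rightarrow$(i), I define $P := (V^\top V)^2 \succ 0$ (positive definite because $V$ is full rank) and $Q := \diag{\e^{-2d}} \succ 0$, diagonal. I then reverse the computation: $P + QW = 2\sqrt{1-c}\,Q^{1/2} S P^{1/2}$, and so $(P+QW)^\top Q^{-1}(P+QW) = 4(1-c) P^{1/2} S^\top S P^{1/2} \preceq 4(1-c)P$. By the Schur complement this is~\eqref{eq:fr_cts_mone}.

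The computations are routine. I expect the main point of care to be the exactness of the factorization, that is, making sure every admissible triple $(W,P,Q)$ is reached. This rests on three facts: $V^\top V$ ranges over all positive definite matrices as $V$ ranges over full-rank matrices, so $V^\top V = P^{1/2}$ covers every $P$; $\diag{\e^{d}}$ ranges over all positive diagonal matrices, so it covers every $Q$; and the congruence by $P^{-1/2}$ is invertible, so $S^\top S \preceq I_n$ is equivalent to the Schur inequality rather than merely sufficient for it. The degenerate endpoint $c=1$ must also be treated as described in Step 1.
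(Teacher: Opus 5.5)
Your proposal is correct and follows the same route as the paper: a Schur complement to $(P+QW)^\top Q^{-1}(P+QW)\preceq 4(1-c)P$, then a factorization $P+QW=2\sqrt{1-c}\,Q^{1/2}SP^{1/2}$ with $S^\top S\preceq I_n$, then solving for $W$ and reparameterizing with $Q^{-1/2}=\diag{\e^{d}}$ and $P^{1/2}=V^\top V$. The only differences are minor: you build $S$ explicitly from the invertible factor $P^{-1/2}$ and treat $c=1$ separately, where the paper cites the Douglas--Fillmore--Williams lemma. You also state explicitly that $V^\top V$ parameterizes $P^{1/2}$, not $P$, which is the correct reading of the paper's formula.
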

\begin{proof}
    Using the Schur complement, the inequality~\eqref{eq:fr_cts_mone} is equivalent to 
    \begin{align}\label{eq:begin_param}
        (P + W^\top Q)Q^{-1} (P + QW) \preceq 4(1-c)P.
    \end{align}
    From the Douglas-Fillmore-Williams Lemma~\cite[Theorem 8.6.2]{DSB:09}, the inequality~\eqref{eq:begin_param} is true if and only if there exists a matrix $S \in \R^{n\times n}$ satisfying $S^\top S \preceq I_n$ and
    \begin{align}
        Q^{-1/2} (P + QW) = 2\sqrt{(1-c)}\;SP^{1/2}.
    \end{align}
    Upon rearranging to solve for $W$, 
     \begin{align}
        W = 2\sqrt{1 {-} c} \; Q^{-1/2} S P^{1/2} - Q^{-1}P.
    \end{align}
    Now, the set of all valid $Q$ can be parameterized using $\diag{\e^{-2d}}$, and the
    set of all $P \succ 0$ can be parameterized as $V^\top V$ for full rank $V \in \R^{n\times n}$.
\end{proof}
\begin{remark}[Unconstrained Optimization Formulation]
In numerical implementations, the set of all matrices $S$ such that $S^\top S \preceq I_n$ may be parameterized via a free variable $X \in \R^{n\times n}$ by setting $S=X(I+X^{\top}X)^{-1/2}$. Similarly, to ensure $V^\top V$ does not lose rank, we may approximate $V^\top V$ using a free matrix $Y \in \R^{n \times n}$ and a small constant $\epsilon > 0$ by setting $P = Y^\top Y + \epsilon I_n$.
\end{remark}
Theorem~\ref{thm:parameterization} provides a constructive mechanism to design continuous-time FRNNs that are contracting by construction. By optimizing over the free variables $\{X, Y, d\}$, stability is preserved during learning.

\subsection{Applications to implicit neural networks} \label{sec:INN}
Recent literature~\cite{JL-LD-SO-WY:25} suggests that increasing the number of iterations of a DEQ improves its expressivity. This is due to the fact that one can model locally Lipschitz maps via fixed point equations. In order to canonically integrate this insight into a DEQ, we propose the following architecture.

\begin{pro}[Local Lipschitz continuity of the equilibrium map]
Let $\map{W}{\mcU}{\R^{n \times n}}$ and $\map{B}{\mcU}{\R^{n}}$ be globally Lipschitz with constants $\ell_W$ and $\ell_B$. Let $\Psi:\mathbb{R}^n\to\mathbb{R}^n$ be MONE. Define the map $\map{\xstar}{\mcU}{\R^n}$ by the implicit equation
\begin{align}
        \xstar(u) = \Psi(W(u)\xstar(u) + B(u)).
\end{align}
If $W(u) - I_n$ is Lyapunov diagonally stable uniformly in $u$, then the map $\xstar(u)$ is locally Lipschitz for each $u \in \mcU$.
\end{pro}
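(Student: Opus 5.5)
The plan is to obtain an incremental estimate for two equilibria $\xstar(u)$ and $\xstar(\tilde u)$ directly from the implicit equation. We use the MONE slope restriction, as in Lemma~\ref{lem:IMM_conditions}, together with the diagonal Lyapunov certificate of $W(u)-I_n$, in the same way Proposition~\ref{prop:LDS} extracts LDS from~\eqref{eq:fr_cts_mone}. Existence of $\xstar(u)$ is taken as given by the statement ("define the map"). Uniqueness will follow from the same estimate with $\tilde u=u$.

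\textbf{Step 1 (uniform LDS margin).} First we make the uniform LDS hypothesis quantitative. For each $u$ there is a diagonal $Q_u\succ 0$ with
\begin{align*}
Q_u(I_n-W(u))+(I_n-W(u))^\top Q_u\succeq 2\eta Q_u
\end{align*}
for some $\eta>0$. We also assume $Q_u$ ranges in a fixed compact set of positive diagonal matrices, so that all $\norm{\cdot}{Q_u}$ are equivalent to $\norm{\cdot}{2}$ with $u$-independent constants. Making precise what "uniformly" must mean here is, I expect, the main subtlety. If only pointwise LDS were assumed, the argument would still give local Lipschitzness, but with a constant depending on $Q_u$ and its LDS margin.

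\textbf{Step 2 (incremental identity).} Fix $u,\tilde u$ and write $x=\xstar(u)$, $\tilde x=\xstar(\tilde u)$, $\Delta x=x-\tilde x$. Since $\Psi$ is diagonal and slope-restricted in $[0,1]$, the mean value form gives a diagonal $\Gamma$ with $0\preceq\Gamma\preceq I_n$ such that $\Delta x=\Gamma s$. Here
\begin{align*}
s=W(u)\Delta x+e,\qquad e=(W(u)-W(\tilde u))\tilde x+B(u)-B(\tilde u).
\end{align*}
Taking the inner product of $s-W(u)\Gamma s=e$ with $Q_u\Gamma s$ gives
\begin{align*}
s^\top\Gamma Q_u s-(\Gamma s)^\top Q_uW(u)(\Gamma s)=(\Gamma s)^\top Q_u e .
\end{align*}
Because $Q_u$ and $\Gamma$ are diagonal with $0\preceq\Gamma\preceq I_n$, we have $s^\top\Gamma Q_u s\ge s^\top\Gamma Q_u\Gamma s=\norm{\Delta x}{Q_u}^2$. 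Hence $\Delta x^\top Q_u(I_n-W(u))\Delta x\le \Delta x^\top Q_u e$, and by Step~1 and Cauchy--Schwarz,
\begin{align*}
\eta\norm{\Delta x}{Q_u}\le \norm{e}{Q_u}.
\end{align*}
With $\tilde u=u$ this forces $e=0$ and so $\Delta x=0$, which gives uniqueness of the equilibrium.

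\textbf{Step 3 (local Lipschitz bound).} Next we bound $\norm{e}{Q_u}\le \kappa\bigl(\ell_W\norm{\tilde x}{2}+\ell_B\bigr)\norm{u-\tilde u}{\mcU}$, with $\kappa$ a norm-equivalence constant. The factor $\norm{\tilde x}{2}$ is what makes the result only local. To control it, write $\norm{\tilde x}{2}\le\norm{x}{2}+\norm{\Delta x}{2}$ and absorb:
\begin{align*}
\bigl(\eta-\kappa'\ell_W\norm{u-\tilde u}{\mcU}\bigr)\norm{\Delta x}{2}\le \kappa'\bigl(\ell_W\norm{\xstar(u)}{2}+\ell_B\bigr)\norm{u-\tilde u}{\mcU}.
\end{align*}
On the ball $\norm{u-\tilde u}{\mcU}\le \eta/(2\kappa'\ell_W)$ this yields a Lipschitz constant of order $2\kappa'(\ell_W\norm{\xstar(u)}{2}+\ell_B)/\eta$. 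This bound grows with $\norm{\xstar(u)}{2}$, which is exactly the local, input-dependent Lipschitz behavior the architecture is meant to exploit. Two equilibria near a point $u_0$ can be compared through $u_0$, giving a uniform constant on a neighborhood of $u_0$, so $\xstar$ is locally Lipschitz.
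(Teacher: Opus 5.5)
Steps 1--3 are essentially the paper's argument. The MONE multiplier gives $\|\Delta x\|_{Q}^2\le \Delta x^\top Q\,(W(u)\Delta x+e)$, the LDS inequality cancels the quadratic term, and Cauchy--Schwarz gives a two-point bound proportional to $\ell_W\|\xstar\|+\ell_B$. Your $e$ is literally the paper's $\varepsilon$.

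The only difference is where you localize. You center the estimate at the point $u$ that carries the certificate, so $e$ contains the unknown $\|\xstar(\tilde u)\|$. You must then absorb it on a ball of radius $\eta/(2\kappa'\ell_W)$. The paper instead sets the other point $u'=u_0$, so $\|\xstar(u_0)\|$ is known and neither absorption nor a radius restriction is needed. The price is that the certificate then sits at the moving point, which is where uniform LDS enters. Your explicit quantification of ``uniformly'' in Step~1 is a precision the paper leaves implicit.

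The last sentence, however, does not work as written. Suppose ``compared through $u_0$'' means the triangle inequality $\|\xstar(u_1)-\xstar(u_2)\|\le\|\xstar(u_1)-\xstar(u_0)\|+\|\xstar(u_0)-\xstar(u_2)\|$. Then you only get $L(\|u_1-u_0\|+\|u_2-u_0\|)$, which is not bounded by a multiple of $\|u_1-u_2\|$ when $u_1,u_2$ are close to each other but not to $u_0$.

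Use $u_0$ only to bound norms. Take $u_1,u_2$ within $\eta/(4\kappa'\ell_W)$ of $u_0$. Your Step~3 centered at $u_0$ bounds $\|\xstar(u_1)\|$ uniformly on that ball. Step~3 centered at $u_1$ is also legitimate, since $\eta$ and $\kappa'$ do not depend on the center, and it gives $\|\xstar(u_1)-\xstar(u_2)\|\le \frac{2\kappa'}{\eta}(\ell_W\|\xstar(u_1)\|+\ell_B)\|u_1-u_2\|$. The constant is then uniform on the ball, which mirrors the paper's final step.

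Your side remark that pointwise LDS would suffice is also correct. A fixed certificate $Q_{u_0}$ remains valid with margin $\eta_{u_0}/2$ for all $u$ sufficiently close to $u_0$, by Lipschitz continuity of $W$.
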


\begin{proof}
Let $u, u' \in \mcU$, and let $x := \xstar(u), x' := \xstar(u')$. Set
$\varepsilon := (W(u){-}W(u'))x' + B(u){-}B(u')$. 
Since $W - I_n$ is LDS, there must exist a positive diagonal $Q \succ 0$ and $\delta > 0$ such that $QW(u) + W(u)^\top Q \leq 2Q - 2\delta I_n$. Using the MONE incremental matrix multiplier, 
\begin{align}
    \norm{x - x'}{Q}^2 \leq (x{-}x')^\top Q(W(u)(x{-}x') + \varepsilon).
\end{align}
Applying the LDS bound to the first term on the right cancels $\norm{x - x'}{Q}^2$
from both sides, yielding,
\begin{align}
    \delta\norm{x - x'}{2}^2 \leq (x{-}x')^\top Q\,\varepsilon.
\end{align}
Through Cauchy Schwarz and global Lipschitz continuity of $W, B$ we obtain the Lipschitz bound,
\begin{align}
    \norm{\xstar(u) {-} \xstar(u')}{} \leq 
    \frac{\norm{D}{}}{\delta}\bigl(\ell_W\|\xstar(u')\| + \ell_B\bigr)\|u {-} u'\|.\label{eq:key}
\end{align}

Fixing $u_0 \in \mathcal{U}$ and $r>0$, applying \eqref{eq:key} with $u'=u_0$ yields a
uniform bound $\|\xstar(u')\| \leq R(u_0,r) < \infty$ for all $u' \in \mathcal{B}(u_0,r)$.
Substituting back into \eqref{eq:key} gives Lipschitz constant
$L(u_0,r) = \tfrac{\|D\|}{\delta}(\ell_W R(u_0,r)+\ell_B) < \infty$
on $\mathcal{B}(u_0,r)$.
\end{proof}

To make $W$ and $B$ input-dependent while maintaining the contractivity of the FRNN, we parameterize $W$ according to Theorem~\ref{thm:parameterization}, replacing each free variable and $B$ with an input-dependent feedforward neural network. Because these feedforward networks typically consist of compositions of linear transformations and MONE nonlinearities, they are globally Lipschitz. This allows us to design a highly expressive, locally Lipschitz implicit neural network that remains mathematically guaranteed to be well-posed and strictly contracting.  We validate this approach on the MNIST and CIFAR-10 image classification benchmarks. As shown in Table~\ref{tab:image_classification}, our model achieves competitive accuracy while remaining parameter efficient, due to a higher expressivity. Full implementation details and code are provided in our repository.\footnote{\url{https://github.com/AnandGokhale/Contractivity_Neural_Networks}}

\begin{table}[htbp]
    \centering
    \caption{Comparative Performance on Image Classification Benchmarks}
    \label{tab:image_classification}
    \begin{tabular}{lcc}
        \toprule
        \textbf{Method} & \textbf{Model size} & \textbf{Acc.} \\
        \midrule
        \multicolumn{3}{c}{\textbf{MNIST}} \\
        \midrule
        LBEN~\cite{MR-RW-IRM:20} & -- & 98.2\% \\
        monDEQ~\cite{EW-JZK:20}  & 84K & 99.1$\pm$0.1\% \\
        \textbf{Ours} & \textbf{89K} & \textbf{99.33\%} \\
        \midrule
        \multicolumn{3}{c}{\textbf{CIFAR-10}} \\
        \midrule
        LBEN~\cite{MR-RW-IRM:20} & -- & 71.6\% \\
        monDEQ~\cite{EW-JZK:20}  & 172K & 74.0$\pm$0.1\% \\
        monDEQ$^*$~\cite{EW-JZK:20}  & 854K & 82.0$\pm$0.3\% \\
        \textbf{Ours} & \textbf{134K} & \textbf{78.27\%} \\
        \textbf{Ours}$^*$ & \textbf{134K} & \textbf{82.30\%} \\
        \bottomrule
    \end{tabular}
    
    \vspace{2pt}
    \raggedright \footnotesize $^*$ indicates models trained with data augmentation.
\end{table}

\section{Conclusion}
In this paper, we established a contraction-based nonlinear separation principle and developed a comprehensive framework for the robust analysis, control design, and machine learning deployment of recurrent neural networks (RNNs). First, we introduced the separation principle alongside its parametric extensions for robustness and equilibrium tracking. Next, we derived sharp contraction certificates for FRNNs and HNNs, and investigated the fundamental properties of RNN interconnections, including an extension of our certificates to Graph RNNs. Building upon these theoretical foundations, we proposed a solution to the reference tracking problem for plants modeled by RNNs via LMI-based synthesis and low-gain integral control. Finally, we applied our certificates to implicit deep learning by deriving an exact, unconstrained algebraic parameterization. This parameterization enabled the design of parameter-efficient implicit neural networks that demonstrated improved expressivity and competitive benchmark accuracy.

Several promising directions for future work remain.
First, we utilize our parametric extension of the separation principle with a low gain controller, an 
interesting direction is to consider extensions to optimization based controllers~\cite{MC-ED-AB:20}. Second, extending the Graph RNN to undirected graphs would significantly broaden its applicability. Third, while our current analysis is restricted to standard RNNs, extending this contraction-based framework to more complex sequence-modeling architectures, such as LSTMs or transformers, is of major theoretical interest. Finally, deploying our robust contraction certificates in broader application domains, such as learning-to-optimize and large-scale network control, present an exciting avenue for future research.

\appendices
\section{Algebraic results}
The following result is described in~\cite[Section 2.6.2]{SB-LEG-EF-VB:94}
\begin{lemma}[Projection Lemma]\label{lem:projection_lemma}
Let $U \in \R^{m \times p}$ and $V \in \R^{n \times p}$, and let $Q = Q^\top \in \R^{p \times p}$. There exists a matrix $X \in \R^{m\times n}$ satisfying,
\begin{align}
    Q + U^\top XV + V^\top XU \prec 0 
\end{align}
if and only if
\begin{align}
    U_{\perp}^\top QU_{\perp} \prec 0 \quad \text{and} \quad V_{\perp}^\top QV_{\perp} \prec 0.
\end{align}
    
\end{lemma}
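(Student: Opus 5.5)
The necessity direction is immediate, and I would dispose of it first. If some $X$ satisfies $Q + U^\top X V + V^\top X U \prec 0$, congruence with $U_\perp$ (full column rank) preserves negative definiteness. Since $U U_\perp = 0$, both cross terms vanish and we get $U_\perp^\top Q U_\perp \prec 0$. The same argument with $V_\perp$ gives the second inequality. If a kernel is trivial, the corresponding condition is vacuous.

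For sufficiency, I would follow the classical elimination argument of Gahinet--Apkarian, organized through a change of basis of $\R^p$. First, reduce to $U$ and $V$ of full row rank: replacing $U$ by $R_U U$ with $R_U$ a basis of its row space does not change $\Ker U$, and $X$ can be reparameterized accordingly. The same holds for $V$. Degenerate cases such as $U=0$ or $V=0$ reduce to $Q\prec 0$, which is one of the hypotheses. Next, choose an invertible $T=[T_1\ T_2\ T_3\ T_4]$ in which:
\begin{itemize}
\item[] $T_1$ spans $\Ker U\cap\Ker V$;
\item[] $[T_1\ T_2]$ spans $\Ker U$;
\item[] $[T_1\ T_3]$ spans $\Ker V$;
\item[] $T_4$ completes the basis.
\end{itemize}
In these coordinates $UT=[0\ 0\ U_3\ U_4]$ and $VT=[0\ V_2\ 0\ V_4]$. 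Here $[U_3\ U_4]$ and $[V_2\ V_4]$ have full row rank, and $U_3$, $V_2$ are injective on their blocks. Since $X$ can be precomposed and postcomposed with invertible row-space maps, one can further normalize so that the term $(UT)^\top X (VT)$ contains a freely assignable block. That block couples the $T_3$-coordinates with the $T_2$-coordinates, and the $T_4$-coordinates with both.

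The main step, and the one I expect to be the obstacle, is choosing that free block to make the full congruent matrix negative definite. The plan is Schur complementation about the block indexed by $[T_1\ T_2]$, which is negative definite because it equals $U_\perp^\top Q U_\perp$. Eliminating it leaves a reduced matrix in the $T_3,T_4$ coordinates. The hypothesis on $V_\perp$ says that the $[T_1\ T_3]$ block is negative definite, and after Schur complementing over $T_1$ this gives a negative definite $T_3$ block. The remaining off-diagonal couplings between $T_3$ and $T_2$ are then free, so they can be set to cancel the couplings produced by the Schur complement. Likewise, the $T_4$ diagonal entry receives a contribution $X_{44}+X_{44}^\top$ that can be made arbitrarily negative, for example $X_{44}=-\sigma I$ with $\sigma$ large. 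Whenever a free block is not available, I would fall back on Finsler's lemma. The hypothesis $U_\perp^\top Q U_\perp\prec 0$ is equivalent to $Q-\sigma U^\top U\prec 0$ for large $\sigma$, so one can try $X=-\sigma\, U V^\top (VV^\top)^{-1}$-type choices restricted to the relevant subspaces. The bookkeeping of which blocks of $X$ are genuinely free after the normalization is the delicate part. Once it is settled, negative definiteness follows from a final Schur complement argument with $\sigma$ taken large.
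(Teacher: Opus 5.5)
The paper does not prove this lemma; it only quotes it from~\cite[Section 2.6.2]{SB-LEG-EF-VB:94}, where it is credited to Gahinet and Apkarian. Your sufficiency argument is essentially their classical basis-splitting proof, so your proposal is correct, and it is self-contained where the paper is not. Your necessity argument is also fine.

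The step you flag as delicate is actually immediate, and it needs no reduction to full row rank. The columns of $[T_3\ T_4]$ complement $\Ker U$, and those of $[T_2\ T_4]$ complement $\Ker V$. Hence $U[T_3\ T_4]$ and $V[T_2\ T_4]$ both have full column rank. So the map $X\mapsto (U[T_3\ T_4])^\top X\,V[T_2\ T_4]$ is onto, and the blocks of $T^\top U^\top X V T$ in positions $(3,2)$, $(3,4)$, $(4,2)$, $(4,4)$ are simultaneously free.

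Write $\hat Q=T^\top QT$ and make the following choices:
\begin{itemize}
\item[] the $(3,2)$ block equal to $\hat Q_{31}\hat Q_{11}^{-1}\hat Q_{12}-\hat Q_{32}$;
\item[] the $(3,4)$ and $(4,2)$ blocks equal to zero;
\item[] the $(4,4)$ block equal to $-\sigma I$.
\end{itemize}
With these choices, the Schur complement of $\hat Q_{11}$ in the resulting $\{1,2,3\}$ block is block diagonal. Its diagonal blocks are $\hat Q_{22}-\hat Q_{21}\hat Q_{11}^{-1}\hat Q_{12}\prec 0$ and $\hat Q_{33}-\hat Q_{31}\hat Q_{11}^{-1}\hat Q_{13}\prec 0$, and these are exactly your two hypotheses. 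A last Schur complement with $\sigma$ large then finishes the proof. If $\Ker U\cap\Ker V=\{0\}$, simply drop the $T_1$ terms.

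You should drop the Finsler fallback. It is never needed, and the suggested $X=-\sigma UV^\top(VV^\top)^{-1}$ would not work as stated. It contributes $-\sigma(U^\top U\Pi+\Pi U^\top U)$, where $\Pi$ is the orthogonal projector onto the row space of $V$. This matrix is indefinite in general, for example with $U=[1\ 0]$ and $V=[1\ 1]$, so letting $\sigma$ grow can destroy negativity rather than create it.
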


Next, we introduce an incremental matrix multiplier for this fixed point map from~\cite{LDA-MC:13}.
\begin{lemma}\label{lem:implicit_imm}
    Let the map $\map{\Psi}{\R^n}{\R^n}$ admit an incremental matrix multiplier $M$. Let the map $\map{\xstar}{\R^m}{\R^n}$ be well-posed and defined as the solution of the implicit equation,
    \begin{align}
        \xstar(u) = \Psi(W\xstar(u) + Bu)
    \end{align}
    where $W \in \R^{n \times n}, B \in \R^{n \times m}$. Then the map $\xstar(u)$ admits the incremental  multiplier matrix
    \begin{align}
        \begin{bmatrix}
            B & W \\ 0 & I_n 
        \end{bmatrix}^\top
        M
        \begin{bmatrix}
            B & W \\ 0 & I_n 
        \end{bmatrix}.
    \end{align}
\end{lemma}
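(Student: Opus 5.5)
The proof reduces to a direct substitution into the definition of an incremental multiplier; no further machinery is needed. I fix $u,\tilde u\in\R^m$ and write $x=\xstar(u)$ and $\tilde x=\xstar(\tilde u)$. Well-posedness guarantees that these points exist and satisfy $x=\Psi(Wx+Bu)$ and $\tilde x=\Psi(W\tilde x+B\tilde u)$.

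I then set $z=Wx+Bu$ and $\tilde z=W\tilde x+B\tilde u$. By the fixed point equations, $\Psi(z)-\Psi(\tilde z)=x-\tilde x=\xstar(u)-\xstar(\tilde u)$. Since $\Psi$ admits $M$, inequality~\eqref{eq:imm-d} applied to the pair $(z,\tilde z)$ gives
\begin{equation*}
\begin{bmatrix} z-\tilde z\\ x-\tilde x\end{bmatrix}^\top M \begin{bmatrix} z-\tilde z\\ x-\tilde x\end{bmatrix}\ge 0 .
\end{equation*}

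Next I express the stacked vector linearly in the increments of the map $\xstar$, namely $(u-\tilde u,\ \xstar(u)-\xstar(\tilde u))$:
\begin{equation*}
\begin{bmatrix} z-\tilde z\\ x-\tilde x\end{bmatrix}=\begin{bmatrix} B & W\\ 0 & I_n\end{bmatrix}\begin{bmatrix} u-\tilde u\\ \xstar(u)-\xstar(\tilde u)\end{bmatrix}.
\end{equation*}
Substituting this identity into the previous inequality yields exactly the quadratic form of the claimed matrix, evaluated at $(u-\tilde u,\ \xstar(u)-\xstar(\tilde u))$, which is nonnegative. This is precisely the statement that $\xstar$ admits the claimed incremental multiplier, in the sense of the definition with input $u\in\R^m$ and output in $\R^n$ (so the multiplier is of size $(m+n)\times(m+n)$ rather than square in a single dimension, which I read as the intended generalization).

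The only delicate point is bookkeeping: the definition is stated for square maps $\R^m\to\R^m$, so I will remark that the same quadratic-form definition is used for maps between spaces of different dimension. I will also note that the ordering of blocks ($u$ first, $\xstar$ second) matches the way the multiplier is used in~\eqref{eq:int_condition_step_1}. Beyond this there is no real obstacle.
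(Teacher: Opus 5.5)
Your argument is correct and complete. You apply the multiplier inequality for $\Psi$ to the pair $z=W\xstar(u)+Bu$, $\tilde z=W\xstar(\tilde u)+B\tilde u$, and rewrite the stacked increment as $\left[\begin{smallmatrix} B & W\\ 0 & I_n\end{smallmatrix}\right]$ applied to $(u-\tilde u,\ \xstar(u)-\xstar(\tilde u))$; this is exactly the standard verification. The paper gives no proof of its own and simply attributes the lemma to~\cite{LDA-MC:13}. Your direct substitution argument, including your remark that the definition is used for a non-square map $\R^m\to\R^n$ with blocks ordered as in~\eqref{eq:int_condition_step_1}, is essentially what that citation stands for.
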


\begin{lemma}[Mixed Product Property]\label{lem:mixed_product}
    Consider matrices $A \in \R^{n \times m}, B \in \R^{m \times p}, C \in \R^{k \times l}, D \in \R^{l \times q}$. $AB \kron CD = (A \kron C)(B \kron D)$.
\end{lemma}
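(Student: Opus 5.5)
We prove the identity by a direct block-matrix computation, using only the definition of the Kronecker product. First we check that the dimensions match. The matrix $A\kron C$ has size $nk\times ml$ and $B\kron D$ has size $ml\times pq$, so their product is defined and has size $nk\times pq$. This is also the size of $AB\kron CD$, since $AB\in\R^{n\times p}$ and $CD\in\R^{k\times q}$.

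Next we partition both factors conformally. By definition, $A\kron C$ is the $n\times m$ block matrix whose $(i,j)$ block is $a_{ij}C\in\R^{k\times l}$. Likewise, $B\kron D$ is the $m\times p$ block matrix whose $(j,s)$ block is $b_{js}D\in\R^{l\times q}$. The column partition of the first factor (blocks of width $l$) matches the row partition of the second (blocks of height $l$). Hence block multiplication applies, and the $(i,s)$ block of the product is
\begin{equation*}
\sum_{j=1}^{m} (a_{ij}C)(b_{js}D)=\Big(\sum_{j=1}^{m} a_{ij}b_{js}\Big)CD=(AB)_{is}\,CD .
\end{equation*}
The first equality holds because scalars commute with matrices.

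Finally, the right-hand side $(AB)_{is}\,CD$ is exactly the $(i,s)$ block of $AB\kron CD$ by the definition of the Kronecker product. Since this holds for all $i=1,\dots,n$ and $s=1,\dots,p$, the two matrices agree blockwise, and therefore $AB\kron CD=(A\kron C)(B\kron D)$.

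No step is a real obstacle. The only point needing care is confirming that the block partitions are conformal, which is why the dimension check is done first. In the proof of Theorem~\ref{thm:ignn_contraction}, the lemma is applied with one factor equal to an identity, for example $(I_n\kron P)(A^\top\kron W)=A^\top\kron PW$. That usage is an immediate special case.
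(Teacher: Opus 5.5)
Your proof is correct. The dimension check makes the block partitions conformal, and the $(i,s)$ block computation $\sum_{j}(a_{ij}C)(b_{js}D)=(AB)_{is}\,CD$ is the standard argument. The paper gives no proof of its own; it simply lists this identity in the appendix as a known algebraic fact, and you have supplied that textbook argument. One small note on your closing remark: in the graph FRNN proof, the products actually used are $(I_n\kron Q)(A^\top\kron W)=A^\top\kron QW$ and its mirror block, and the mirror block also needs the separate transpose rule $(A^\top\kron W)^\top=A\kron W^\top$.
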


\section{Proof for Lemma~\ref{lem:absolute_contractivity_lure}}\label{app:lure_proof}
\begin{proof}
  Given ${x_1,x_2\in\R^n}$, adopt the shorthands ${\Delta x =
  x_1-x_2\in\real^n}$, ${\Delta y = y_1-y_2 = H\Delta x\in\real^m}$, and ${\Delta
  \Psi = \Psi(y_1)-\Psi(y_2)\in\real^m}$. Since
  \begin{equation*}
    \begin{bmatrix} \Delta y \\ \Delta \Psi \end{bmatrix}
    =
    \begin{bmatrix} H \Delta x  \\ \Delta \Psi \end{bmatrix}
    =
    \begin{bmatrix} H & \0_{m\times{m}} \\ \0_{m\times{n}} & I_m   \end{bmatrix} 
    \begin{bmatrix} \Delta x \\ \Delta \Psi \end{bmatrix}, 
  \end{equation*}
  we rewrite the incremental multiplier matrix condition~\eqref{eq:imm-d}
  as
  \begin{align} \label{eq:imm-x}
    \begin{bmatrix} \Delta y \\ \Delta \Psi \end{bmatrix}^\top
    M 
    \begin{bmatrix} \Delta y \\ \Delta \Psi \end{bmatrix}
    \geq 0
    \iff
    \begin{bmatrix} \Delta x \\ \Delta \Psi \end{bmatrix}^\top M_H
    \begin{bmatrix} \Delta x \\ \Delta \Psi \end{bmatrix}
    \geq 0.
  \end{align}
The continuous time result is a special case of the result in~\cite[Theorem 4.2]{LDA-MC:13}. In the discrete time case, right and left multiplying~\eqref{eq:disc_lure_lmi} by $\begin{bmatrix} \Delta x^\top & \Delta \Psi^\top \end{bmatrix}^\top$ and its transpose respectively, and using~\eqref{eq:imm-x} we get,
\begin{align}
         \begin{bmatrix} \Delta x \\ \Delta \Psi \end{bmatrix}^\top  
            \begin{bmatrix}
        A^\top P A - \rho^2 P &  A^\top P B  \\
        B^\top P A &  B^\top P B
      \end{bmatrix}
      \begin{bmatrix} \Delta x \\ \Delta \Psi \end{bmatrix}
      \leq 0 \\
      \iff \norm{A\Delta x + B \Delta \Psi}{P}^2 \leq \rho^2 \norm{\Delta x}{P}^2.
    \end{align}  
This is precisely the contraction condition for discrete-time systems.
\end{proof}

\section{Proofs for Section~\ref{sec:control_design}}
\subsection{Proof of Proposition~\ref{prop:contracting_state_feedback}} \label{app:proof_feedback}

\begin{proof}
    Under the full state feedback control law $u = Kx$, the closed-loop dynamics are given by
    \begin{align}
    \dot{x} &= -x + \Psi(Wx + BKx) = -x + \Psi(W_{cl}x),
    \end{align}
    where $W_{cl} = W + BK$. For a given contraction rate $c \in (0,1]$, the contraction condition~\eqref{eq:fr_cts_mone} for $W_{cl}$ is
    \begin{equation}
        \label{eq:proof_lmi_initial_sf}
        \begin{bmatrix}
        -2(1 - c)P & P + (QW_{cl})^\top \\
        P + QW_{cl} & -2Q
        \end{bmatrix} \preceq 0,
    \end{equation}
    where $P \succ 0$ is a symmetric matrix and $Q \succ 0$ is a diagonal matrix. To convert this expression into an LMI, we pre- and post-multiply~\eqref{eq:proof_lmi_initial_sf} by the block diagonal matrix $\diag{P^{-1}, Q^{-1}}$. This yields the equivalent inequality:
    \begin{align}
        \begin{bmatrix}
        -2(1 - c)P^{-1} & Q^{-1} + P^{-1}W_{cl}^\top \\
        Q^{-1} + W_{cl}P^{-1} & -2Q^{-1}
        \end{bmatrix} \preceq 0.
    \end{align}
    Applying the change of variables $X = P^{-1}$ and $D = Q^{-1}$, and substituting $W_{cl} = W + BK$, the $(2,1)$ block becomes
    \begin{align}
        D + W_{cl}X &= D + WX + BKX.
    \end{align}
    Defining the variable $Y = KX$, the transformed inequality is rendered into an LMI in the variables $(X, D, Y)$, given by, 
    \begin{equation}
        \begin{bmatrix}
        -2(1 - c)X & D + XW^\top + Y^\top B^\top \\
        D + WX + BY & -2D
        \end{bmatrix} \preceq 0.
    \end{equation}
    The original $P$ and $Q$ matrices may be recovered via ${P = X^{-1}}$ and $Q = D^{-1}$, and the stabilizing state feedback gain is uniquely recovered via $K = YX^{-1}$.
\end{proof}
\subsection{Proof of Proposition~\ref{prop:contracting_observer}}
\label{app:proof_observer}

\begin{proof}
    We begin by noting that substituting $\hat{y} = C\hat{x}$ into the observer dynamics~\eqref{eq:observer_dynamics} yields an FRNN with synaptic matrix $W_{obs} = W - LC$. Writing~\eqref{eq:fr_cts_mone} for $W_{obs}$, substituting $M = QL$, we obtain, 
    \begin{equation}
        \begin{bmatrix}
        -2(1 - c)P & P + W^\top Q - C^\top M^\top \\
        P + QW - MC & -2Q
        \end{bmatrix} \preceq 0.
    \end{equation}  
    This is an LMI jointly in $P, Q, M$, and the gain matrix $L$ may be recovered by computing $Q^{-1}M$.
\end{proof}
\begin{arxiv}
\section{Extension to diagonally symmetrizable graphs}
\label{app:diagonal_symmetrizable_graphs}
\begin{thm}[Contraction of Graph FRNN]\label{thm:ignn_contraction}
 Consider the dynamics~\eqref{eq:cts_graph_implicit_dynamics_vectorized} with a MONE nonlinearity $\Psi$.  If
    \begin{enumerate}[label=\textup{(A\arabic*)},leftmargin=*]
        \item~\label{asP:adj_symmetry} The adjacency matrix $A$ can be decomposed as $A = HD^{-1} $, where $D \succ 0$ is diagonal and $H = H^\top$.
        \item~\label{asP:adj_spectrum} The eigenvalues of $A$ lie in $[0,1]$.
        \item~\label{asP:graph_W_condn} $W$ satisfies matrix inequality~\eqref{eq:fr_cts_mone} for $P \succ 0$, diagonal $Q \succ 0$, rate $c > 0$, which satisfy  $PQ^{-1}P \prec 4(1 - c)P$. 
    \end{enumerate}
    Then, the dynamics~\eqref{eq:cts_graph_implicit_dynamics_vectorized} are strongly infinitesimally contracting with rate $c$ in the norm $\norm{\cdot}{D \kron P}$.
\end{thm}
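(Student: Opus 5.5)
The plan is to reduce to the symmetric-graph case, Theorem~\ref{thm:ignn_contraction} in Section~\ref{sec:analysis_frnn_hnn}, by a diagonal similarity. Since~\eqref{eq:cts_graph_implicit_dynamics_vectorized} is an FRNN with synaptic matrix $A^\top\kron W$, Theorem~\ref{thm:main_fr_h_conditions} shows it suffices to verify the MONE certificate~\eqref{eq:fr_cts_mone} for $A^\top \kron W$. I would use the candidate multipliers $P' = D\kron P \succ 0$ and $Q' = D\kron Q$. The matrix $Q'$ is diagonal and positive definite because $D$ and $Q$ are.

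First, compute the off-diagonal block using Lemma~\ref{lem:mixed_product} and $A^\top = D^{-1}H$ (which uses $H=H^\top$):
\begin{equation*}
P' + Q'(A^\top\kron W) = D\kron P + (DD^{-1}H)\kron (QW) = D\kron P + H\kron QW .
\end{equation*}
The certificate thus becomes a block matrix with diagonal blocks $-2(1-c)D\kron P$ and $-2D\kron Q$, and off-diagonal block $D\kron P + H\kron QW$ (and its transpose).

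Next, apply the congruence $T=\diag{D^{-1/2}\kron I_m,\, D^{-1/2}\kron I_m}$. This turns the diagonal blocks into $-2(1-c)I_n\kron P$ and $-2I_n\kron Q$, and the off-diagonal block into $I_n\kron P + S\kron QW$, where $S = D^{-1/2}HD^{-1/2}$. The matrix $S$ is symmetric, and $D^{1/2}SD^{-1/2} = HD^{-1} = A$, so $S$ is similar to $A$. By assumption~\ref{asP:adj_spectrum}, its eigenvalues lie in $[0,1]$. The resulting inequality is then exactly~\eqref{eq:graph_step_1} with $A$ replaced by $S$.

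From here I would follow the earlier proof verbatim. Diagonalize $S=U\Lambda U^\top$ orthogonally and apply the congruence $\diag{U\kron I_m, U\kron I_m}$. Because $\Lambda$ is diagonal, a permutation makes the matrix block-diagonal, and it splits into $n$ independent $2m\times 2m$ conditions
\begin{equation*}
\begin{bmatrix} -2(1-c)P & P+\lambda_i W^\top Q\\ P+\lambda_i QW & -2Q\end{bmatrix}\preceq 0,\qquad \lambda_i\in[0,1].
\end{equation*}

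The step needing the most care is this final verification for all $\lambda_i\in[0,1]$. The matrix is affine in $\lambda_i$, so it is a convex combination of its values at $\lambda=0$ and $\lambda=1$, and the cone of negative semidefinite matrices is convex; it therefore suffices to check the two endpoints.
\begin{itemize}
\item At $\lambda=1$ the condition is exactly the hypothesis~\eqref{eq:fr_cts_mone} in~\ref{asP:graph_W_condn}.
\item At $\lambda=0$, the Schur complement with respect to the block $-2Q\prec0$ reduces it to $PQ^{-1}P\preceq 4(1-c)P$. This follows from the strict inequality assumed in~\ref{asP:graph_W_condn}.
\end{itemize}
Undoing the congruences, which preserve semidefiniteness, yields~\eqref{eq:fr_cts_mone} for $A^\top\kron W$ with $(P',Q')$. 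Theorem~\ref{thm:main_fr_h_conditions} then gives strong infinitesimal contraction with rate $c$ in $\norm{\cdot}{D\kron P}$.
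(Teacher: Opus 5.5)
Your proposal is correct and takes essentially the same route as the paper. You substitute $D\kron P$ and $D\kron Q$, apply the congruence by $D^{-1/2}\kron I_m$ to get the symmetric matrix $D^{-1/2}HD^{-1/2}$, which is similar to $A$, then diagonalize orthogonally, split into $n$ conditions affine in $\lambda_i$, and check the endpoints $\lambda=0,1$. Your bookkeeping is slightly cleaner than the paper's: you use the correct identity size $I_m$, note explicitly that $D\kron Q$ is diagonal, and state the $\lambda=0$ Schur condition as $PQ^{-1}P\preceq 4(1-c)P$, which the strict hypothesis implies.
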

\begin{proof}
    We show that the weight matrix $A^\top \kron W$ satisfies~\eqref{eq:fr_cts_mone} for weights $D \kron P$ and $D \kron Q$. Substituting these weights into~\eqref{eq:fr_cts_mone}, and using the mixed product property of Kronecker products Lemma~\ref{lem:mixed_product} yields
    \begin{align} \label{eq:graph_step_1}
        \begin{bmatrix}
        -2(1 - c)D \kron P & D \kron P + H \kron W^\top Q \\
        D \kron P + H^\top \kron QW & -2 D \kron Q
        \end{bmatrix} \preceq 0
    \end{align}
    Next, we apply a congruence transformation. Left and right multiplying~\eqref{eq:graph_step_1} by the non-singular block diagonal matrix $T_Q = \diag{D^{-1/2} \kron I_n, D^{-1/2} \kron I_n}$ yields,
    \begin{align} \label{eq:graph_step_2}
        \begin{bmatrix}
        -2(1 - c)I_n \kron P & *\\
        I_n \kron P + \bar{H}^\top \kron QW & -2 I_n\kron Q
        \end{bmatrix} \preceq 0,
    \end{align}
    where $\bar{H} = D^{-1/2}HD^{-1/2} $.
    Since $H$ is symmetric and $D$ is diagonal, $\bar{H}$ is also symmetric. Further, the eigenvalues of $A$ are identical to that of $\bar{H}$, as $D^{1/2}AD^{-1/2} = \bar{H}$. 

    Performing an SVD for $\bar{H}$ yields $\bar{H} = U \Lambda U^\top$, due to its symmetric nature. Further, each entry of $\Lambda$ is in $[0,1]$. We now apply the unitary transform, $T = \diag{U \kron I_n, U \kron I_n}$ to~\eqref{eq:graph_step_2}. Utilizing the fact that $U^\top U = UU^\top = I_m$,~\eqref{eq:graph_step_2} is equivalent to  
    \begin{align}\label{eq:graph_step_3}
        \begin{bmatrix}
        -2(1 - c)I_n \kron P & * \\
        I_n \kron P + \Lambda \kron QW & -2I_n \kron Q
        \end{bmatrix} \preceq 0.
    \end{align}
    Due to the diagonal nature of $I_n$ and $\Lambda$, this matrix may be permuted into a block diagonal form through the perfect shuffle permutation~\cite[Proposition 1]{DJR:80}. Therefore, the condition~\eqref{eq:graph_step_3} is decomposable into $n$ conditions, and the $i$th condition has the form,
    \begin{align}
        \begin{bmatrix}
        -2(1 - c)P &  P + \lambda_i W^\top Q \\
         P + \lambda_i QW & -2Q
        \end{bmatrix} \preceq 0,
    \end{align}
    where $\lambda_i$ is the $i$th diagonal element of the $\Lambda$. Since this condition is linear in $\lambda_i$, and $\lambda_i \in [0,1]$. this condition is true as long as it holds for $\lambda = 0$ and $\lambda = 1$. At $\lambda = 1$, this condition is identical to~\eqref{eq:fr_cts_mone}. At $\lambda = 0$, we obtain,
    \begin{align*}
        \begin{bmatrix}
        -2(1 - c)P &  P  \\
         P  & -2Q
        \end{bmatrix} \preceq 0 \iff PQ^{-1}P \prec 4(1 - c)P.
    \end{align*}
    The equivalence occurs due to the Schur complement along the $(2,2)$ block. Since $W$ satisfies~\eqref{eq:fr_cts_mone}, and $PQ^{-1}P \prec 4(1 - c)P$ as per~\ref{asP:graph_W_condn}, the proof holds.
\end{proof}

\begin{remark}[Assumptions on Graph Structure]
The decomposition $A = HD^{-1}$ restricts the network topology to diagonally symmetrizable graphs. This naturally encompasses undirected graphs when $D = I_n$. Furthermore, the condition on the eigenvalues~\ref{asP:adj_spectrum} 
can be performed by preprocessing the adjacency matrix $A = \frac{1}{2}(\frac{A_{orig}}{\norm{A_{orig}}{}} + I_n)$.
\end{remark}
\end{arxiv}

\bibliography{alias, Main, FB}

\begin{tac}
    
\begin{IEEEbiography}
	[{\includegraphics[width=1in,height=1.25in,clip,keepaspectratio]{coworkers/anand-gokhale.jpg}}]{Anand Gokhale} is a Ph.D. candidate in mechanical engineering at the University of California, Santa Barbara with the Mechanical Engineering Department and the Center for Control, Dynamical Systems, and Computation. He received the B.Tech and M.Tech degrees in Electrical Engineering from the Indian Institute of Technology Madras, India, in 2022. His research interests include contraction theory optimization, neural networks, human-robot teaming and agentic architectures.
\end{IEEEbiography}
\begin{IEEEbiography}[{\includegraphics[width=1in,height=1.25in,clip,keepaspectratio]{coworkers/anton-proskurnikov.png}}]{Anton V. Proskurnikov}(Senior Member, IEEE)
is an Associate Professor with the Department of Electronics and Telecommunications, Politecnico di Torino, Italy. He
received the M.Sc. and Ph.D. degrees in applied mathematics from St. Petersburg State University, St. Petersburg, Russia, in 2003 and 2005, respectively. 
He was previously with St. Petersburg State University (2003–2010), the Russian Academy of Sciences (2006–2022), the University of Groningen (2014–2016), and Delft University of Technology (2016–2018).  His research interests include complex
networks, nonlinear dynamics, and agent-based models in social
and natural sciences.
\end{IEEEbiography}

\begin{IEEEbiography}[{\includegraphics[width=1in,height=1.25in,clip,keepaspectratio]{coworkers/yu-kawano.jpg}}]{Yu Kawano} (M'13)  %[{\includegraphics[width=1in,height=1.25in,clip,keepaspectratio]{a1.jpg}}]
has been a Full Professor in the Graduate School of Advanced Science and Engineering at Hiroshima University since 2026. He received his M.S. and Ph.D. degrees in Engineering from Osaka University, Japan, in 2011 and 2013, respectively. He was a Postdoctoral Researcher at Kyoto University, Japan, from 2013 to 2016, and at the University of Groningen, the Netherlands, from 2016 to 2019. He was an Associate Professor at Hiroshima University, from 2019 to 2026. He has also held visiting research positions at Tallinn University of Technology (Estonia), the University of Groningen (the Netherlands), the University of Pavia (Italy), the Indian Institute of Technology Bombay (India), and the University of California, Santa Barbara (USA). His research interests include nonlinear systems, complex networks, model reduction, and privacy in control systems. He serves as an Associate Editor for Systems \& Control Letters, IEEE Transactions on Systems, Man, and Cybernetics: Systems, IEEE CSS Conference Editorial Board, and EUCA Conference Editorial Board.
\end{IEEEbiography}

\begin{IEEEbiography}
	[{\includegraphics[width=1in,height=1.25in,clip,keepaspectratio]{coworkers/francesco-bullo.jpg}}]{Francesco Bullo}
	(Fellow, IEEE) is a Distinguished Professor of Mechanical Engineering with
the University of California, Santa Barbara, CA, USA. He was previously
with the University of Padova (Laurea degree, 1994), Italy, the California
Institute of Technology (Ph.D. degree, 1998), Pasadena, CA, and the
University of Illinois at Urbana-Champaign, IL, USA. His research interests
include contraction theory, network systems, and distributed control.  He
is the author or coauthor of Geometric Control of Mechanical Systems
(Springer, 2004), Distributed Control of Robotic Networks (Princeton,
2009), Lectures on Network Systems (KDP, v1.7, 2024), and Contraction Theory for
Dynamical Systems (KDP, v1.2, 2024).  He served as IEEE CSS President and
SIAG CST Chair.  He is a Fellow of ASME, IFAC, and SIAM.

\end{IEEEbiography}

\end{tac}

\end{document}